\def\doccolumn{2}
\newcommand{\mathleft}{\@fleqntrue\@mathmargin0pt}
\newcommand{\mathcenter}{\@fleqnfalse}
\DeclareRobustCommand\bfseriesitshape{%
  \not@math@alphabet\itshapebfseries\relax
  \fontseries\bfdefault
  \fontshape\itdefault
  \selectfont
}
\DeclareTextFontCommand{\textbfit}{\bfseriesitshape}
\xpatchcmd\proof{\@addpunct{.}}{\@addpunct{:}}{}{}
\xpatchcmd\proof{\itshape}{\itshape}{}{}
\xpatchcmd\proof{\hskip}{\hskip3}{}{}
\newtheoremstyle{newthmstyle}{\topsep}{\topsep}{}{\parindent}{\itshape}{:}{ }{}
\theoremstyle{newthmstyle}
\newtheorem{thm}{Theorem}
\newtheorem{define}[thm]{Definition}
\newtheorem{lem}[thm]{Lemma}
\newtheorem{cor}[thm]{Corollary}
\begin{document}

\title{Rate Matching for Polar Codes \\ Based on Binary Domination}

\author{Min~Jang,~\IEEEmembership{Member,~IEEE},
        Seok-Ki~Ahn,
        Hongsil~Jeong,
        Kyung-Joong~Kim,
        Seho~Myung,
        Sang-Hyo~Kim,~\IEEEmembership{Member,~IEEE},
        and Kyeongcheol~Yang,~\IEEEmembership{Senior Member,~IEEE}
\\
\if\doccolumn1
Submitted October 22, 2018
\vspace{-1.8cm}
\fi
\thanks{This research was supported in part by the National Research Foundation (NRF) of Korea
under grant funded by the Ministry of Science and Information \& Communication Technology (MSIT) of the Korea Government
(2016R1A2A1A05005023)
and in part by the Institute for Information \& communications
Technology Promotion (IITP) under grant funded by the MSIT of the Korea government
(2018(2016-0-00123)).}
\thanks{M. Jang, H. Jeong, K.-J. Kim, and S.~Myung are with Samsung Electronics, Suwon, Gyeonggi 16677, South Korea (e-mail: \{mn.jang, hongsil.jeong, kj1981.kim, seho.myung\}@samsung.com).}
\thanks{S.-K. Ahn is with Electronics and Telecommunications Research Institute (ETRI), Daejeon 34129, South Korea (e-mail: seokki.ahn@etri.re.kr).}
\thanks{S.-H. Kim is with the College of Information and Communication Engineering, Sungkyunkwan University, Suwon, Gyeonggi 16419, South Korea (e-mail: iamshkim@skku.edu).}
\thanks{K. Yang is with the Dept. of Electrical Engineering, Pohang University of Science and Technology (POSTECH), Pohang, Gyeongbuk 37673, South Korea (e-mail: kcyang@postech.ac.kr).}
}


\maketitle

\thispagestyle{empty}
\begin{abstract}
In this paper, we investigate the fundamentals of puncturing and shortening for polar codes, based on  \textit{binary domination} which plays a key role in polar code construction.
We first prove that the orders of encoder input bits to be made incapable (by puncturing) or to be shortened are governed by binary domination.
In particular, we show that binary domination completely determines incapable or shortened bit patterns for polar codes,
and that all the possible incapable or shortened bit patterns can be identified.
We then present the patterns of the corresponding encoder output bits to be punctured or fixed, when the incapable or shortened bits are given.
We also demonstrate that the order and the pattern of puncturing and shortening for polar codes can be aligned.
In the previous work on the rate matching for polar codes,
puncturing of encoder output bits begins from a low-indexed bit, while shortening starts from a high-indexed bit.
Unlike such a conventional approach, we show that encoder output bits can be punctured from high-indexed bits, while keeping the incapable bit pattern exactly the same.
This makes it possible to design a unified circular-buffer rate matching (CB-RM) scheme that includes puncturing, shortening, and repetition.
\end{abstract}

\if\doccolumn1
\vspace{-0.6cm}
\fi
\begin{IEEEkeywords}
Polar codes, code modification, rate matching, binary domination, puncturing, shortening.
\end{IEEEkeywords}

\IEEEpeerreviewmaketitle

\section{Introduction} \label{sec:introduction}

Polar codes, proposed by Ar{\i}kan in \cite{Arikan2009}, are a class of error-correcting codes first proved to achieve the symmetric capacity of an arbitrary binary-input discrete memoryless channel (B-DMC) under low-complexity decoding.
It was also shown in \cite{Tal2015,Niu2012} that polar codes achieve practically good finite-length performance under successive-cancellation list (SCL) decoding when they are concatenated with an outer code.
For this reason, the 3rd Generation Partnership Project (3GPP) recently agreed to adopt polar codes for control information in the 5G New Radio (NR) access technology \cite{TS38212}.

In the construction of a polar code, the $2\times 2$ polarization kernel matrix is generally considered,
and the length is fixed to a power of two.
In order to make the length arbitrary for practical applications, rate-matching schemes such as puncturing, shortening, and repetition are applied to the encoder output.
They change the effective bitwise channels that the encoder output bits experience,
and thereby alter overall channel polarization.
Therefore, the code construction and rate matching for polar codes need to be simultaneously taken into account to achieve good rate-compatible performance.

Numerous rate-matching schemes for polar codes have been widely studied in the literature \cite{Eslami2011,Niu2013a,Shin2013,Zhang2014,Kim2015,Saber2015,Chandesris2017,Hong2017,Hong2018,El-Khamy2018,Wang2014,Miloslavskaya2015,Bioglio2017}.
Among them, puncturing and shortening have been commonly studied.
First, puncturing is a classical modification method that reduces the length of a code, while maintaining its dimension.
In polar coding, puncturing coded bits changes overall channel polarization considerably, so that some polarized split channels become incapable of delivering information.
Practical puncturing schemes such as random and stopping-tree puncturing for polar codes were firstly studied in \cite{Eslami2011}.
A quasi-uniform puncturing scheme was proposed in \cite{Niu2013a} to design punctured polar codes with a good minimum row-weight property,
while a method to construct length-compatible polar codes by reducing polarizing matrices was studied in \cite{Shin2013}.
Especially in \cite{Shin2013}, an important term, \textit{incapable bits} resulting from puncturing, was first introduced.
Search algorithms by density evolution or Gaussian approximation were introduced in \cite{Zhang2014,Kim2015} to design an optimal puncturing pattern.
An interesting rate-matching scheme via puncturing and extending intermediate coded bits was investigated for incremental-redundancy hybrid auotmatic-repeat-and-request (IR-HARQ) in \cite{Saber2015}.
In \cite{Chandesris2017}, a class of symmetric puncturing patterns is introduced, and a method to efficiently generate symmetric puncturing patterns is proposed.
More recently, Hong \textit{et al.} \cite{Hong2017,Hong2018} showed the existence of capacity-achieving punctured polar codes,
and El-Khamy \textit{et al.} \cite{El-Khamy2018} designed a circular-buffer rate matching (CB-RM) based on two-stage polarization.

Shortening is another approach to modifying polar codes for rate matching.
In shortened polar codes, the values of some encoder output bits are fixed to a deterministic value, typically zero, by shortening some encoder input bits.
Although these fixed encoder output bits are not transmitted,
the decoder is aware of their values and is able to use the information for decoding.
A shortened polar code can be seen as a subcode of a given mother code.
Wang and Liu \cite{Wang2014} proposed a general way of constructing shortened submatrices by recursively eliminating a single-weight column in the mother matrix.
Recently, an efficient algorithm for finding good shortening bit patterns was proposed in \cite{Miloslavskaya2015},
and a rate-matching scheme designed to combine both puncturing and shortening was investigated in \cite{Bioglio2017}.

As a practical application, the polar coding scheme in 3GPP NR \cite{TS38212} exploits a single nested rate-matching pattern based on a subblock-wise permutation, which is commonly applied to puncturing, shortening, and repetition.
The encoder output bit sequence is divided into 32 subblocks,
and they are subblock-wise interleaved in a predetermined pattern, regardless of the rate-matching technique employed.
The interleaved bit sequence is then stored in a circular buffer, and as many bits as desired are selected from the buffer for transmission.
Note that the starting point for extracting the desired bits from the buffer is set differently, depending on the applied rate-matching scheme.
When either shortening or repetition is configured, the bits from the front of the buffer are selected for transmission.
On the other hand, when puncturing is set, the starting point is determined so that the bits located at the head of the buffer are not transmitted.
This CB-RM in conjunction with an additional adjustment, so-called \textit{prefreezing} \cite{TS38212}, shows stable performance over a wide range of code rates.

In this paper, we study the fundamentals of puncturing and shortening for binary polar codes.
We show that \textit{binary domination}, formally introduced by Sarkis \textit{et al.} in \cite{Sarkis2016}, plays an important role in determining puncturing and shortening bit patterns in polar coding.
The binary domination relation gives a partial order between two integers, based on their binary representation.
Some properties of binary polar codes constructed from the $2\times 2$ polarization kernel $\left[\begin{smallmatrix} 1 & 0 \\ 1 & 1 \end{smallmatrix}\right]$ can be easily analyzed by the binary representation of the synthetic channel indices.
In fact, the binary domination relation was found to determine a partial order on the reliabilities of the polarized split channels \cite{Mori2009,Schuerch2016}.
Also, it provides guidance on the order of the bits to be punctured or to be shortened.
We first prove that the encoder input bits to be made incapable by puncturing are determined with the partial order by binary domination.
Then, we show that there are puncturing bit patterns identical to or `reverse' to a given incapable bit pattern.
Also, in the shortened polar codes introduced in \cite{Wang2014}, we show that the shortening bit patterns in the encoder input obey the partial order by binary domination.

It has been believed in the literature \cite{Eslami2011,Niu2013a,Shin2013,Zhang2014,Kim2015,Saber2015,Chandesris2017,Hong2017,Hong2018,El-Khamy2018,Wang2014,Miloslavskaya2015,Bioglio2017} that shortening of polar codes inevitably starts from high-indexed bits,
whereas puncturing begins from low-indexed bits.
Even in the CB-RM in 3GPP NR \cite{TS38212}, puncturing is performed from the head of the buffer, whereas shortening is done from the tail of the buffer.
This means that the buffer for puncturing needs to be differently managed from that for shortening.
Unlike such a conventional approach, we show that puncturing does not need to begin from an encoder output bit with low index, but it may start from an encoder output bit with high index.
In addition, we show that puncturing and shortening bit patterns can be aligned by binary domination at the encoder output, so puncturing can be performed in exactly the same order as shortening.
Finally, we propose a practical CB-RM scheme that exploits a unified bit pattern and a unified buffer to support all the rate-matching methods including puncturing, shortening, and repetition.
This scheme can be simply and efficiently implemented for a practical rate-compatible polar coding chain.

The contributions of this work are summarized as follows:
\begin{itemize}
    \item \textit{Fundamentals of punctured polar codes:}
        We identify a necessary and sufficient condition for an encoder input bit to be made incapable by puncturing in Theorem \ref{thm:incapable02}.
        We show that binary domination completely determines an incapable bit pattern.
        We then find all puncturing bit patterns that result in the same incapable bit pattern in Theorem~\ref{thm:eq_punct}.
        Theorems~\ref{thm:id_punct} and \ref{thm:rev_punct} identify two important patterns -- identical and reverse patterns, respectively, among these puncturing bit patterns.
        In particular, the reverse puncturing bit pattern is obtained by bitwise complement of the desired incapable bit pattern.
        This enables us to begin puncturing from a high-indexed encoder output bit, in constrast to the conventional rate-matching schemes for polar codes.
    \item \textit{Fundamentals of shortened polar codes:} We prove in Theorem~\ref{thm:fixed} that any fixed bit pattern at the encoder output, resulting from shortening, is also constrained by binary domination.
        In Corollary~\ref{cor:short}, we verify that the corresponding shortening bit pattern in the encoder input is identical to the fixed bit pattern.
        We also identify a necessary and sufficient condition for an encoder output bit to be fixed by shortening in Theorem~\ref{thm:short}.
    \item \textit{Unified rate matching for polar codes:} We propose a unified circular-buffer rate matching scheme to align the puncturing bit pattern and the fixed bit pattern by shortening via binary domination.
        The proposed scheme reduces the implementation complexity of the rate-matching scheme in 3GPP NR and makes the coded modulation chain much simpler, while keeping the same good performance.
\end{itemize}

The rest of this paper is organized as follows.
Section~\ref{sec:preliminary} describes the preliminaries to polar coding and rate matching.
Sections~\ref{sec:punct} and \ref{sec:short}, respectively, reveal that both puncturing and shortening bit patterns for a polar code are related to binary domination.
Section~\ref{sec:unified} designs a unified rate-matching scheme to support puncturing, shortening, and repetition.
Finally, Section~\ref{sec:conclude} describes future works, and concludes the paper.

\section{Preliminaries} \label{sec:preliminary}

\subsection{Notation}
Throughout the paper, we write calligraphic letters (\textit{e.g.} $\mathcal{A}$) to denote sets.
Conventionally, we use $\mathcal{A}^\mathsf{c}$ to denote the complementary set of $\mathcal{A}$.
Given $\mathcal{A}$ and $\mathcal{B}$, we write $\mathcal{A}\backslash \mathcal{B}$ to denote the relative complement of $\mathcal{A}$ with respect to $\mathcal{B}$.
Given an integer set $\mathcal{A}$ and an integer $b$, we write $b+\mathcal{A}$ to denote $\left\{ b+a \mid a\in\mathcal{A} \right\}$.
For example, $2 + \{0,1\} = \{2,3\}$.
Let $\mathbb{N}$, $\mathbb{Z}$, and $\mathbb{R}$ be the set of natural numbers, integers, and real numbers, respectively.
For a positive integer $m$, we write $\mathbb{Z}_{m}$ to denote the set of integers from 0 to $m-1$, that is, $\mathbb{Z}_m = \{0,1,\ldots,m-1\}$.
Given two integers $i<j$, we use $[i:j]$ to denote the set of consecutive integers from $i$ to $j$, \textit{i.e.}, $[i:j]=\{i,i+1,\ldots,j\}$.

We use boldface lowercase letters (\textit{e.g.} $\mathbf{a}$) and boldface uppercase letters (\textit{e.g.} $\mathbf{A}$) to denote vectors and matrices, respectively.
Given $\mathbf{a} =(a_0,\ldots,a_{N-1})$ and $\mathcal{A} \subseteq \mathbb{Z}_N$,
we write $\mathbf{a}_{\mathcal{A}}$ to denote the subvector $(a_i:i\in\mathcal{A})$, where the order of the entries in $\mathbf{a}_{\mathcal{A}}$ is the same as that in $\mathbf{a}$.

We write italic boldface uppercase letters (\textit{e.g.} $\textbfit{A}$) to denote sequences.
Given a sequence $\textbfit{A}=(a_0,\ldots,a_{N-1})$, we write $\mathcal{A}\leftarrow\textbfit{A}$ to represent that the set $\mathcal{A}$ is constituted by taking the elements of $\textbfit{A}$, that is, $\mathcal{A}=\{a_0,\ldots,a_{N-1}\}$.
Here, the order of elements no longer matters in $\mathcal{A}$.

\begin{figure*}[t]
	\centering
    \if\doccolumn1
    \vspace{-0.3cm}
    \fi
	\includegraphics[width=\textwidth]{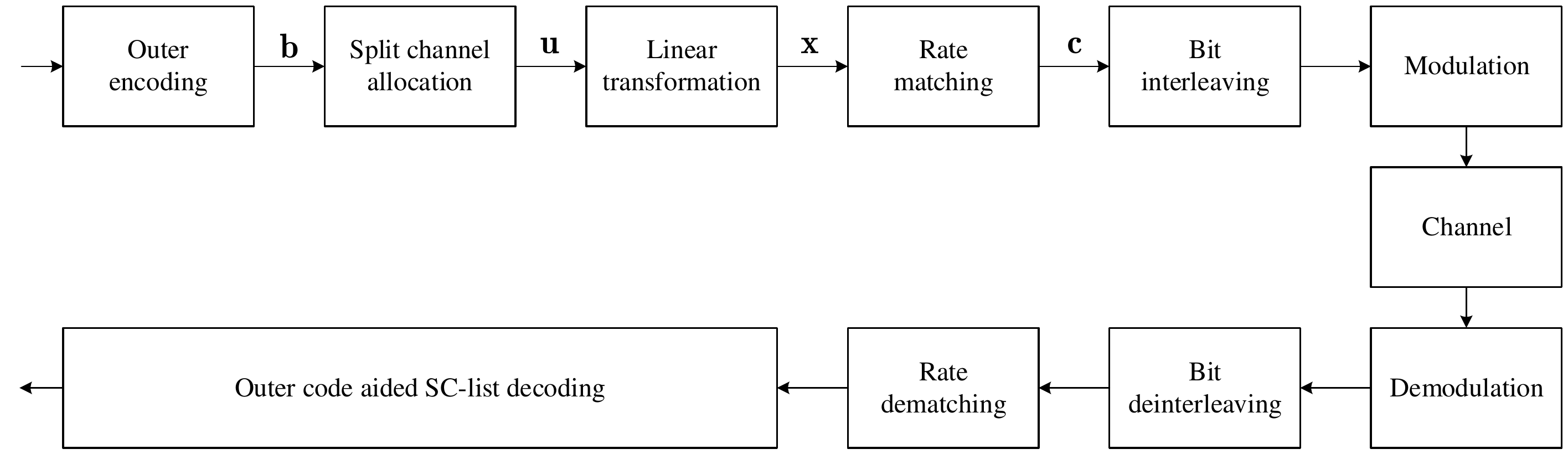}
	\caption{Block diagram of polar coded bit-interleaved coded modulation (BICM) system}
	\label{fig:bicm}
    \if\doccolumn1
    \vspace{-0.5cm}
    \fi
\end{figure*}

\subsection{Encoding of Polar Codes} \label{sec:encoding}

Fig.~\ref{fig:bicm} shows a general bit-interleaved coded modulation (BICM) system with an $(M, K)$ polar code,
where $M$ and $K$ denote the length and the dimension (or the information length), respectively.
The polar code is obtained via puncturing, shortening, and repetition from a mother code of length $N$ and has rate $R=K/M$.

Let $\mathbf{b}\in\mathbb{F}_2^K$ denote the vector of information bits, where $\mathbb{F}_2$ is the binary field.
Generally, $\mathbf{b}$ is a codeword of an outer code such as cyclic redundancy check (CRC) codes \cite{Tal2015,Niu2012}, extended BCH codes \cite{Trifonov2013}, and parity-check codes \cite{Wang2016}.
The outer code increases the minimum distance of the resultant concatenated code, thereby improving the decoding performance when near maximum-likelihood (ML) decoding such as SCL decoding \cite{Tal2015} and SC-stack (SCS) decoding \cite{Niu2012a} is employed.

We assume that $M$ and $K$ are given first, and the other code parameters including $N$ are determined according to $M$ and $K$.
We consider a polar code constructed by the $2 \times 2$ binary polarization kernel $\left[\begin{smallmatrix} 1 & 0 \\ 1 & 1 \end{smallmatrix}\right]$ as a mother code,
so $N=2^n$ for some integer $n \geq 1$.
Usually, $N$ is chosen to be the smallest power of two greater than or equal to $M$, that is, $N=2^{\lceil \log_2 M \rceil}$.
However, if $M$ is rather close to half of $2^{\lceil  \log_2 M \rceil}$,
a polar code of length $2^{\lceil  \log_2 M \rceil - 1}$ would be a good option in terms of performance and complexity.
For example, the NR polar coding scheme chooses
$N = 2^{\lceil  \log_2 M \rceil - 1}$ if $M \leq \frac{9}{8} \times 2^{\lceil \log_2 M \rceil - 1}$ and $R < \frac{9}{16}$; and $N = 2^{\lceil  \log_2 M \rceil}$, otherwise \cite{TS38212}.

Given $N$, the information vector $\mathbf{b}$ is mapped to a subvector of an encoder input vector $\mathbf{u}\in\mathbb{F}_2^{N}$.
This bit mapping is carefully done by considering the quality of each polarized synthetic channel and is called the \textit{split channel allocation}.
The vector $\mathbf{u}$ is divided into three disjoint subvectors $\mathbf{u}_{\mathcal{I}}$, $\mathbf{u}_{\mathcal{F}}$, $\mathbf{u}_{\mathcal{Z}}$,
where $\mathcal{I},\mathcal{F},\mathcal{Z}\subset \mathbb{Z}_N$ are the index sets of information bits, frozen bits, and zero-capacity bits, respectively.
First, $\mathcal{Z}$ is determined by which bits are to be punctured or to be shortened in the rate matcher.
This procedure will be introduced in detail in the next subsection.
Among $\mathbb{Z}_{N}\backslash \mathcal{Z}$,
the indices corresponding to the $K$ most reliable split channels constitute $\mathcal{I}$,
while the remaining indices comprise $\mathcal{F}$.
In the split channel allocation, $\mathbf{b}$ is mapped to $\mathbf{u}_\mathcal{I}$, and both $\mathbf{u}_\mathcal{F}$ and $\mathbf{u}_\mathcal{Z}$ are generally set to the zero vector.

As an example of the split channel allocation, a single sequence of indices, $\textbfit{Q}_{1024}=(q_0,\ldots,q_{1023})$, is used for any combination of $M$ and $K$ in the NR polar coding scheme \cite{TS38212}.
We call this sequence the \textit{NR polar code sequence}.
Given $N\leq1024$, a sub-sequence $\textbfit{Q}_N=(q_i:q_i \in\textbfit{Q}_{1024},q_i<N)$ is extracted from $\textbfit{Q}_{1024}$ while keeping the relative order of elements.
Next, the index set of $J$ zero-capacity bits, $\mathcal{Z}$, is determined by $J$ punctured or shortened bits in the rate matcher.
The complementary index set is then given by $\textbfit{Q}_{N-J}=(q_i:q_i\in\textbfit{Q}_N,q_i \notin \mathcal{Z})$, where the relative order of elements still remains unchanged.
Finally, $\mathcal{I}\leftarrow(\textbfit{Q}_{N-J})_0^{K-1}$ and $\mathcal{F}\leftarrow(\textbfit{Q}_{N-J})_K^{N-J-1}$.

After the split channel allocation, an encoder output vector $\mathbf{x}$ is obtained by the linear transformation, $\mathbf{x} = \mathbf{u} \mathbf{G}_{N}$,
where $\mathbf{G}_{N} \in \mathbb{F}_2^{N \times N}$ is the generator matrix of a polar code of length $N$.
In Ar{\i}kan's original polar coding scheme \cite{Arikan2009}, the linear transformation $\mathbf{B}_{N} \mathbf{F}_2^{\otimes n}$ is considered as $\mathbf{G}_N$,
where $\mathbf{F}_2^{\otimes n}$ denotes the $n$-th Kronecker power of $\mathbf{F}_2=\left[\begin{smallmatrix} 1 & 0 \\ 1 & 1 \end{smallmatrix}\right]$,
and $\mathbf{B}_{N}\in \mathbb{F}_2^{N \times N}$ is the $N$-dimensional bit-reversal permutation matrix.
Since $\mathbf{B}_{N}$ just reorders either $\mathbf{u}$ or $\mathbf{x}$,
we consider
\begin{equation}\label{eq:polar_enc}
    \mathbf{x}=\mathbf{u}\mathbf{F}_2^{\otimes n},
\end{equation}
for simple description throughout the paper, as in the NR polar coding system \cite{TS38212}.

A codeword $\mathbf{c}\in \mathbb{F}_2^M$ is finally obtained from $\mathbf{x}$ via rate matching.
In the rate matcher, $N-M$ bits are punctured from $\mathbf{x}$ if $M < N$,
while $M-N$ bits are additionally selected from $\mathbf{x}$ and appended to $\mathbf{x}$ to generate $\mathbf{c}$ if $M > N$.
In wireless communication systems, a CB-RM scheme with bit interleaving is generally adopted for simple implementation
so that encoder output bits are always circularly extracted in order from a buffer.
In the CB-RM, the bit rearranging order is unified for all the rate-matching operations such as puncturing, shortening, and repetition.
For example, NR exploits a CB-RM scheme with subblock-wise permutation \cite{TS38212}.
It achieves stable performance over all the lengths and rates of interest.
The bits in $\mathbf{c}$ are generally interleaved to improve the performance of BICM,
and finally, they are modulated and transmitted over a physical channel.

An important feature of polar coding is that the split channel allocation and the rate matching are closely related to each other.
More specifically, the overall channel polarization is affected by which and how many bits are punctured or repeated,
and some split channels even become incapable of delivering any information due to puncturing.


\subsection{Rate Matching} \label{sec:rate_matching}

Polar codes based on the $2\times 2$ polarization kernel $\mathbf{F}_2=\left[\begin{smallmatrix} 1 & 0 \\ 1 & 1 \end{smallmatrix}\right]$ have a power of two as their length.
Thus, rate matching is applied to modify the length and adjust the rate.
There are three major types of rate-matching techniques for polar codes: puncturing, shortening, and repetition.
These rate-matching schemes alter overall channel polarization.
In particular, when puncturing or shortening occurs,
some encoder input bits are not available to deliver information since their symmetric capacity becomes zero.
Fig.~\ref{fig:rm} briefly describes a subsequent consequence that appears on the opposite side, due to puncturing or shortening a polar code.

\begin{figure}[t]
	\centering
    \if\doccolumn1
    \vspace{-0.3cm}
    \fi
    \includegraphics[width=8.85cm]{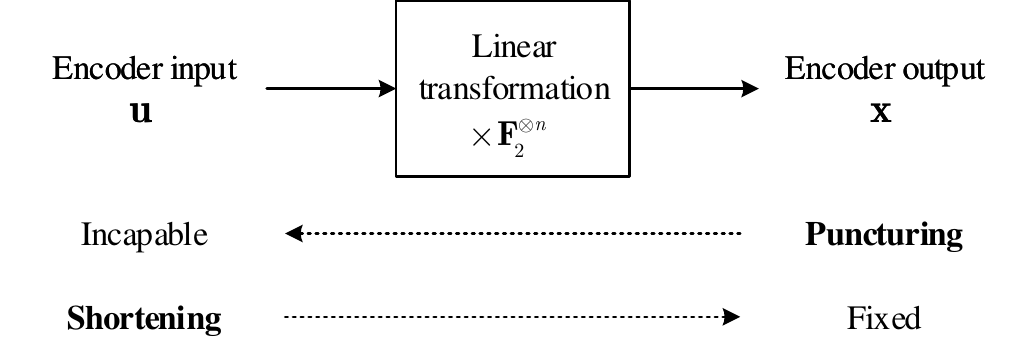}
	\caption{Puncturing and shortening of polar codes. Puncturing some encoder output bits results in corresponding incapable bits at the encoder input. Shortening some encoder input bits, if carefully chosen, makes the corresponding encoder output bits have a deterministic value.
Incapable and shortened bits are not available to transmit information, and they belong to a class of zero-capacity bits.}
	\label{fig:rm}
    \if\doccolumn1
    \vspace{-0.5cm}
    \fi
\end{figure}

\subsubsection{Puncturing}

Puncturing removes $J$ bits from $\mathbf{x}$ to make a codeword of length $M=N-J$,
and these punctured bits are not transmitted.
The decoder on the receiver side does not have any stochastic information about them,
so their initial log-likelihood ratios (LLRs) are set to zero.
The impact of puncturing on channel polarization is more significant, compared with other rate matching schemes.
In particular, when coded bits are punctured, the symmetric capacities of some split channels are degraded to zero.
The encoder input bits in $\mathbf{u}$ that correspond to the zero-capacity synthetic channels due to puncturing are called the \textit{incapable bits} \cite{Shin2013}.
The values of these incapable bits are typically set to zero.
Let $\mathcal{X}_p \subset \mathbb{Z}_{N}$ be the index set of punctured bits in $\mathbf{x}$,
and let $\mathcal{U}_p \subset \mathbb{Z}_{N}$ denote the index set of the corresponding incapable bits in $\mathbf{u}$.
It was shown in \cite{Shin2013} that any puncturing bit pattern with $J$ punctured bits in $\mathbf{x}$ leads to exactly the same number of incapable bits in $\mathbf{u}$,
and hence, $|\mathcal{U}_p|=|\mathcal{X}_p|$.

\subsubsection{Shortening}
In \cite{Wang2014}, a shortening technique for polar codes was introduced.
Due to the lower-triangular form of $\mathbf{F}_2^{\otimes n}$,
if the encoder carefully chooses $J$ bits in $\mathbf{u}$ and fixes their values to zero,
then there are exactly $J$ bits in $\mathbf{x}$ whose values are zero.
These $J$ bits in $\mathbf{x}$ are not transmitted,
but the decoder on the receiver side is aware that their values are zero.
Thus, the LLR values corresponding to these fixed bits with shortening are initially set to infinity before decoding.
The change in channel polarization due to shortening is totally different from that due to puncturing.
It was empirically observed that when the code rate is high, shortening leads to better performance than puncturing.
The bits in $\mathbf{u}$ whose values are set to zero are called the \textit{shortened bits},
and the corresponding bits in $\mathbf{x}$ whose values are fixed to zero by shortened bits are called the \textit{fixed bits}.
The shortened bits also belong to a class of zero-capacity bits because they do not convey any information.
Let $\mathcal{U}_s \subset \mathbb{Z}_{N}$ be the index set of shortened bits in $\mathbf{u}$,
and let $\mathcal{X}_s \subset \mathbb{Z}_{N}$ be the index set of the corresponding fixed bits with deterministic values in $\mathbf{x}$.
It was shown in \cite{Wang2014} that a greedy selection method with the weight-one column criterion makes $\mathcal{U}_s$ and $\mathcal{X}_s$ have the same cardinality, that is, $|\mathcal{X}_s|=|\mathcal{U}_s|$.
While $\mathcal{X}_p$ determines $\mathcal{U}_p$ in puncturing,
$\mathcal{U}_s$ identifies $\mathcal{X}_s$ in shortening.

\subsubsection{Repetition}

If $M$ is slightly larger than $2^{\lceil \log_2 M \rceil -1}$,
then it would be a good option to exploit a polar code of length $N=2^{\lceil \log_2 M \rceil -1}<M$ and repetition.
Compared with the case of choosing a polar code of length $2^{\lceil \log_2 M \rceil}$,
the performance degradation due to excessive puncturing can be avoided;
and as well, the encoding and decoding complexity can be reduced to about half.
At the transmitter, $M-N$ bits in $\mathbf{c}$ are generated by repeating some bits in $\mathbf{x}$;
and before decoding at the receiver, the LLRs of the repeated bits are combined with those of the corresponding original bits.
LLR combining improves the effective bitwise channels corresponding to the repeated bits,
so the choice of a repetition pattern affects the decoding performance.

\subsection{Binary Domination}

For an integer $i\in\mathbb{Z}_{2^n}$, let $\langle i \rangle _2 \triangleq (i_{n-1}i_{n-2}\ldots i_0)$ denote the binary representation of $i = \sum_{t=0}^{n-1}i_t 2^t$ where $i_t\in\{0,1\}$ for $t\in[0:n-1]$.
Let $d_\mathrm{H}(i)$ be the Hamming weight of $\langle i \rangle _2$, that is, $d_\mathrm{H}(i)=\sum_{t=0}^{n-1} i_t$.
We write $\bar{i}$ to denote the bitwise complement of $i$, which is obtained by inverting $i_t$ for all $t\in[0:n-1]$.
Clearly, $\bar{i}=(2^n-1)-i$.

\begin{define}[\textit{Binary domination \cite{Sarkis2016}}]\label{def:binary_domination}
    For any two integers $i,j\in\mathbb{Z}_{2^n}$, we say that $j$ dominates $i$ or $i$ is dominated by $j$, denoted by $i \preceq j$ (or $j \succeq i$), if $i_t \leq j_t$ for all $t\in[0:n-1]$.
    Furthermore, we say that $j$ strictly dominates $i$ or $i$ is strictly dominated by $j$, denoted by $i \prec j$ (or $j \succ i$), if $i \preceq j$ but $i \neq j$.
    The relations $\preceq$ and $\prec$ are called the binary domination relation and the strict binary domination relation, respectively.
\end{define}

The binary domination relations give a partial order on $\mathbb{Z}_{2^n}$.
For example, in $\mathbb{Z}_{2^4}$, we have $ 5 \prec 13$
since $\langle 5 \rangle_2 = (0101)$ and $\langle 13 \rangle_2 = (1101)$.
However, $7$ and $8$ are not comparable because $\langle 7 \rangle_2 = (0111)$ and $\langle 8 \rangle_2 = (1000)$.
Note that the binary domination relation $\preceq$ is reflexive, antisymmetric, and transitive,
while the strict binary domination relation $\prec$ is irreflexive, asymmetric, and transitive \cite{Djeraba2008}.
In addition, it is clear that $\bar{j} \prec \bar{i}$ if $i\prec j$.

Mori and Tanaka observed a channel-independent phenomenon that $i\prec j$ guarantees that the reliability of $u_j$ is usually better and never worse than that of $u_i$ in channel polarization \cite{Mori2009}.
That is, the binary domination relation defines a partial order on the reliabilities of the polarized split channels, regardless of the channel statistics.
This ordering is found without any complicated analysis such as density evolution \cite{Mori2009,Schuerch2016}.
Well-designed sequences defining polar codes should adhere to the partial order by binary domination.
The binary domination relation was also used in the efficient implementation of systematic polar codes \cite{Sarkis2016}.


Based on binary domination, we define a dominated integer set and a dominating integer set as follows.

\begin{define} [\textit{Dominated integer set}] \label{def:dominated_set}
The dominated integer set of $i \in \mathbb{Z}_{2^n}$ is defined as
$\mathcal{D}_i \triangleq \{k\in\mathbb{Z}_{2^n}\mid k \preceq i \}$.
The strictly dominated integer set of $i \in \mathbb{Z}_{2^n}$ is defined as
$\hat{\mathcal{D}}_i \triangleq \{k\in\mathbb{Z}_{2^n}\mid k \prec i \}$.
\end{define}

\begin{define} [\textit{Dominating integer set}] \label{def:dominating_set}
The dominating integer set of $i\in\mathbb{Z}_{2^n}$ is defined as
$\mathcal{G}_i \triangleq \{k\in\mathbb{Z}_{2^n}\mid k \succeq i \}$.
The strictly dominating integer set of $i\in\mathbb{Z}_{2^n}$ is defined as
$\hat{\mathcal{G}}_i \triangleq \{k\in\mathbb{Z}_{2^n}\mid k \succ i \}$.
\end{define}

Clearly,
$\mathcal{D}_i = \hat{\mathcal{D}}_i \cup \{i\}$,
$\hat{\mathcal{D}}_i = \mathcal{D}_i \backslash \{i\}$,
$\mathcal{G}_i = \hat{\mathcal{G}}_i \cup \{i\}$, and
$\hat{\mathcal{G}}_i = \mathcal{G} \backslash \{i\}$ by definitions.
The partial order by binary domination plays a decisive role in determining puncturing and shortening bit patterns for polar codes, as will be shown in the next sections.
For a clear presentation,
we define a partially-ordered sequence on $\mathbb{Z}_{2^n}$, called a $2^n$-\textit{posequence} for short, as a sequence whose entries follow the partial order by binary domination.

\begin{define}[\textit{$2^n$-posequence}]\label{def:posequence}
A sequence $(p_0,\ldots,p_{2^n-1})$ is called a $2^n$-posequence if it is a permutation on $\mathbb{Z}_{2^n}$ such that there is no pair of $i$ and $j$ in $\mathbb{Z}_{2^n}$ with $p_i \succ p_j$,
that is, either $p_i\prec p_j$, or $p_i$ and $p_j$ are not comparable for any $i<j$.
\end{define}

For example, for $n=2$, $(0,2,1,3)$ is a $2^2$-posequence whereas $(0,1,3,2)$ is not since $3 \nprec 2$.
Within any $2^n$-posequence, all the integers dominated by $j$ appear ahead of $j$,
whereas all the integers that dominate $j$ are behind $j$.

\begin{define}\label{def:comply_with_bd}
A set $\mathcal{A}\subset \mathbb{Z}_{2^n}$ is said to comply with binary domination if
either $\mathcal{D}_j\subset\mathcal{A}$ for all $j\in\mathcal{A}$
or $\mathcal{G}_j\subset\mathcal{A}$ for all $j\in\mathcal{A}$.
\end{define}

\begin{define}\label{def:rm_pattern}
A set $\mathcal{A}\subset\mathbb{Z}_{2^n}$ is called a puncturing (incapable, shortening, and fixed, respectively) bit pattern if the bits with indices in $\mathcal{A}$ are punctured (incapable, shortened, and fixed, respectively), while the other bits remain unchanged.
\end{define}

\section{Puncturing and Incapable Bit Patterns}\label{sec:punct}

The relation between puncturing bit patterns and their corresponding incapable ones may be well understood by exploring the operation of SC decoding.
To reveal this, this section first reviews SC decoding.
Readers may refer to \cite{Arikan2009,Sasoglu2012,Niu2014,Tal2015} for better comprehension of SC and SCL decoding.
Based on the SC decoding operation, the possible incapable bit patterns and their corresponding puncturing ones are then investigated.

\subsection{Successive-Cancellation Decoding}

SC decoding for a polar code can be regarded as belief-propagation (BP) over the bipartite graph corresponding to the generator matrix $\mathbf{G}_N=\mathbf{F}_2^{\otimes n}$ in \eqref{eq:polar_enc}.
Fig.~\ref{fig:dec} depicts an example of the decoding of a polar code of length $N=8$ over the corresponding bipartite graph.
The graph consists of variable nodes (circles), check nodes (squares), and edges.
A variable node corresponds to a single bit, while a check node represents a linear constraint that the binary sum of the values of all neighbor variable nodes is equal to zero.

Let $\mathds{G}_{2^n}$ denote the graph corresponding to the polar code of length $N=2^n$.
The graph is divided into $n + 1$ stages indexed from $0$ (leftmost) to $n$ (rightmost).
Let $v_i^{(t)}$ and $c_j^{(t)}$ denote the $i$-th variable node and the $j$-th check node at stage $t$, respectively.
Then, the graph $\mathds{G}_{2^n}$ consists of
the sets of variable nodes $\mathcal{V}_{\mathbb{Z}_{2^n}}^{(t)}$ for $t=[0:n]$,
the sets of check nodes $\mathcal{C}_{\mathbb{Z}_{2^n}}^{(l)}$ for $l=[0:n-1]$,
and the edges connecting these nodes,
where $\mathcal{V}_{\mathbb{Z}_{2^n}}^{(t)}=\left\{v_i^{(t)}\mid i\in\mathbb{Z}_{2^n}\right\}$
and $\mathcal{C}_{\mathbb{Z}_{2^n}}^{(l)}=\left\{c_j^{(l)}\mid j\in\mathbb{Z}_{2^n}\right\}$
denote the set of $2^n$ variable nodes at stage $t$ and the set of $2^n$ check nodes at stage $l$, respectively.
In particular, $\mathcal{V}_{\mathbb{Z}_{2^n}}^{(0)}$ and $\mathcal{V}_{\mathbb{Z}_{2^n}}^{(n)}$ correspond to an encoder input vector $\mathbf{u}$  and an encoder output vector $\mathbf{x}$, respectively.

For $t\in[0:n-1]$, variable nodes in $\mathcal{V}_{\mathbb{Z}_{2^{n}}}^{(t)}$ and $\mathcal{V}_{\mathbb{Z}_{2^{n}}}^{(t+1)}$ are connected to check nodes in $\mathcal{C}_{\mathbb{Z}_{2^{n}}}^{(t)}$,
and the connections are determined by the nonzero entries of $\mathbf{F}_2^{\otimes n}$.
Due to the recursive construction of $\mathbf{F}_2^{\otimes n}$, the connection of the graph can be characterized by a simple rule with respect to the binary representation of the corresponding check node index.
Recall that $i_t$ denotes the $t$-th component of the binary representation $\langle i \rangle_2 = (i_{n-1}i_{n-2}\cdots i_0)$ of $i\in\mathbb{Z}_{2^n}$.
Note that $c_i^{(t)}$ with $i_t=0$ is connected to three variable nodes $v_i^{(t)}$, $v_{i+2^t}^{(t)}$, and $v_i^{(t+1)}$,
while $c_i^{(t)}$ with $i_t=1$ is connected to two variable nodes $v_i^{(t)}$ and $v_i^{(t+1)}$.

\begin{figure}[t]
	\centering
    \if\doccolumn1
    \vspace{-0.3cm}
    \fi
    \includegraphics[width=8.85cm]{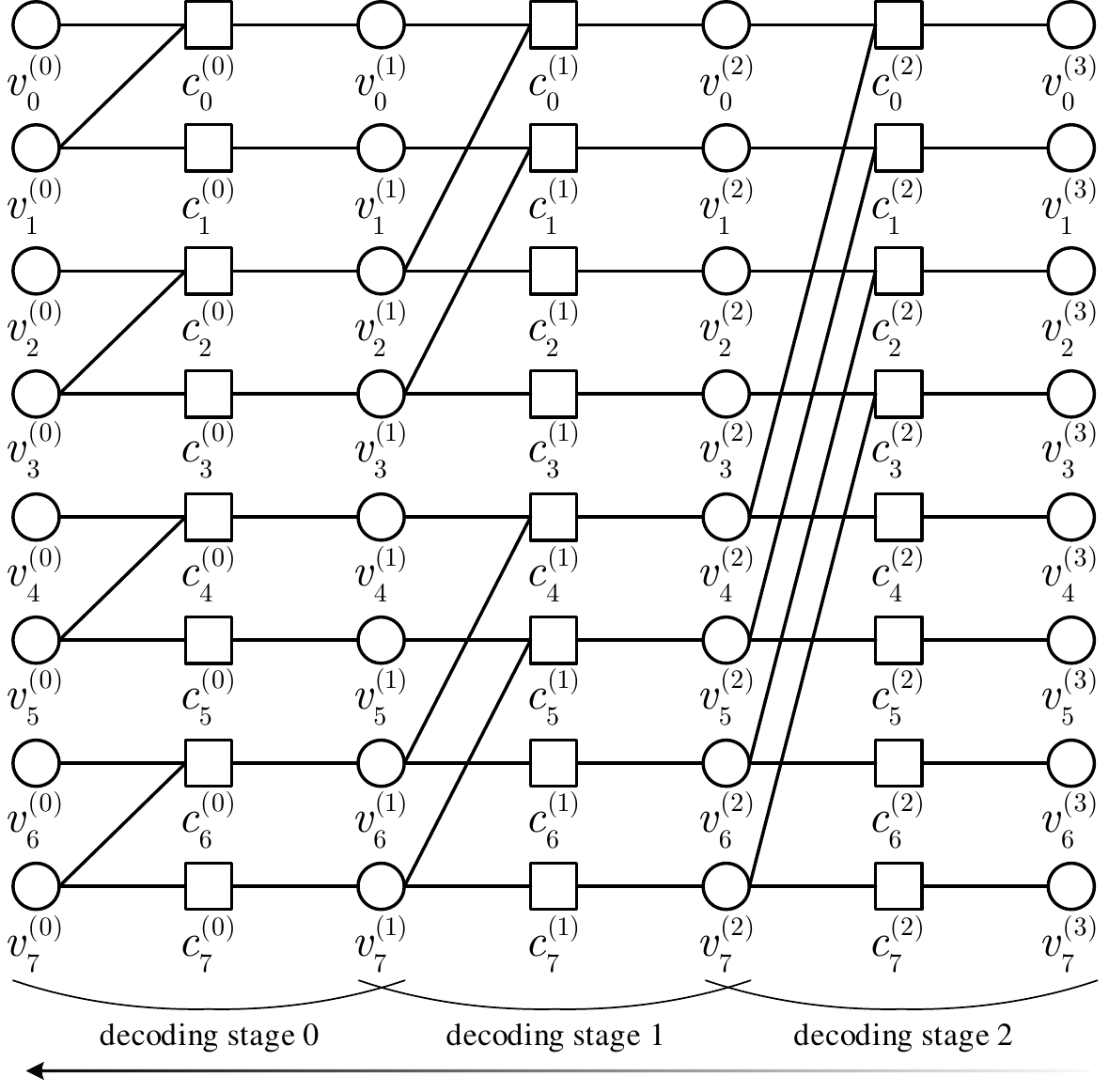}
    \if\doccolumn1
    \vspace{-0.5cm}
    \fi
	\caption{Decoding graph $\mathds{G}_8$ based on $\mathbf{F}_2^{\otimes 3}$ of a polar code of $N=8$.
    }
	\label{fig:dec}
    \if\doccolumn1
    \vspace{-0.5cm}
    \fi
\end{figure}

In SC decoding over the graph, an LLR $\alpha_i^{(t)}\in\mathbb{R}$ and a hard-decision value $\beta_i^{(t)}\in\mathbb{F}_2$ are calculated for each $v_i^{(t)}$.
The LLR values are calculated from right to left in the graph,
while the hard-decision values based on the currently estimated encoding input bits are passed from left to right.
We focus only on the update of LLR values.
Let $\ell_i$ denote the intrinsic LLR for $x_i$, which is calculated from the corresponding received symbol.
While $\ell_i=0$ if $x_i$ is punctured, $\ell_i=\infty$ if $x_i$ is fixed by shortening.
The intrinsic LLRs are fed into the variable nodes at stage $n$ such that $\alpha_i^{(n)} = \ell_i$ for all $i\in \mathbb{Z}_{2^n}$,
and then, the LLRs $\alpha_i^{(t)}$ are calculated from stage $t=n-1$ to stage $t=0$.
In each LLR calculation, one of two functions is adaptively used depending on the variable node index.
At stage $t$, $\alpha_i^{(t)}$ with $i_{t}=0$ is updated by the function $f:\mathbb{R}\times \mathbb{R}\rightarrow \mathbb{R}$ as
\begin{equation}\label{eq:f_func}
\begin{split}
    \alpha_i^{(t)}
    &= f\left( \alpha_i^{(t+1)}, \alpha_{i+2^{t}}^{(t+1)} \right) \\
    &\triangleq 2 \tanh^{-1} \left( \tanh\left(\tfrac{1}{2} \alpha_i^{(t+1)}\right) \tanh\left(\tfrac{1}{2} \alpha_{i+2^{t}}^{(t+1)}\right) \right).
\end{split}
\end{equation}
Note that $\alpha_i^{(t)}=0$ in \eqref{eq:f_func} if any one of $\alpha_i^{(t+1)}$ and $\alpha_{i+2^t}^{(t+1)}$ is equal to zero.
On the other hand, $\alpha_i^{(t)}$ with $i_{t}=1$ is updated by the function $g:\mathbb{R}\times \mathbb{R} \times \{0,1\} \rightarrow \mathbb{R}$ as
\begin{equation}\label{eq:g_func}
\begin{split}
    \alpha_i^{(t)} &= g\left( \alpha_{i-2^t}^{(t+1)}, \alpha_{i}^{(t+1)}, \beta_{i-2^{t}}^{(t)} \right) \\
    &\triangleq \left(1-2\beta_{i-2^{t}}^{(t)}\right)\alpha_{i-2^{t}}^{(t+1)} + \alpha_i^{(t+1)},
\end{split}
\end{equation}
where $\beta_{i-2^t}^{(t)}$ is the hard-decision value of $v_{i-2^t}^{(t)}$, previously estimated by the successive cancellation procedure.
Clearly, $\alpha_i^{(t)}=0$ in \eqref{eq:g_func} only if both $\alpha_i^{(t+1)}$ and $\alpha_{i-2^t}^{(t+1)}$ are zero.

\subsection{Incapable Bit Patterns}

An encoder input bit is said to be incapable if its corresponding LLR becomes zero when SC decoding is employed for a punctured polar code.
Throughout this section, consider puncturing a polar code of length $N=2^n$.
For $\mathcal{A}\subset\mathbb{Z}_{2^n}$ and $t\in[0:n]$,
we write $\mathcal{P}_{\mathcal{A}}^{(t)}$ to denote the index set of variable nodes with LLR zero at stage $t$ restricted to $\mathcal{A}$, that is, $\mathcal{P}_{\mathcal{A}}^{(t)} \triangleq \left\{ k\in\mathcal{A} \mid \alpha_{k}^{(t)}=0 \right\}$.
In particular, $\mathcal{P}_{\mathbb{Z}_{2^n}}^{(0)}=\mathcal{U}_p$ and $\mathcal{P}_{\mathbb{Z}_{2^n}}^{(n)}=\mathcal{X}_p$, where $\mathcal{U}_p$ and $\mathcal{X}_p$ denote the index set of incapable bits and the index set of punctured bits, respectively.
An interesting fact shown in \cite{Shin2013} is that the number of incapable split channels is exactly the same as the number of punctured bits, \textit{i.e.}, $|\mathcal{P}_{\mathbb{Z}_{2^n}}^{(0)}|=|\mathcal{P}_{\mathbb{Z}_{2^n}}^{(n)}|$, regardless of the puncturing bit pattern.
In the following theorem, we show that $\mathcal{U}_p$ complies with binary domination.

\begin{thm}\label{thm:incapable01}
    For any $j\in\mathbb{Z}_{2^n}$, if $j\in\mathcal{U}_p$, then $\mathcal{D}_j \subset \mathcal{U}_p$.
    In other words, $\mathcal{U}_p$ complies with binary domination.
\end{thm}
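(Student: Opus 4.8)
The plan is to prove the statement by a downward induction on the successive-cancellation decoding stage $t$, carrying a strengthened hypothesis. The reason a bare induction is not enough is that $\mathcal{P}_{\mathbb{Z}_{2^n}}^{(n)}=\mathcal{X}_p$ is an arbitrary subset of $\mathbb{Z}_{2^n}$ and need not comply with binary domination; the domination structure of $\mathcal{U}_p=\mathcal{P}_{\mathbb{Z}_{2^n}}^{(0)}$ must therefore be built up gradually as $t$ decreases from $n$ to $0$, one processed bit at a time. Concretely, for $t\in[0:n]$ I would establish the claim $P(t)$: \emph{if $i\in\mathcal{P}_{\mathbb{Z}_{2^n}}^{(t)}$ and $k\in\mathbb{Z}_{2^n}$ satisfies $k\preceq i$ and $k_s=i_s$ for all $s\in[0:t-1]$, then $k\in\mathcal{P}_{\mathbb{Z}_{2^n}}^{(t)}$.} In words, at stage $t$ one is only allowed to flip $1$'s to $0$'s in the bit positions $[t:n-1]$, i.e., exactly those positions already processed in descending from stage $n$ to stage $t$. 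The base case $P(n)$ is vacuous, since the agreement condition over all positions $[0:n-1]$ forces $k=i$; and $P(0)$, whose agreement condition is empty, is precisely the assertion $\mathcal{D}_j\subset\mathcal{U}_p$ for every $j\in\mathcal{U}_p$, i.e., that $\mathcal{U}_p$ complies with binary domination.

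For the inductive step $P(t+1)\Rightarrow P(t)$, I would use the zero-propagation behavior of the updates in \eqref{eq:f_func} and \eqref{eq:g_func}: a node $v_i^{(t)}$ with $i_t=0$ has $\alpha_i^{(t)}=0$ if and only if $\alpha_i^{(t+1)}=0$ or $\alpha_{i+2^t}^{(t+1)}=0$ (immediate from the closed form of $f$), whereas a node with $i_t=1$ has $\alpha_i^{(t)}=0$ only if both $\alpha_i^{(t+1)}=0$ and $\alpha_{i-2^t}^{(t+1)}=0$ (as noted after \eqref{eq:g_func}), and of course $\alpha_i^{(t)}=0$ whenever both of those hold. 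Fix $i\in\mathcal{P}_{\mathbb{Z}_{2^n}}^{(t)}$ and an admissible $k$; since $k\preceq i$, the pair $(i_t,k_t)$ is one of $(0,0)$, $(1,1)$, $(1,0)$. In each case the two stage-$(t+1)$ ``parents'' feeding $v_k^{(t)}$ — namely $k$ and $k+2^t$ when $k_t=0$, and $k-2^t$ and $k$ when $k_t=1$ — can each be checked to be dominated by, and to agree on positions $[0:t]$ with, one of the stage-$(t+1)$ nodes that the update rules (applied to $i\in\mathcal{P}_{\mathbb{Z}_{2^n}}^{(t)}$) force to carry a zero LLR. The induction hypothesis $P(t+1)$ then puts the required parent of $k$ into $\mathcal{P}_{\mathbb{Z}_{2^n}}^{(t+1)}$, and feeding that zero through $f$ (if $k_t=0$) or through $g$ together with the second forced zero (if $k_t=1$) yields $\alpha_k^{(t)}=0$, i.e., $k\in\mathcal{P}_{\mathbb{Z}_{2^n}}^{(t)}$.

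I expect the routine-but-careful part to be the index bookkeeping in this case analysis: verifying $k+2^t\preceq i+2^t$ and $k-2^t\preceq i-2^t$ (which hold because flipping bit $t$ is done consistently on both sides), checking that these shifted indices still agree with the corresponding shifted copies of $i$ on positions $[0:t]$, and, in the two cases with $i_t=1$, invoking the one-directional ``only if'' clause of the $g$-update to conclude that $i\in\mathcal{P}_{\mathbb{Z}_{2^n}}^{(t)}$ forces \emph{both} of $i$'s stage-$(t+1)$ parents to be zero. The genuine obstacle, though, is conceptual rather than computational: choosing the right strengthening of the inductive statement. Tracking the full dominated-closure $\mathcal{D}_i$ at intermediate stages fails, and one must instead track the ``partial'' closure that only permits flipping bits in the processed positions $[t:n-1]$. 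Once that invariant is identified, the argument reduces entirely to the monotonicity of the $f$-update (it propagates a zero from either input) and the conjunctive behavior of the $g$-update (it outputs zero only when both inputs are zero), and no channel-dependent reasoning enters, consistent with incapability being a purely structural property of the decoding graph and the puncturing pattern.
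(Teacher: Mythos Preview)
Your proof is correct but takes a genuinely different route from the paper's. The paper gives a two-line argument that imports the Mori--Tanaka partial-order result \cite{Mori2009}: since $i\prec j$ implies $I(W_N^{(i)})\le I(W_N^{(j)})$ for \emph{any} underlying bitwise channel statistics, and an incapable bit has symmetric capacity zero, every $i\in\mathcal{D}_j$ is forced to have capacity zero as well whenever $j$ is incapable. Your argument, by contrast, is entirely self-contained: you track zero-LLR propagation directly on the decoding graph via a downward induction on the stage $t$, with the key insight being the strengthened hypothesis $P(t)$ that only permits flipping $1$'s to $0$'s in the already-processed bit positions $[t:n-1]$. The three-case analysis on $(i_t,k_t)\in\{(0,0),(1,1),(1,0)\}$ goes through exactly as you describe, using that $f$ outputs a structural zero iff at least one input is zero and that $g$ outputs a structural zero iff both inputs are zero (the convention the paper adopts after \eqref{eq:g_func}). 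The paper's route is much shorter but leans on a nontrivial external fact about channel polarization; yours is more elementary, requires no capacity machinery, and is closer in spirit to the zero-LLR bookkeeping used later in Section~\ref{sec:punct} (Lemmas~\ref{lem:incapable01}--\ref{lem:incapable03} and the proofs of Theorems~\ref{thm:id_punct} and \ref{thm:rev_punct}). It also makes explicit that the domination closure of $\mathcal{U}_p$ is a purely combinatorial consequence of the $f/g$ update rules on $\mathds{G}_{2^n}$.
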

\begin{proof}
    Let $I(W_{N}^{(i)})$ denote the symmetric capacity of the $i$-th synthetic channel in the polar code of length $N=2^n$
    when SC decoding is applied.
    It was shown in \cite{Mori2009} that $I(W_{N}^{(i)}) \leq I(W_{N}^{(j)})$ if $i \prec j$, regardless of the channel statistics.
    If $I(W_{N}^{(j)})=0$ for some reason, then $I(W_{N}^{(i)})=0$ for all $i\prec j$.
    The symmetric capacity of an incapable bit due to puncturing is zero.
    Thus, if $u_j$ is made incapable, then $u_i$ is also made incapable for all $i \prec j$.
\end{proof}

The following three lemmas present some additional properties of puncturing a polar code.

\begin{lem}\label{lem:incapable01}
    For any $t\in[0:n]$ and any $\ell\in[0:2^{n-t}-1]$, we have
    \begin{equation}\label{eq:incap_lem1}\nonumber
    \left|\mathcal{P}_{\ell\cdot2^t+\mathbb{Z}_{2^t}}^{(0)}\right|
    =\left|\mathcal{P}_{\ell\cdot2^t+\mathbb{Z}_{2^t}}^{(1)}\right|
    =\ldots
    =\left|\mathcal{P}_{\ell\cdot2^t+\mathbb{Z}_{2^t}}^{(t)}\right|.
    \end{equation}
\end{lem}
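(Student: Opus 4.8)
The plan is to prove the chain of equalities one link at a time: I would show that every elementary $2\times 2$ butterfly of the decoding graph preserves the number of zero LLRs among the two variable nodes it connects, and then sum this local invariant over the butterflies that lie entirely inside the block $B\triangleq\ell\cdot 2^t+\mathbb{Z}_{2^t}$.

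First I would isolate the local invariant. Fix a stage $s\in[0:n-1]$ and an index $i$ with $i_s=0$, so that $v_i^{(s)}$ and $v_{i+2^s}^{(s)}$ are updated from $v_i^{(s+1)}$ and $v_{i+2^s}^{(s+1)}$ via \eqref{eq:f_func}--\eqref{eq:g_func}, namely $\alpha_i^{(s)}=f(\alpha_i^{(s+1)},\alpha_{i+2^s}^{(s+1)})$ and $\alpha_{i+2^s}^{(s)}=g(\alpha_i^{(s+1)},\alpha_{i+2^s}^{(s+1)},\beta_i^{(s)})$. I would argue that the number of zero LLRs in $\{v_i^{(s)},v_{i+2^s}^{(s)}\}$ equals the number in $\{v_i^{(s+1)},v_{i+2^s}^{(s+1)}\}$, where an LLR counts as zero in the structural sense that defines incapable bits, i.e. when it is forced to vanish for the punctured channel regardless of the received values. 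Under this reading $f(a,b)$ vanishes exactly when $a=0$ or $b=0$, since $\tanh$ and $\tanh^{-1}$ are bijections fixing the origin and all intrinsic LLRs here are finite, while $g(a,b,\beta)=(1-2\beta)a+b$ vanishes exactly when $a=0$ and $b=0$. A three-case split on the number of zeros among $\alpha_i^{(s+1)},\alpha_{i+2^s}^{(s+1)}$ then closes this step: no zeros give two nonzero LLRs at stage $s$; exactly one zero gives one zero at stage $s$ (here $f=0$, while $g$ is a nonzero term plus $0$, hence nonzero); two zeros give two zeros at stage $s$.

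Next I would sum over the block. Every element of $B$ carries the same bits in positions $t,\dots,n-1$ while positions $0,\dots,t-1$ range over all patterns; hence for each $s\in[0:t-1]$ the pairing $i\leftrightarrow i+2^s$ over $i\in B$ with $i_s=0$ partitions $B$ into $2^{t-1}$ butterfly pairs, and all four nodes of each such butterfly stay in $B$ because flipping bit $s<t$ leaves the fixed high bits untouched. Summing the local invariant over these pairs gives $|\mathcal{P}_B^{(s)}|=|\mathcal{P}_B^{(s+1)}|$ for every $s\in[0:t-1]$, and chaining these equalities for $s=0,1,\dots,t-1$ yields the claim. Taking $t=n$ and $\ell=0$ recovers the invariance $|\mathcal{P}_{\mathbb{Z}_{2^n}}^{(0)}|=|\mathcal{P}_{\mathbb{Z}_{2^n}}^{(n)}|$ of \cite{Shin2013}.

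I expect the one delicate point to be the $g$-update: over $\mathbb{R}$ the quantity $(1-2\beta)a+b$ can vanish with $a,b\neq 0$, so the equivalence ``$g=0$ iff $a=b=0$'' holds only once ``zero LLR'' is interpreted structurally — that is, once one argues that structurally nonzero LLRs, arising from disjoint sets of received symbols, behave generically so that $g$ cannot be forced to zero unless both of its LLR arguments are. Making this precise is the part that needs care; the partition of $B$ into in-block butterflies and the summation over them are then routine.
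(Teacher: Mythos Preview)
Your proposal is correct and essentially self-contained, but it differs in strategy from the paper's proof. The paper does not re-derive the butterfly-level invariant; instead it observes that, for each $t$ and $\ell$, the subgraph induced on the variable and check nodes with indices in $\ell\cdot 2^t+\mathbb{Z}_{2^t}$ at stages $0,\dots,t$ is structurally identical to the decoding graph of an independent polar code of length $2^t$, and then simply invokes the result of \cite{Shin2013} (that the number of zero LLRs at each stage equals the number of punctured bits) on this smaller code. Your approach, by contrast, reproves that invariance from scratch via the $2\times2$ butterfly case split and the observation that for $s<t$ every butterfly lies entirely inside the block $B$. What the paper's route buys is brevity by deferring to the literature; what yours buys is a self-contained argument that makes explicit the one subtle point you correctly flag, namely that ``$g(a,b,\beta)=0\Rightarrow a=b=0$'' holds only under the structural reading of zero LLR (forced to vanish for all channel outputs), which is precisely the interpretation used in \cite{Shin2013,Zhang2014} and implicit throughout Section~\ref{sec:punct}. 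Both arguments are valid; yours is a direct unwinding of the cited result restricted to the subgraph.
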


\begin{proof}
    For $\mathcal{A}\subset\mathbb{Z}_{2^n}$,
    let $\mathcal{V}_{\mathcal{A}}^{(t)}\subset\mathcal{V}_{\mathbb{Z}_{2^n}}^{(t)}$
    and $\mathcal{C}_{\mathcal{A}}^{(t)}\subset \mathcal{C}_{\mathbb{Z}_{2^n}}^{(t)}$
    denote the set of variable nodes and the set of check nodes at stage $t$, respectively.
    For $t\in[0:n]$ and $\ell\in[0:2^{n-t}-1]$,
    we write $\mathds{G}_{2^t}(l)$ to denote the subgraph restricted to
    $\bigcup_{i=0}^{t-1} \big{(}
    \mathcal{V}_{l\cdot2^t+\mathbb{Z}_{2^t}}^{(i)} \cup \mathcal{C}_{l\cdot2^t+\mathbb{Z}_{2^t}}^{(i)}
    \big{)}
    \cup
    \mathcal{V}_{l\cdot2^t+\mathbb{Z}_{2^t}}^{(t)}$.
    Due to the recursive construction of a polar code,
    $\mathds{G}_{2^t}(l)$ can be viewed as the graph corresponding to an independent polar code of length $2^t\leq 2^n$.
    Hence, by applying the result in \cite{Shin2013} to this code, we conclude that the number of variable nodes with LLR zero at each stage is the same as the number of punctured bits to this code.
\end{proof}

Note that $|\mathcal{P}_{\mathbb{Z}_{2^n}}^{(0)}|=|\mathcal{P}_{\mathbb{Z}_{2^n}}^{(n)}|$
in Lemma~\ref{lem:incapable01} when $t=n$ and $\ell=0$.
Therefore, Lemma~\ref{lem:incapable01} is a generalized version of the related result in \cite{Shin2013}.

\begin{lem}\label{lem:incapable02}
    For any $t\in[0:n-1]$ and $\ell\in[0:2^{n-t-1}-1]$,
    we have $\mathcal{P}_{(2\ell+1)\cdot2^t+\mathbb{Z}_{2^t}}^{(t)} \subset 2^t+\mathcal{P}_{2\ell\cdot2^t+\mathbb{Z}_{2^t}}^{(t)}$.
\end{lem}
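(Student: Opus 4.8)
The plan is to argue directly from the local LLR update rules \eqref{eq:f_func} and \eqref{eq:g_func}, exploiting the asymmetry between them: the $f$-update yields zero as soon as \emph{one} of its inputs is zero, whereas the $g$-update yields zero \emph{only if both} of its inputs are zero (the remarks following \eqref{eq:f_func} and \eqref{eq:g_func}). The two index blocks in the statement are matched by the shift $i\leftrightarrow i-2^t$, and I will show that a zero LLR on the ``odd'' side forces a zero LLR on the ``even'' side.

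Concretely, I would fix $i\in\mathcal{P}_{(2\ell+1)\cdot2^t+\mathbb{Z}_{2^t}}^{(t)}$ and write $i=(2\ell+1)\cdot2^t+r$ with $r\in\mathbb{Z}_{2^t}$. Then $i-2^t=2\ell\cdot2^t+r\in 2\ell\cdot2^t+\mathbb{Z}_{2^t}$, the $t$-th bit of $i$ equals $1$ while that of $i-2^t$ equals $0$, and $i$ and $i-2^t$ agree in all other bit positions; the hypothesis $\ell\le 2^{n-t-1}-1$ ensures $i\in\mathbb{Z}_{2^n}$. Since $i_t=1$, the LLR $\alpha_i^{(t)}$ is computed by \eqref{eq:g_func} from the pair $\bigl(\alpha_{i-2^t}^{(t+1)},\alpha_i^{(t+1)}\bigr)$, so $\alpha_i^{(t)}=0$ forces $\alpha_{i-2^t}^{(t+1)}=0$ (and $\alpha_i^{(t+1)}=0$ as well). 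Since $(i-2^t)_t=0$, the LLR $\alpha_{i-2^t}^{(t)}$ is computed by \eqref{eq:f_func} from the pair $\bigl(\alpha_{i-2^t}^{(t+1)},\alpha_{(i-2^t)+2^t}^{(t+1)}\bigr)=\bigl(\alpha_{i-2^t}^{(t+1)},\alpha_i^{(t+1)}\bigr)$; as its first input is zero, $\alpha_{i-2^t}^{(t)}=0$. Hence $i-2^t\in\mathcal{P}_{2\ell\cdot2^t+\mathbb{Z}_{2^t}}^{(t)}$, i.e. $i\in 2^t+\mathcal{P}_{2\ell\cdot2^t+\mathbb{Z}_{2^t}}^{(t)}$, and since $i$ was arbitrary the inclusion follows.

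I do not anticipate any real obstacle here: the argument is essentially a single application of the $f/g$ asymmetry together with the elementary bookkeeping that $(i-2^t)+2^t=i$ and that only the $t$-th bit changes under the shift. The one point requiring a line of care is checking that the $f$-update for index $i-2^t$ is indeed paired with index $i$ (and not some other node), which is immediate from $i=(2\ell+1)\cdot2^t+r$ and $i-2^t=2\ell\cdot2^t+r$.
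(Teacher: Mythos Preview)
Your proposal is correct and follows essentially the same argument as the paper's proof: both exploit that the $g$-update at the odd-block index is (structurally) zero only when both of its stage-$(t{+}1)$ inputs are zero, and that those same two inputs feed the $f$-update at the paired even-block index, forcing it to zero as well. The only cosmetic difference is the indexing convention (you parametrize by the odd-block index $i$ and shift down by $2^t$, whereas the paper parametrizes by the even-block index and shifts up), and that the paper phrases the implication as ``checking all four cases'' rather than invoking the $f/g$ asymmetry directly.
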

\begin{proof}
    See Appendix~\ref{proof:lem:incapable02}.
\end{proof}

\begin{lem}\label{lem:incapable03}
    Assume that $\big{|}\alpha_i^{(t)}\big{|} > 0$ for $i\in\mathbb{Z}_{2^t}$ at stage $t\in[0:n]$ after SC decoding.
    Then, there exists an index $k\in\mathbb{Z}_{2^n}$ such that additional puncturing of $x_k$ (\textit{i.e.}, $\alpha_k^{(n)} \leftarrow 0$) results in  $\alpha_i^{(t)}=0$,
    regardless of how the polar code is currently punctured.
\end{lem}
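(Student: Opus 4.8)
The plan is to exhibit an explicit witness $k$ and to verify that it works uniformly in the ambient puncturing pattern; the natural choice is $k=i$ itself. First I would record the elementary structural fact that, since $i\in\mathbb{Z}_{2^t}$, we have $i<2^t\leq 2^n$, so that $x_i$ is a legitimate encoder output bit, and moreover $i_s=0$ for every $s\in[t:n-1]$. Hence at each stage $s\in[t:n-1]$ the variable node $v_i^{(s)}$ is updated through $f$ rather than $g$, \textit{i.e.}, by \eqref{eq:f_func} its LLR is $\alpha_i^{(s)}=f\bigl(\alpha_i^{(s+1)},\alpha_{i+2^s}^{(s+1)}\bigr)$, whose \emph{first} argument is $\alpha_i^{(s+1)}$. (Note that, under the hypothesis $\alpha_i^{(t)}\neq 0$, the bit $x_i$ cannot already be punctured, since otherwise the argument below would force $\alpha_i^{(t)}=0$; so puncturing $x_i$ is indeed an \emph{additional} puncture.)

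The core of the proof is then a downward induction along the chain of variable nodes $v_i^{(n)},v_i^{(n-1)},\ldots,v_i^{(t)}$, all carrying the same index $i$. Additional puncturing of $x_k=x_i$ sets $\alpha_i^{(n)}=\ell_i=0$. Because $f(0,\,\cdot\,)=0$ independently of its second argument, this yields $\alpha_i^{(n-1)}=0$, then $\alpha_i^{(n-2)}=0$, and so on down to $\alpha_i^{(t)}=0$ (if $t=n$ the conclusion is immediate). The point deserving emphasis is that this chain consists exclusively of $f$-updates evaluated on an argument that is already zero; consequently the second $f$-arguments, the hard-decision values $\beta$ appearing in the $g$-updates, and -- most importantly -- whichever other bits of $\mathbf{x}$ are currently punctured are all irrelevant. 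This is precisely the ``regardless of how the polar code is currently punctured'' clause.

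I do not anticipate a genuine obstacle. The two points to handle carefully are: (i) confirming that at every intermediate stage $s\in[t:n-1]$ the node $v_i^{(s)}$ really is combined via $f$ and not $g$ -- which is exactly the condition $i_s=0$, guaranteed by $i<2^t$ -- and (ii) making the independence from the ambient pattern explicit so that the uniformity asserted in the statement is actually proved. As a side remark (not needed for the lemma), the same reasoning applies to any $k$ with $k\equiv i\pmod{2^t}$: puncturing such an $x_k$ first produces a zero LLR at $v_{k\bmod 2^{n-1}}^{(n-1)}$, which then descends along $f$-updates to $v_i^{(t)}$; but $k=i$ is the simplest admissible witness and suffices.
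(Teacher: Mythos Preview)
Your proposal is correct and is essentially the same argument as the paper's, only more explicit. The paper's proof observes that $i_t=0$ for $i\in\mathbb{Z}_{2^t}$, so $\alpha_i^{(t)}=f\bigl(\alpha_i^{(t+1)},\alpha_{i+2^t}^{(t+1)}\bigr)$, and then recurses upward, noting that at each stage either input lies in $\mathbb{Z}_{2^{s+1}}$ and hence is again an $f$-node; it therefore leaves the witness $k$ implicit (any upward path works). You simply pick the canonical path that keeps the index equal to $i$ at every stage, yielding the concrete witness $k=i$, and your side remark about $k\equiv i\pmod{2^t}$ recovers exactly the paper's more general recursion.
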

\begin{proof}
    See Appendix~\ref{proof:lem:incapable03}.
\end{proof}

Based on the above lemmas, we give a necessary and sufficient condition for an encoder input bit to be made incapable by additionally puncturing a single output bit.

\begin{thm}\label{thm:incapable02}
    Let $j\in\mathbb{Z}_{2^n} \backslash \mathcal{U}_p$.
    The encoder input bit $u_j$ can be made incapable by additionally puncturing a single encoder output bit if and only if $\hat{\mathcal{D}}_j \subset \mathcal{U}_p$.
\end{thm}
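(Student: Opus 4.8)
The plan is to reduce the question to the combinatorics of a single monotone Boolean formula in the puncturing pattern, and then ask when adjoining one more true input flips its value.

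First I would make the dependence of incapability on $\mathcal{X}_p$ explicit. Tracing a variable node $v_j^{(0)}$ upward through the decoding graph and using \eqref{eq:f_func}--\eqref{eq:g_func} (an $f$-node has zero LLR iff at least one of its two inputs is zero, while a $g$-node has a forced-zero LLR iff both inputs are forced zero), one sees that ``$u_j$ is incapable'' is the value of a depth-$n$ layered formula $\Phi_j$ evaluated on the indicator of $\mathcal{X}_p$: stage $t$ ($0\le t\le n-1$) is an all-AND layer when $j_t=1$ and an all-OR layer when $j_t=0$, the $2^n$ leaves are in bijection with $\mathbb{Z}_{2^n}$ (the leaf reached by complementing the bits of $j$ in a set $T$ being that integer), and a leaf is true iff its index lies in $\mathcal{X}_p$. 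The minterms of $\Phi_j$ are its minimal true-sets, and the minterm obtained by ``staying'' at every OR-layer is exactly $\mathcal{D}_j$. Since $j\notin\mathcal{U}_p$, $\Phi_j$ is currently false, so every minterm misses at least one index of $\mathcal{X}_p$; and $u_j$ can be made incapable by a single additional puncture iff some minterm of $\Phi_j$ misses \emph{exactly} one index of $\mathcal{X}_p$.

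Necessity is then immediate. A single extra punctured output bit can only degrade the synthetic channels, so the incapable set can only grow, and by Lemma~\ref{lem:incapable01} (with $t=n$) it grows by exactly one; hence the new incapable set is $\mathcal{U}_p\cup\{j\}$. Applying Theorem~\ref{thm:incapable01} to this set gives $\mathcal{D}_j\subseteq\mathcal{U}_p\cup\{j\}$, and since $j\notin\mathcal{U}_p$ and $j\notin\hat{\mathcal{D}}_j$ we conclude $\hat{\mathcal{D}}_j\subseteq\mathcal{U}_p$.

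For sufficiency I would induct on $n$, splitting off stage $n-1$: for incapability purposes the length-$2^n$ code decomposes into an ``$f$-half'' (the length-$2^{n-1}$ sub-code on indices $\mathbb{Z}_{2^{n-1}}$, whose channel $i$ is erased iff $x_i$ or $x_{i+2^{n-1}}$ is punctured) and a ``$g$-half'' (indices $2^{n-1}+\mathbb{Z}_{2^{n-1}}$, channel $i$ erased iff both $x_i$ and $x_{i+2^{n-1}}$ are punctured), with the nesting $\mathcal{P}^{(n-1)}_{2^{n-1}+\mathbb{Z}_{2^{n-1}}}\subseteq 2^{n-1}+\mathcal{P}^{(n-1)}_{\mathbb{Z}_{2^{n-1}}}$ supplied by Lemma~\ref{lem:incapable02}. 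If $j_{n-1}=0$, then $u_j$ is a bit of the $f$-half, $\hat{\mathcal{D}}_j\subseteq\mathbb{Z}_{2^{n-1}}$ is incapable inside that sub-code, so the induction hypothesis gives a single $f$-half channel whose erasure makes $u_j$ incapable there; since an $f$-combined channel is killed by puncturing either of its two physical feeders and is currently alive, puncturing one feeder realizes exactly that single erasure without touching any other $f$-half channel, and we are done. If $j_{n-1}=1$, write $j=2^{n-1}+j'$; then $\mathcal{D}_{j'}\subseteq\hat{\mathcal{D}}_j\subseteq\mathcal{U}_p$ forces $u_{j'}$ to already be incapable in the $f$-half, while $2^{n-1}+\hat{\mathcal{D}}_{j'}\subseteq\hat{\mathcal{D}}_j\subseteq\mathcal{U}_p$ and $2^{n-1}+j'\notin\mathcal{U}_p$ put $\hat{\mathcal{D}}_{j'}$ (but not $j'$) in the incapable set of the $g$-half, so the induction hypothesis gives an abstract single $g$-half puncture completing the job. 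A $g$-combined channel, however, is killed only by puncturing \emph{both} its feeders, so a single physical puncture realizes this only when one feeder is already punctured. This is the main obstacle: one must show that among the $g$-half channels whose erasure finishes off $u_{j'}$, at least one has its partner physical bit already punctured. The leverage is that $u_{j'}$ is already incapable in the $f$-half, so the $f$-half erasure pattern already contains a full minterm of $\Phi_{j'}$, while the $g$-half erasure pattern is strictly nested inside it; transporting along the minterms of $\Phi_{j'}$ from the $f$-half pattern toward the $g$-half pattern — where Lemma~\ref{lem:incapable02} (nesting) and Lemma~\ref{lem:incapable03} (adjoining one puncture forces exactly one new zero at the deeper stage) do the work — isolates a minterm of $\Phi_{j'}$ contained in the $f$-half pattern that misses the $g$-half pattern in a single index, and that index, being $f$-half-erased but not $g$-half-erased, has exactly one feeder already punctured. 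The base case $n=1$ is checked directly, and I expect this $g$-half case to be the only delicate point; everything else is bookkeeping.
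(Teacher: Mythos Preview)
Your architecture parallels the paper's closely: necessity via Theorem~\ref{thm:incapable01} and the count $|\mathcal{U}_p|=|\mathcal{X}_p|$, then sufficiency by induction with a split into an $f$-half and a $g$-half governed by Lemmas~\ref{lem:incapable02}--\ref{lem:incapable03}. The paper inducts on the position $T$ of the highest set bit of $j$ rather than on $n$, so it only ever faces the $g$-side, but the substantive step is the same and your $j_{n-1}=0$ case is clean.

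The gap is in the $j_{n-1}=1$ case. You need an index $k$ in the $f$-half erasure set $E_f=\mathcal{P}^{(n-1)}_{\mathbb{Z}_{2^{n-1}}}$ but outside the $g$-half erasure set $E_g=-2^{n-1}+\mathcal{P}^{(n-1)}_{2^{n-1}+\mathbb{Z}_{2^{n-1}}}$ such that $E_g\cup\{k\}$ already makes position $j'$ incapable in the length-$2^{n-1}$ sub-code, and your ``transport along minterms'' sketch asserts that some minterm of $\Phi_{j'}$ lying inside $E_f$ misses $E_g$ in exactly one index. This is false in general. With $n=3$ and $\mathcal{X}_p=\{0,1,3,4\}$ one computes $\mathcal{U}_p=\{0,1,2,4\}$, so $j=6$ has $\hat{\mathcal{D}}_6=\{0,2,4\}\subset\mathcal{U}_p$ and $6\notin\mathcal{U}_p$; here $E_f=\{0,1,3\}$, $E_g=\{0\}$, $j'=2$, and the length-$4$ minterms of $\Phi_2$ are $\{0,2\}$ and $\{1,3\}$, of which only $\{1,3\}$ lies in $E_f$ and it misses $E_g$ in two indices. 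In fact the length-$8$ minterms for $u_6$ are $\{0,2,4,6\}$ and $\{1,3,5,7\}$ (Table~\ref{psi_table}), each short of $\mathcal{X}_p$ by two elements, so \emph{no} single additional puncture to this particular $\mathcal{X}_p$ makes $u_6$ incapable at all. Hence the sufficiency direction, read as a claim about \emph{every} puncturing pattern realizing a given $\mathcal{U}_p$, fails exactly where you anticipated trouble; the paper's own inductive step asserts the same selection of $k\in E_f\setminus E_g$ without justification, so the issue lies in the statement as literally phrased rather than in your strategy, and is resolved only under the weaker reading that $\mathcal{U}_p\cup\{j\}$ is achievable by \emph{some} puncturing pattern.
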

\begin{proof}
    If $u_j$ is additionally made incapable, then $\mathcal{D}_j \subset \mathcal{U}_p \cup \{j\}$ by Theorem~\ref{thm:incapable01}.
    Therefore, we have $\hat{\mathcal{D}}_j \subset \mathcal{U}_p$.

    The converse is proved by mathematical induction.
    First, note that $\mathbb{Z}_{2^n}$ can be decomposed into $\mathbb{Z}_{2^n}=\{0\}\cup\bigcup_{t=0}^{n-1}\left(2^t+\mathbb{Z}_{2^t}\right)=\mathbb{Z}_{2^{n-1}}\cup\left(2^{n-1}+\mathbb{Z}_{2^{n-1}}\right)$.
    Clearly, $u_0$ can be made incapable by puncturing a single encoder output bit by Lemma~\ref{lem:incapable03} even if no bits have been made incapable earlier.
    We prove that for $j\in2^t+\mathbb{Z}_{2^t}$ with $t\in[0:n-1]$, $u_j$ can be made incapable by additionally puncturing a single encoder output bit if $\hat{\mathcal{D}}_j \subset \mathcal{U}_p$.

    For $t=0$, it suffices to consider only $j=1$ because $j\in2^0+\mathbb{Z}_{2^0}=\{1\}$.
    By assumption, we have $\hat{\mathcal{D}}_1=\{0\} \subset \mathcal{U}_p$.
    Since $\alpha_0^{(0)}=0$ and $\alpha_1^{(0)}\neq0$,
    one of $\alpha_0^{(1)}$ and $\alpha_1^{(1)}$ is zero and the other is nonzero.
    By Lemma~\ref{lem:incapable03}, the nonzero one, either $\alpha_0^{(1)}$ or $\alpha_1^{(1)}$, can be made zero by additionally puncturing a single encoder output bit since $\{0,1\}=\mathbb{Z}_{2^1}$.

    Assume that for any $j\in2^{T-1}+\mathbb{Z}_{2^{T-1}}$ with $1\leq T \leq n-1$, $u_j$ can be made incapable by additionally puncturing a single encoder output bit if $\hat{\mathcal{D}}_j\subset\mathcal{U}_p$.
    Then, by Lemma~\ref{lem:incapable01},
    there exists $\ell\in\mathbb{Z}_{2^T}$ such that $\alpha_\ell^{(T)}=0$ leads to making $u_j$ incapable.
    Furthermore, it follows from Lemma~\ref{lem:incapable03} that $\alpha_\ell^{(T)}$ can be made zero for any $\ell\in\mathbb{Z}_{2^T}$ by puncturing an additional encoder output bit.

    Now, consider $j\in 2^T+\mathbb{Z}_{2^T}$.
    Then, $\hat{\mathcal{D}}_j\subset\mathcal{U}_p$ by assumption and $\hat{\mathcal{D}}_j=\mathcal{D}_{j-2^T}\cup\left(2^T+\hat{\mathcal{D}}_{j-2^T}\right)$.
    As defined in the Proof of Lemma~\ref{lem:incapable01}, $\mathds{G}_{2^T}(0)$ and $\mathds{G}_{2^T}(1)$ represent identical and independent component polar codes of length $2^T$, respectively.
    In the component polar code represented by $\mathds{G}_{2^T}(0)$,
    the zero-LLR pattern $\mathcal{P}_{\mathbb{Z}_{2^T}}^{(T)}$ is involved in generating the incapable bit pattern $\mathcal{P}_{\mathbb{Z}_{2^T}}^{(0)}\supset\mathcal{D}_{j-2^T}$.
    On the other hand, the zero-LLR pattern $\mathcal{P}_{2^T+\mathbb{Z}_{2^T}}^{(T)}$ contributes to making the incapable bit pattern $\mathcal{P}_{2^T+\mathbb{Z}_{2^T}}^{(0)}\supset 2^T+\hat{\mathcal{D}}_{j-2^T}$ in the component polar code corresponding to $\mathds{G}_{2^T}(1)$.
    At stage $T$, we have $-2^T+\mathcal{P}_{2^T+\mathbb{Z}_{2^T}}^{(T)} \subset \mathcal{P}_{\mathbb{Z}_{2^T}}^{(T)}$ by Lemma~\ref{lem:incapable02}.
    Because the zero-LLR pattern $-2^T+\mathcal{P}_{2^T+\mathbb{Z}_{2^T}}^{(T)}$ does not make $u_j$ incapable in $\mathds{G}_{2^T}(1)$,
    the induction hypothesis guarantees that
    there exists $k\in\mathcal{P}_{\mathbb{Z}_{2^T}}^{(T)}\backslash\left(-2^T+\mathcal{P}_{2^T+\mathbb{Z}_{2^T}}^{(T)}\right)$ such that $\alpha_k^{(T)}=0$ leads to making $u_{j-2^T}$ incapable in $\mathds{G}_{2^T}(0)$. 
    In $\mathds{G}_{2^T}(1)$,
    we have $2^T+\hat{\mathcal{D}}_{j-2^T}\subset\mathcal{P}_{2^T+\mathbb{Z}_{2^T}}^{(0)}$ caused by the identical zero-LLR pattern $\mathcal{P}_{2^T+\mathbb{Z}_{2^T}}^{(T)} \subset 2^T+\mathcal{P}_{\mathbb{Z}_{2^T}}^{(T)}$, as in $\mathds{G}_{2^T}(0)$.
    Due to the same structure of $\mathds{G}_{2^T}(0)$ and $\mathds{G}_{2^T}(1)$,
    $u_j$ can be made incapable by additionally setting $\alpha_{k+2^T}^{(T)}=0$.
    Since $\alpha_k^{(T)}=0$ and $|\alpha_{k+2^T}^{(T)}|>0$, one of $\alpha_k^{(T+1)}$ and $\alpha_{k+2^T}^{(T+1)}$ is zero and the other is nonzero.
    Clearly, $\alpha_{k+2^T}^{(T)}$ can be made zero by setting the nonzero one, either $\alpha_{k}^{(T+1)}$ or $\alpha_{k+2^T}^{(T+1)}$, to zero.
    Since $k,k+2^T\in\mathbb{Z}_{2^{T+1}}$,
    the nonzero one can be made zero by additionally puncturing a single encoder output bit by Lemma~\ref{lem:incapable03}.
    Hence, the statement holds for any $j\in\mathbb{Z}_{2^n}$ by mathematical induction.
\end{proof}

It is shown in Theorem~\ref{thm:incapable01} that for an index $j\in\mathcal{U}_p$,
all the indices dominated by $j$ are ahead of $j$ in $\mathcal{U}_p$, while the indices dominating $j$ appear behind of $j$ in $\mathcal{U}_p$.
This is the property that any $2^n$-posequence $\textbfit{P}=(p_0,\ldots,p_{2^n-1})$ has.
Thus, the set $\mathcal{A} \leftarrow (\textbfit{P})_0^{J-1}$ is an achievable incapable bit pattern of length $J$.
In addition, Theorem~\ref{thm:incapable02} demonstrates that
the order of the encoder input bits to be made incapable is restricted only by binary domination.
Definitely, $u_0$ is the first encoder input bit to be incapable since $0$ is dominated by any nonzero integer in $\mathbb{Z}_{2^n}$ with respect to binary domination.

\subsection{Puncturing Bit Patterns}

In this subsection, we investigate puncturing bit patterns that result in a given incapable bit pattern.
First, we identify the sets of encoder output bits required to be punctured to make a certain single encoder input bit incapable.

For clear presentation, we introduce some set notations.
Given two ordered sets $\mathcal{A}=\{a_0,\ldots,a_{m-1}\}$ and $\mathcal{B}=\{b_0,\ldots,b_{m-1}\}$ with the same cardinality $m$,
we define $\mathcal{A}\oplus\mathcal{B}$ as the elementwise addition of these sets, that is, $\mathcal{A}\oplus\mathcal{B}=\{a_0+b_0,\ldots,a_{m-1}+b_{m-1}\}$.
In addition, given a set $\mathcal{A}$, let $P_{m}(\mathcal{A})$ denote the family of the ordered sets obtained by taking $m$ elements from $\mathcal{A}$ with repetition, \textit{i.e.}, $P_{m}(\mathcal{A})
\triangleq
\left\{
\left\{p_0,\ldots,p_{m-1}\right\} \mid p_i\in\mathcal{A}, i=0,\ldots,m-1
\right\}$.
For example, $P_2(\{0, 4\}) = \left\{ \{0,0\},\{0,4\},\{4,0\},\{4,4\}  \right\}$.

\begin{lem}\label{lem:puncturing01}
    For $t\in[0:n]$ and $j\in\mathbb{Z}_{2^n}$, let $\psi_j^{(t)}$ denote the family of minimal sets of variable node indices with LLR zero at stage $t$, which are required to make $u_j$ incapable.
    Then, $\psi_j^{(0)}=\{\{j\}\}$ and $\psi_j^{(n)}$ is obtained by recursively performing
    \begin{equation}\label{eq:psi_op}
        \psi_{j}^{(t+1)}=
        \begin{cases}
        \left\{ \mathcal{B}\oplus\mathcal{Q} \mid \mathcal{Q}\in\psi_j^{(t)},\mathcal{B}\in P_{|\mathcal{Q}|}(\{0,2^t\}) \right\}, & \text{if } j_t = 0  \\
        \left\{ \mathcal{Q}\cup\left(-2^t+\mathcal{Q}\right) \mid \mathcal{Q}\in\psi_j^{(t)} \right\}, & \text{if } j_t = 1
        \end{cases}
    \end{equation}
    for $t=[0:n-1]$.
\end{lem}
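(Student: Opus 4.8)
The plan is to prove the lemma by induction on the stage index $t$, reading $\psi_j^{(t)}$ as the family of minimal \emph{forcing sets}: a set $S\subseteq\mathbb{Z}_{2^n}$ of stage-$t$ variable-node indices is a forcing set (for $u_j$) if imposing $\alpha_k^{(t)}=0$ for all $k\in S$ alone propagates through \eqref{eq:f_func}--\eqref{eq:g_func} to guarantee $\alpha_j^{(0)}=0$, irrespective of the other stage-$t$ LLRs. With this reading the base case $\psi_j^{(0)}=\{\{j\}\}$ is immediate: stage $0$ carries the encoder-input LLRs, which undergo no further combining, so a set of stage-$0$ zeros forces $\alpha_j^{(0)}=0$ if and only if it contains $j$. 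The inductive step must show that lifting each $\mathcal{Q}\in\psi_j^{(t)}$ from stage $t$ to stage $t+1$ according to \eqref{eq:psi_op} produces exactly the family of minimal stage-$(t+1)$ forcing sets, whence $\psi_j^{(n)}$ follows by iterating $n$ times.

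For the step I would first record how a \emph{single} stage-$t$ node is forced from stage $t+1$. A node $v_i^{(t)}$ with $i_t=0$ has $\alpha_i^{(t)}=f(\alpha_i^{(t+1)},\alpha_{i+2^t}^{(t+1)})$, which by the remark following \eqref{eq:f_func} is guaranteed to vanish exactly when $\alpha_i^{(t+1)}=0$ or $\alpha_{i+2^t}^{(t+1)}=0$; thus its minimal stage-$(t+1)$ forcing sets are $\{i\}$ and $\{i+2^t\}$. A node $v_i^{(t)}$ with $i_t=1$ has $\alpha_i^{(t)}=g(\alpha_{i-2^t}^{(t+1)},\alpha_i^{(t+1)},\cdot)$, which by the remark following \eqref{eq:g_func} is guaranteed to vanish only when $\alpha_{i-2^t}^{(t+1)}=0$ and $\alpha_i^{(t+1)}=0$, so its unique minimal forcing set is $\{i-2^t,i\}$. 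Because the pairs $\{i,i+2^t\}$ with $i_t=0$ use pairwise-disjoint pairs of stage-$(t+1)$ nodes, forcing a whole set $\mathcal{Q}$ at stage $t$ decomposes pair by pair, and a minimal stage-$(t+1)$ forcing set for $\mathcal{Q}$ is obtained by choosing, independently for each pair met by $\mathcal{Q}$, one of the minimal sets above and taking the union.

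To convert this into the two cases of \eqref{eq:psi_op} I would use the structural invariant that every $\mathcal{Q}\in\psi_j^{(t)}$ is contained in $\lfloor j/2^t\rfloor\cdot 2^t+\mathbb{Z}_{2^t}$, i.e.\ all its elements agree with $j$ in bit positions $t,\ldots,n-1$. This is exactly the observation from the proof of Lemma~\ref{lem:incapable01}: $\alpha_j^{(0)}$ is computed inside the independent length-$2^t$ subgraph $\mathds{G}_{2^t}(\lfloor j/2^t\rfloor)$, whose stage-$t$ nodes are precisely those indexed by $\lfloor j/2^t\rfloor\cdot 2^t+\mathbb{Z}_{2^t}$, so nodes outside this range are irrelevant to forcing $u_j$. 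In particular every element of $\mathcal{Q}$ has $t$-th bit equal to $j_t$. If $j_t=0$, all elements of $\mathcal{Q}$ are the lower members of their pairs, so each contributes the independent choice $\{i\}$ or $\{i+2^t\}$, and the union over all elements is exactly $\{\mathcal{B}\oplus\mathcal{Q}\mid\mathcal{B}\in P_{|\mathcal{Q}|}(\{0,2^t\})\}$. If $j_t=1$, all elements are the upper members of distinct pairs, each forcing $\{i-2^t,i\}$, whose union is $\mathcal{Q}\cup(-2^t+\mathcal{Q})$. A short bit-manipulation checks that both lifts land inside $\lfloor j/2^{t+1}\rfloor\cdot 2^{t+1}+\mathbb{Z}_{2^{t+1}}$, so the invariant is inherited.

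The main obstacle I anticipate is the bookkeeping needed to show that \eqref{eq:psi_op} yields \emph{exactly} the minimal forcing sets, i.e.\ both completeness and minimality. For completeness, given a minimal stage-$(t+1)$ forcing set $S$, the set of stage-$t$ nodes it forces is itself a forcing set and hence contains some $\mathcal{Q}\in\psi_j^{(t)}$; minimality of $S$ together with the disjointness of the pairs $\{i,i+2^t\}$ forces $S$ to be precisely one of the lifts of $\mathcal{Q}$ described above. Minimality of each lift is inherited from minimality of $\mathcal{Q}$ by the same disjointness: deleting any element of a lift removes one of the pair-members needed to force the corresponding node of $\mathcal{Q}$ (using that $\{i-2^t,i\}$ is the unique minimal set for an upper node and that a single zero input suffices for a lower node), so no proper sub-forcing-set exists, and distinct lifts are incomparable. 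These verifications are elementary once the pair decomposition and the invariant are in place, but that is where the care is required.
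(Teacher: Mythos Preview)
Your proposal is correct and follows essentially the same approach as the paper: both proceed by stage-by-stage lifting, using the observation that an $f$-node vanishes when either of its two stage-$(t{+}1)$ inputs is zero while a $g$-node vanishes only when both are zero. Your write-up is in fact more careful than the paper's: you make explicit the structural invariant that every $\mathcal{Q}\in\psi_j^{(t)}$ lies in $\lfloor j/2^t\rfloor\cdot 2^t+\mathbb{Z}_{2^t}$ (so all its elements share $j$'s $t$-th bit), and you address completeness and minimality of the lifted families, whereas the paper's proof simply asserts the recursion from the single-node rules and leaves these points implicit.
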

\begin{proof}
    See Appendix~\ref{proof:lem:puncturing01}.
\end{proof}

In particular, $\psi_j\triangleq\psi_j^{(n)}$ is simply called the family of minimal puncturing bit patterns required to make $u_j$ incapable.
The cardinalities of $\psi_j$ and its element sets can be found by Lemma~\ref{lem:puncturing01}.
For $t\in[0:n]$ and $j\in\mathbb{Z}_{2^n}$, let $\mathcal{Q}_j^{(t)}$ denote an element set of $\psi_j^{(t)}$.
Starting with $|\mathcal{Q}_j^{(0)}|=|\{j\}|=1$, we have
\begin{equation}\label{eq:size_Q_t}\nonumber
    \left|\mathcal{Q}_{j}^{(t+1)}\right|=
    \begin{cases}
    \left|\mathcal{Q}_{j}^{(t)}\right|, & \text{if } j_t = 0  \\
    2\times \left|\mathcal{Q}_{j}^{(t)}\right|, & \text{if } j_t = 1
    \end{cases}
\end{equation}
by Lemma~\ref{lem:puncturing01}.
Thus, $|\mathcal{Q}_j^{(t+1)}|=\prod_{k=0}^{t} 2^{j_k}$, and accordingly,
\begin{equation}\label{eq:size_Q_n}\nonumber
    \left|\mathcal{Q}_{j}^{(n)}\right| = \prod_{t=0}^{n-1} 2^{j_t}=2^{\sum_{t=0}^{n-1}j_t} = 2^{d_\mathrm{H}(j)}.
\end{equation}
Setting $|\psi_j^{(0)}|=|\{\{j\}\}|=1$, we also have
\begin{equation}\label{eq:size_psi_t}\nonumber
    \left|\psi_{j}^{(t+1)}\right|=
    \begin{cases}
    2^{|\mathcal{Q}_j^{(t)}|} \times \left|\psi_{j}^{(t)}\right|, & \text{if } j_t = 0  \\
    \left|\psi_{j}^{(t)}\right|, & \text{if } j_t = 1
    \end{cases}
\end{equation}
by Lemma~\ref{lem:puncturing01}.
Hence,
\begin{equation}\label{eq:size_psi_n}\nonumber
    \left|\psi_{j}\right|=\prod_{t=0}^{n-1} 2^{\bar{j}_t \times |\mathcal{Q}_j^{(t)}| }
    = \prod_{t=0}^{n-1} 2^{\bar{j}_t \times \prod_{k=0}^{t-1} {2^{j_k }} }
    =2^{\sum_{t=0}^{n-1} \bar{j}_t \times \prod_{k=0}^{t-1} {2^{j_k }} }.
\end{equation}

The following theorem shows that there are two simple and important element sets in $\psi_j$ that comply with binary domination.

\begin{thm}\label{lem:puncturing02}
    For any $j\in \mathbb{Z}_{2^n}$, we have $\mathcal{D}_j \in \psi_j$ and $\bar{\mathcal{D}}_j\in\psi_j$, where $\bar{\mathcal{A}}=\{\bar{a}\mid a\in\mathcal{A}\}$ for $\mathcal{A}\subset \mathbb{Z}_{2^n}$.
\end{thm}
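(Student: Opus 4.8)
The plan is to carry one carefully chosen element set of $\psi_j^{(t)}$ along the recursion \eqref{eq:psi_op} from $\psi_j^{(0)}=\{\{j\}\}$ up to $\psi_j^{(n)}=\psi_j$, arranging that this set equals $\{j\}$ at stage $0$ and becomes $\mathcal{D}_j$ (respectively $\bar{\mathcal{D}}_j$) at stage $n$. I would decompose $j$ as $j=2^t\rho_t+\sigma_t$, where $\sigma_t=j\bmod 2^t=\sum_{k=0}^{t-1}j_k2^k$ collects the low $t$ bits and $\rho_t=\lfloor j/2^t\rfloor$, so that $\rho_0=j$, $\sigma_0=0$, $\rho_n=0$, $\sigma_n=j$, and $\rho_t=2\rho_{t+1}+j_t$, $\sigma_{t+1}=\sigma_t+2^tj_t$. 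Then I would prove by induction on $t\in[0:n]$ that
\[
\mathcal{Q}_t\triangleq 2^t\rho_t+\mathcal{D}_{\sigma_t}=\{\,2^t\rho_t+m\mid m\preceq\sigma_t\,\}
\]
is a member of $\psi_j^{(t)}$. Since $\mathcal{Q}_0=\{j\}$ is the unique member of $\psi_j^{(0)}$ and $\mathcal{Q}_n=\mathcal{D}_j$, this yields $\mathcal{D}_j\in\psi_j$.

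For the induction step I would split on $j_t$. If $j_t=0$, then $2^t\rho_t=2^{t+1}\rho_{t+1}$ and $\sigma_{t+1}=\sigma_t$, hence $\mathcal{Q}_{t+1}=\mathcal{Q}_t$; choosing $\mathcal{Q}=\mathcal{Q}_t$ together with the all-zero tuple $\mathcal{B}\in P_{|\mathcal{Q}_t|}(\{0,2^t\})$ in the first branch of \eqref{eq:psi_op} gives $\mathcal{B}\oplus\mathcal{Q}=\mathcal{Q}_t$ (the result does not depend on the ordering fixed on $\mathcal{Q}_t$, since $\mathcal{B}$ is constant), so $\mathcal{Q}_{t+1}\in\psi_j^{(t+1)}$. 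If $j_t=1$, then $2^t\rho_t=2^{t+1}\rho_{t+1}+2^t$, and adjoining the new top bit gives $\mathcal{D}_{\sigma_{t+1}}=\mathcal{D}_{\sigma_t}\cup(2^t+\mathcal{D}_{\sigma_t})$; substituting these identities shows $\mathcal{Q}_{t+1}=\mathcal{Q}_t\cup(-2^t+\mathcal{Q}_t)$, which is exactly the set produced by the second branch of \eqref{eq:psi_op} applied to $\mathcal{Q}=\mathcal{Q}_t$. Here $\rho_t$ is odd, so every element of $\mathcal{Q}_t$ is at least $2^t$ and $-2^t+\mathcal{Q}_t$ stays inside $\mathbb{Z}_{2^n}$, consistent with the well-definedness established in Lemma~\ref{lem:puncturing01}.

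For $\bar{\mathcal{D}}_j$ I would run the same induction on the companion set
\[
\mathcal{Q}_t'\triangleq\{\,2^t\rho_t+(2^t-1-m)\mid m\preceq\sigma_t\,\},
\]
the reflection of $\mathcal{Q}_t$ inside the length-$2^t$ block $[\,2^t\rho_t:2^t\rho_t+2^t-1\,]$; it again starts at $\{j\}$ and ends at $\{\,2^n-1-m\mid m\preceq j\,\}=\bar{\mathcal{D}}_j$. The only change is that in the $j_t=0$ step one now picks the all-$2^t$ tuple $\mathcal{B}$, so that $\mathcal{B}\oplus\mathcal{Q}_t'=2^t+\mathcal{Q}_t'=\mathcal{Q}_{t+1}'$ (the newly available top bit is absorbed into the complement), while the $j_t=1$ step again yields $\mathcal{Q}_t'\cup(-2^t+\mathcal{Q}_t')=\mathcal{Q}_{t+1}'$. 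Conceptually, reading the $j_t=0$ branch of \eqref{eq:psi_op} as the freedom to add $0$ or $2^t$ independently in each of the $|\mathcal{Q}|$ coordinates, $\mathcal{D}_j$ and $\bar{\mathcal{D}}_j$ are precisely the two ``uniform'' trajectories. I expect the only delicate point to be keeping the bit bookkeeping tight enough that $\mathcal{Q}_{t+1}=\mathcal{Q}_t\cup(-2^t+\mathcal{Q}_t)$ and its primed analogue come out exactly (the downshift $2^t\rho_t\mapsto2^{t+1}\rho_{t+1}$ when $\rho_t$ is odd has to line up with the union structure of $\mathcal{D}_{\sigma_{t+1}}$); the rest is routine. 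Finally, $|\mathcal{D}_j|=|\bar{\mathcal{D}}_j|=2^{d_\mathrm{H}(j)}=|\mathcal{Q}_j^{(n)}|$ confirms that both are genuine (full-size) minimal puncturing bit patterns.
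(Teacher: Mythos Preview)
Your proposal is correct and follows essentially the same approach as the paper: both arguments track the two ``uniform'' trajectories through the recursion \eqref{eq:psi_op}, choosing the constant all-zero tuple $\mathcal{B}$ to reach $\mathcal{D}_j$ and the constant all-$2^t$ tuple to reach $\bar{\mathcal{D}}_j$. The only difference is presentational---you carry an explicit closed form $\mathcal{Q}_t=2^t\rho_t+\mathcal{D}_{\sigma_t}$ (and its reflection) at each stage and verify it inductively, whereas the paper writes the recursion first and evaluates the closed form only at $t=n$ (using $\mathcal{Q}_{j,1}^{(n)}=\bar{j}+\mathcal{Q}_{j,0}^{(n)}$ for the second part).
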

\begin{proof}
    See Appendix~\ref{proof:lem:puncturing02}.
\end{proof}

\if\doccolumn2
    \begin{table}[t!]
    \centering
    \caption{Min. Puncturing Bit Patterns for Each Incapable Bit $(N=8)$}
    \label{psi_table}
    \begin{tabularx}{\linewidth}{c X}
    \toprule
    Bit index $j$ & Minimal puncturing bit pattern $\psi_j^{(3)}$ \\
    \midrule
    $0=(000)_2$    &   $\mathbf{\{0\}}$, $\{1\}$, $\{2\}$, $\{3\}$, $\{4\}$, $\{5\}$, $\{6\}$, $\mathbf{\{7\}}$ \\
    $1=(001)_2$    &   $\mathbf{\{0,1\}}$, $\{2, 1\}$, $\{4, 1\}$, $\{6, 1\}$,\newline
                       ${\{0,3\}}$, $\{2, 3\}$, $\{4, 3\}$, $\{6, 3\}$,\newline
                       ${\{0,5\}}$, $\{2, 5\}$, $\{4, 5\}$, $\{6, 5\}$,\newline
                       ${\{0,7\}}$, $\{2, 7\}$, $\{4, 7\}$, $\mathbf{\{6, 7\}}$ \\
    $2=(010)_2$    &   $\mathbf{\{0,2\}}$, $\{0, 6\}$, $\{4, 2\}$, $\{4, 6\}$,\newline
                       ${\{1,3\}}$, $\{1, 7\}$, $\{5, 3\}$, $\mathbf{\{5, 7\}}$  \\
    $3=(011)_2$    &   $\mathbf{\{0, 1, 2, 3\}}$, $\{0, 1, 2, 7\}$, $\{0, 1, 6, 3\}$, $\{0, 1, 6, 7\}$,\newline
    $\{0, 5, 2, 3\}$, $\{0, 5, 2, 7\}$, $\{0, 5, 6, 3\}$, $\{0, 5, 6, 7\}$,\newline
    $\{4, 1, 2, 3\}$, $\{4, 1, 2, 7\}$, $\{4, 1, 6, 3\}$, $\{4, 1, 6, 7\}$,\newline
    $\{4, 5, 2, 3\}$, $\{4, 5, 2, 7\}$, $\{4, 5, 6, 3\}$, $\mathbf{\{4, 5, 6, 7\}}$\\
    $4=(100)_2$    &   $\mathbf{\{0,4\}}$, $\{1, 5\}$, $\{2, 6\}$, $\mathbf{\{3, 7\}}$ \\
    $5=(101)_2$    &   $\mathbf{\{0, 1, 4, 5\}}$, $\{2, 1, 5, 6\}$, $\{0, 3, 4, 7\}$, $\mathbf{\{2, 3, 6, 7\}}$ \\
    $6=(110)_2$    &   $\mathbf{\{0, 2, 4, 6\}}$, $\mathbf{\{1, 3, 5, 7\}}$ \\
    $7=(111)_2$    &   $\mathbf{\{0, 1, 2, 3, 4, 5, 6, 7\}}$ \\
    \bottomrule
    \end{tabularx}
    \end{table}
\else
    \begin{table}[t!]
    \centering
    \caption{Minimum Puncturing Bit Patterns for Each Incapable Bit $(N=8)$}
    \if\doccolumn1
    \vspace{-0.3cm}
    \fi
    \label{psi_table}
    \begin{tabularx}{\linewidth}{c X}
    \toprule
    Bit index $j$ & Minimal puncturing bit pattern $\psi_j^{(3)}$ \\
    \midrule
    $0=(000)_2$    &   $\mathbf{\{0\}}$, $\{1\}$, $\{2\}$, $\{3\}$, $\{4\}$, $\{5\}$, $\{6\}$, $\mathbf{\{7\}}$ \\
    $1=(001)_2$    &   $\mathbf{\{0,1\}}$, $\{2, 1\}$, $\{4, 1\}$, $\{6, 1\}$,
                       ${\{0,3\}}$, $\{2, 3\}$, $\{4, 3\}$, $\{6, 3\}$,\newline
                       ${\{0,5\}}$, $\{2, 5\}$, $\{4, 5\}$, $\{6, 5\}$,
                       ${\{0,7\}}$, $\{2, 7\}$, $\{4, 7\}$, $\mathbf{\{6, 7\}}$ \\
    $2=(010)_2$    &   $\mathbf{\{0,2\}}$, $\{0, 6\}$, $\{4, 2\}$, $\{4, 6\}$,
                       ${\{1,3\}}$, $\{1, 7\}$, $\{5, 3\}$, $\mathbf{\{5, 7\}}$  \\
    $3=(011)_2$    &   $\mathbf{\{0, 1, 2, 3\}}$, $\{0, 1, 2, 7\}$, $\{0, 1, 6, 3\}$, $\{0, 1, 6, 7\}$,
    $\{0, 5, 2, 3\}$, $\{0, 5, 2, 7\}$, $\{0, 5, 6, 3\}$, $\{0, 5, 6, 7\}$,\newline
    $\{4, 1, 2, 3\}$, $\{4, 1, 2, 7\}$, $\{4, 1, 6, 3\}$, $\{4, 1, 6, 7\}$,
    $\{4, 5, 2, 3\}$, $\{4, 5, 2, 7\}$, $\{4, 5, 6, 3\}$, $\mathbf{\{4, 5, 6, 7\}}$\\
    $4=(100)_2$    &   $\mathbf{\{0,4\}}$, $\{1, 5\}$, $\{2, 6\}$, $\mathbf{\{3, 7\}}$ \\
    $5=(101)_2$    &   $\mathbf{\{0, 1, 4, 5\}}$, $\{2, 1, 5, 6\}$, $\{0, 3, 4, 7\}$, $\mathbf{\{2, 3, 6, 7\}}$ \\
    $6=(110)_2$    &   $\mathbf{\{0, 2, 4, 6\}}$, $\mathbf{\{1, 3, 5, 7\}}$ \\
    $7=(111)_2$    &   $\mathbf{\{0, 1, 2, 3, 4, 5, 6, 7\}}$ \\
    \bottomrule
    \if\doccolumn1
    \vspace{-0.5cm}
    \fi
    \end{tabularx}
    \end{table}
\fi

When $N=8$, Table~\ref{psi_table} shows the minimal puncturing bit patterns that make each encoder input bit incapable.
They are determined by the recursive formula in Lemma~\ref{lem:puncturing01}.
Note that for $j\in\mathbb{Z}_8$, the bold-faced sets indicate $\mathcal{D}_j\in\psi_j$ and $\bar{\mathcal{D}}_j\in\psi_j$, which are given in Theorem~\ref{lem:puncturing02}.

Now we find all puncturing bit patterns inducing a given incapable bit pattern.
If two puncturing bit patterns give the same symmetric capacity for each bit channel,
they are said to be equivalent \cite{Chandesris2017}.
In this paper, we relax the equivalence concept.
If two puncturing bit patterns give the same incapable bit pattern, they are said to be \textit{widely equivalent}.
Based on Theorem~\ref{thm:incapable01} and Lemma~\ref{lem:puncturing01},
it is further possible to find all the widely equivalent puncturing bit patterns for a given incapable bit pattern.


Given two families $\psi$ and $\phi$ of sets, we write $\psi \vee \phi$ to denote the cross-union of these families, defined as
$\psi \vee \phi \triangleq \left\{ \mathcal{A} \cup \mathcal{B} \mid \mathcal{A}\in\psi, \mathcal{B}\in\phi \right\}$.
For a subset $\mathcal{A}\subset \mathbb{Z}_{2^n}$ such that $\mathcal{A}$ complies with binary domination,
let $\psi_{\mathcal{A}}$ denote the family of widely equivalent puncturing bit patterns making the bits with indices in $\mathcal{A}$ incapable.
We say that $j\in\mathcal{X}$ is a most dominant integer in $\mathcal{X}$ if there does not exist $\ell\in\mathcal{X}$ such that $j\prec \ell$.

\begin{thm}\label{thm:eq_punct}
Let $\mathcal{U}_p\subset\mathbb{Z}_{2^n}$ be an incapable bit pattern.
Then,
\begin{equation}\label{eq:eq_punct}\nonumber
\psi_{\mathcal{U}_p}=\left\{\mathcal{A} \in \bigvee_{j\in\breve{\mathcal{U}}_p} \psi_j \mathrel{\Big|} \left|\mathcal{A}\right| = \left| \mathcal{U}_p \right| \right\},
\end{equation}
where $\breve{\mathcal{U}}_p$ is the set of most dominant integers in $\mathcal{U}_p$.
\end{thm}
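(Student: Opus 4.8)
The plan is to establish the identity by proving the two inclusions separately, relying on three facts already in hand: (i) Theorem~\ref{thm:incapable01}, by which the incapable bit pattern produced by \emph{any} puncturing complies with binary domination, so that in particular $\mathcal{U}_p=\bigcup_{j\in\breve{\mathcal{U}}_p}\mathcal{D}_j$ and, more importantly, making every $u_j$ with $j\in\breve{\mathcal{U}}_p$ incapable automatically makes every $u_i$ with $i\in\mathcal{U}_p$ incapable; (ii) Lemma~\ref{lem:puncturing01}, which identifies $\psi_j$ as the family of \emph{all} inclusion-minimal sets of encoder-output indices whose zeroing makes $u_j$ incapable; and (iii) the counting identity $|\mathcal{P}_{\mathbb{Z}_{2^n}}^{(0)}|=|\mathcal{P}_{\mathbb{Z}_{2^n}}^{(n)}|$ from \cite{Shin2013} (the $t=n$ case of Lemma~\ref{lem:incapable01}), i.e.\ puncturing a set $\mathcal{S}$ of encoder-output bits always leaves exactly $|\mathcal{S}|$ incapable bits. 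I will also record at the outset the monotonicity of incapability: since the structural-zero propagation rules extracted from $f$ and $g$ (a node updated by $f$ is zero if \emph{either} parent is zero; a node updated by $g$ is zero if \emph{both} parents are zero) are monotone in the set of zero-LLR output nodes, enlarging the punctured set can only enlarge the zero-LLR set at every stage, hence only enlarge the incapable bit set. Consequently, for each fixed $j$, the family of puncturing sets that make $u_j$ incapable is upward closed, so every such set contains some member of $\psi_j$. Throughout I write $\mathcal{U}(\mathcal{S})$ for the incapable bit pattern induced by puncturing the encoder-output bits indexed by $\mathcal{S}$, so $\mathcal{A}\in\psi_{\mathcal{U}_p}$ means precisely $\mathcal{U}(\mathcal{A})=\mathcal{U}_p$.

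For the inclusion ``$\subseteq$'' I would take $\mathcal{A}\in\psi_{\mathcal{U}_p}$, so $\mathcal{U}(\mathcal{A})=\mathcal{U}_p$ and, by (iii), $|\mathcal{A}|=|\mathcal{U}_p|$. For each $j\in\breve{\mathcal{U}}_p$ the bit $u_j$ is incapable under puncturing $\mathcal{A}$, so by the monotonicity observation there is $\mathcal{Q}_j\in\psi_j$ with $\mathcal{Q}_j\subseteq\mathcal{A}$. Set $\mathcal{B}=\bigcup_{j\in\breve{\mathcal{U}}_p}\mathcal{Q}_j\subseteq\mathcal{A}$. Puncturing $\mathcal{B}$ already makes every $u_j$, $j\in\breve{\mathcal{U}}_p$, incapable, hence by (i) makes every $u_i$, $i\in\mathcal{U}_p$, incapable, i.e.\ $\mathcal{U}_p\subseteq\mathcal{U}(\mathcal{B})$; together with (iii) this gives $|\mathcal{B}|=|\mathcal{U}(\mathcal{B})|\ge|\mathcal{U}_p|=|\mathcal{A}|\ge|\mathcal{B}|$, forcing $\mathcal{B}=\mathcal{A}$. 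Thus $\mathcal{A}=\bigcup_{j\in\breve{\mathcal{U}}_p}\mathcal{Q}_j\in\bigvee_{j\in\breve{\mathcal{U}}_p}\psi_j$ with $|\mathcal{A}|=|\mathcal{U}_p|$, which is exactly membership in the right-hand side.

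For the reverse inclusion I would start from $\mathcal{A}=\bigcup_{j\in\breve{\mathcal{U}}_p}\mathcal{Q}_j$ with each $\mathcal{Q}_j\in\psi_j$ and $|\mathcal{A}|=|\mathcal{U}_p|$. Since each $\mathcal{Q}_j\subseteq\mathcal{A}$ already suffices to make $u_j$ incapable, monotonicity shows that puncturing $\mathcal{A}$ makes every $u_j$, $j\in\breve{\mathcal{U}}_p$, incapable, so by (i), $\mathcal{U}(\mathcal{A})\supseteq\bigcup_{j\in\breve{\mathcal{U}}_p}\mathcal{D}_j=\mathcal{U}_p$. By (iii), $|\mathcal{U}(\mathcal{A})|=|\mathcal{A}|=|\mathcal{U}_p|$, and since $\mathcal{U}_p\subseteq\mathcal{U}(\mathcal{A})$ with both sets finite, $\mathcal{U}(\mathcal{A})=\mathcal{U}_p$; hence $\mathcal{A}\in\psi_{\mathcal{U}_p}$. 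Combining the two inclusions yields the claimed equality.

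The set-theoretic bookkeeping above is routine; the step I expect to need genuine care is the monotonicity/minimality package underlying ingredient (ii) -- namely that the recursion of Lemma~\ref{lem:puncturing01} really produces \emph{every} inclusion-minimal puncturing set for $u_j$ (so that every sufficient puncturing set dominates one of them) and that ``incapable'' is monotone under adding punctured bits. Both statements hinge on reading the $f$/$g$ updates as the channel-independent structural rules used elsewhere in the paper rather than as numerical identities, and it is worth spelling this out explicitly once; everything else reduces to cardinality counting via \cite{Shin2013} together with the binary-domination closure of Theorem~\ref{thm:incapable01}.
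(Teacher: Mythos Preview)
Your argument is correct and in fact more complete than the paper's own proof. The paper argues essentially only the ``$\supseteq$'' direction: for two incomparable most-dominant indices $i,j$ it observes that any $\mathcal{Q}_i\cup\mathcal{Q}_j$ with $\mathcal{Q}_i\in\psi_i$, $\mathcal{Q}_j\in\psi_j$ and $|\mathcal{Q}_i\cup\mathcal{Q}_j|=|\mathcal{D}_i\cup\mathcal{D}_j|$ makes both $u_i$ and $u_j$ incapable and hence, by the counting identity of Lemma~\ref{lem:incapable01}, has incapable pattern exactly $\mathcal{D}_i\cup\mathcal{D}_j$; it then extends this to all of $\breve{\mathcal{U}}_p$ via $\mathcal{U}_p=\bigcup_{j\in\breve{\mathcal{U}}_p}\mathcal{D}_j$. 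The reverse inclusion is left implicit. Your treatment supplies precisely this missing half: by invoking monotonicity of the structural zero-propagation and the inclusion-minimality of the members of $\psi_j$ (Lemma~\ref{lem:puncturing01}), you extract $\mathcal{Q}_j\subseteq\mathcal{A}$ for each $j\in\breve{\mathcal{U}}_p$ and then use the cardinality squeeze $|\mathcal{B}|=|\mathcal{U}(\mathcal{B})|\ge|\mathcal{U}_p|=|\mathcal{A}|\ge|\mathcal{B}|$ to force $\mathcal{A}=\bigcup_j\mathcal{Q}_j$. The ingredients are the same as the paper's (Theorem~\ref{thm:incapable01}, Lemma~\ref{lem:incapable01}, Lemma~\ref{lem:puncturing01}), but your two-inclusion organization and the explicit monotonicity/minimality step make the equality genuinely two-sided. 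Your closing caveat is apt: the only point deserving a line of justification is that the recursion of Lemma~\ref{lem:puncturing01} really enumerates \emph{all} minimal zeroing sets and that incapability is upward closed in the puncturing set; both follow from reading the $f$/$g$ updates as the channel-independent rules you describe.
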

\begin{proof}
    See Appendix~\ref{proof:thm:eq_punct}.
\end{proof}


As an example, assume that $\mathcal{U}_p=\{0,1,2,4,5,6\}$ is given for a polar code of length $8$.
The set of most dominant integers in $\mathcal{U}_p$ is given as $\breve{\mathcal{U}}_p=\{5,6\}$.
Referring to Table~\ref{psi_table}, all the widely equivalent puncturing bit patterns for $\mathcal{U}_p$ are obtained as
$\{0,1,2,4,5,6\}$, $\{0,1,3,4,5,7\}$, $\{0,2,3,4,6,7\}$, and $\{1,2,3,5,6,7\}$.

Among the widely equivalent puncturing bit patterns obtained by Theorem~\ref{thm:eq_punct}, we are more interested in two simple patterns.
One is an \textit{identical puncturing bit pattern} in the sense that it is identical to a given incapable bit pattern,
as widely shown in the literature \cite{Eslami2011,Niu2013a,Shin2013,Zhang2014,Kim2015,Saber2015,Chandesris2017,Hong2017,Hong2018,El-Khamy2018}.
The following theorem formally shows that an identical puncturing bit pattern can be specified via binary domination.
Similar theorems and their proofs are given in \cite{Chandesris2017,Hong2018},
and readers may refer to them for comprehensive understanding.

\begin{thm} \label{thm:id_punct}
    Let $\mathcal{X}_p$ be the index set of $J$ punctured bits in $\mathbf{x}$ for a punctured polar code of length $M=2^n-J$,
    and let $\mathcal{U}_p$ be the index set of the incapable bits in $\mathbf{u}$ resulting from $\mathcal{X}_p$.
    Assume that $\mathcal{X}_p\leftarrow (\textbfit{P})_0^{J-1}$ for a $2^n$-posequence $\textbfit{P}$ such that $\mathcal{X}_p$ complies with binary domination.
    Then, $\mathcal{X}_p$ is an identical puncturing bit pattern, that is, $\mathcal{U}_p = \mathcal{X}_p$.
\end{thm}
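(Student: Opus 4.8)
The plan is to show the two inclusions $\mathcal{U}_p \subseteq \mathcal{X}_p$ and $\mathcal{X}_p \subseteq \mathcal{U}_p$ separately, exploiting the cardinality equality $|\mathcal{U}_p| = |\mathcal{X}_p| = J$ established in \cite{Shin2013} so that either inclusion actually suffices. The natural direction to attack first is $\mathcal{X}_p \subseteq \mathcal{U}_p$: I would argue that every punctured encoder output bit is itself made incapable at the input. Since $\mathcal{X}_p \leftarrow (\textbfit{P})_0^{J-1}$ for a $2^n$-posequence and $\mathcal{X}_p$ complies with binary domination, for each $j \in \mathcal{X}_p$ we have $\mathcal{D}_j \subset \mathcal{X}_p$ (it must be the $\mathcal{D}$-version of the definition, not the $\mathcal{G}$-version, because $0 = (\textbfit{P})_0$ precedes everything in a posequence and hence lies in $\mathcal{X}_p$, forcing the downward-closed branch). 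The key sub-claim is then: if $\mathcal{X}_p$ is downward-closed under $\preceq$, then puncturing exactly the output bits indexed by $\mathcal{X}_p$ makes the input bit $u_j$ incapable for every $j \in \mathcal{X}_p$. For this I would invoke Theorem~\ref{lem:puncturing02}, which gives $\mathcal{D}_j \in \psi_j$ — i.e. $\mathcal{D}_j$ is one of the minimal puncturing bit patterns that makes $u_j$ incapable — and since $\mathcal{D}_j \subseteq \mathcal{X}_p$, the bit $u_j$ is indeed incapable once all of $\mathcal{X}_p$ is punctured (puncturing a superset of a pattern that kills $u_j$ still kills $u_j$). This yields $\mathcal{X}_p \subseteq \mathcal{U}_p$.

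Combining $\mathcal{X}_p \subseteq \mathcal{U}_p$ with $|\mathcal{U}_p| = |\mathcal{X}_p|$ immediately forces $\mathcal{U}_p = \mathcal{X}_p$, which is the assertion. So the proof reduces to the one sub-claim above plus the cardinality fact. An alternative, perhaps cleaner, route is inductive on $J$ using Theorem~\ref{thm:incapable02}: having established $\mathcal{U}_p = \mathcal{X}_p = \{(\textbfit{P})_0, \ldots, (\textbfit{P})_{J-1}\}$, one adds the next bit $(\textbfit{P})_J$; because $\textbfit{P}$ is a posequence, $\hat{\mathcal{D}}_{(\textbfit{P})_J} \subseteq \mathcal{U}_p$, so by Theorem~\ref{thm:incapable02} puncturing one more output bit can make $u_{(\textbfit{P})_J}$ incapable — and one then checks that puncturing precisely $x_{(\textbfit{P})_J}$ is such a choice. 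The monotonicity argument of the previous paragraph is what certifies that this particular additional puncture does the job, so the two approaches share the same technical heart.

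The main obstacle I anticipate is the monotonicity step: carefully justifying that puncturing a \emph{superset} of a minimal incapable-inducing pattern $\mathcal{D}_j$ keeps $u_j$ incapable. Intuitively this is obvious — zeroing more intrinsic LLRs only propagates more zeros through the $f$ and $g$ updates \eqref{eq:f_func}, \eqref{eq:g_func}, since $f$ outputs zero whenever either argument is zero and the zero set can only grow — but it should be stated explicitly, perhaps as a small monotonicity lemma on the zero-LLR sets $\mathcal{P}_{\mathcal{A}}^{(t)}$, or simply observed to follow from the already-proved machinery of Lemma~\ref{lem:puncturing01} (the $\psi_j$ are \emph{minimal} sets, so any set containing one of them still induces incapability). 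A secondary care point is confirming that the hypothesis \emph{``$\mathcal{X}_p$ complies with binary domination''} here means the downward-closed alternative $\mathcal{D}_j \subset \mathcal{X}_p$; this is pinned down by noting $0 \in \mathcal{X}_p$ whenever $J \geq 1$ (true since $0$ is the unique minimum of the partial order and heads every posequence), which rules out the $\mathcal{G}_j$ branch of Definition~\ref{def:comply_with_bd} unless $\mathcal{X}_p = \mathbb{Z}_{2^n}$, a degenerate case where the claim is trivial anyway.
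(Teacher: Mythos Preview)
Your argument is correct, but it differs from the paper's. The paper proves $\mathcal{U}_p=\mathcal{X}_p$ by a stage-by-stage invariance argument: it shows that the zero-LLR index set $\mathcal{P}^{(t)}$ satisfies $\mathcal{P}^{(t)}=\mathcal{P}^{(t+1)}$ for every $t$, by checking directly from \eqref{eq:f_func}--\eqref{eq:g_func} that $\alpha_j^{(t+1)}=0$ forces $\alpha_j^{(t)}=0$ whenever the current zero set is downward-closed (the $j_t=1$ case uses $j-2^t\in\hat{\mathcal{D}}_j\subset\mathcal{P}^{(t+1)}$), and then invokes the stagewise cardinality equality $|\mathcal{P}^{(t)}|=|\mathcal{P}^{(t+1)}|$ from \cite{Shin2013} to upgrade the inclusion to an equality at each stage. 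Your route instead treats the whole decoder as a black box: you pull in Theorem~\ref{lem:puncturing02} ($\mathcal{D}_j\in\psi_j$), use downward-closedness of $\mathcal{X}_p$ to get $\mathcal{D}_j\subseteq\mathcal{X}_p$ for every $j\in\mathcal{X}_p$, and then apply one global cardinality comparison. This is shorter and more modular, but it depends on Theorem~\ref{lem:puncturing02} and on the monotonicity observation you flag (which is indeed immediate from the zero-propagation rules of $f$ and $g$, since the hard-decision argument $\beta$ in \eqref{eq:g_func} is irrelevant once both LLR inputs are zero). The paper's proof, by contrast, is self-contained relative to the $\psi_j$ machinery and gives the finer information that the zero-LLR pattern is literally $\mathcal{X}_p$ at every intermediate stage, not just at stage $0$. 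Your remark pinning down the $\mathcal{D}$-branch of Definition~\ref{def:comply_with_bd} via $0\in\mathcal{X}_p$ is a nice clarification that the paper leaves implicit.
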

\begin{proof}
    See Appendix~\ref{proof:thm:id_punct}.
\end{proof}

\begin{figure*}[t]
    \centering
  \subfloat[Identical puncturing bit pattern]{
       \includegraphics[width=0.48\linewidth]{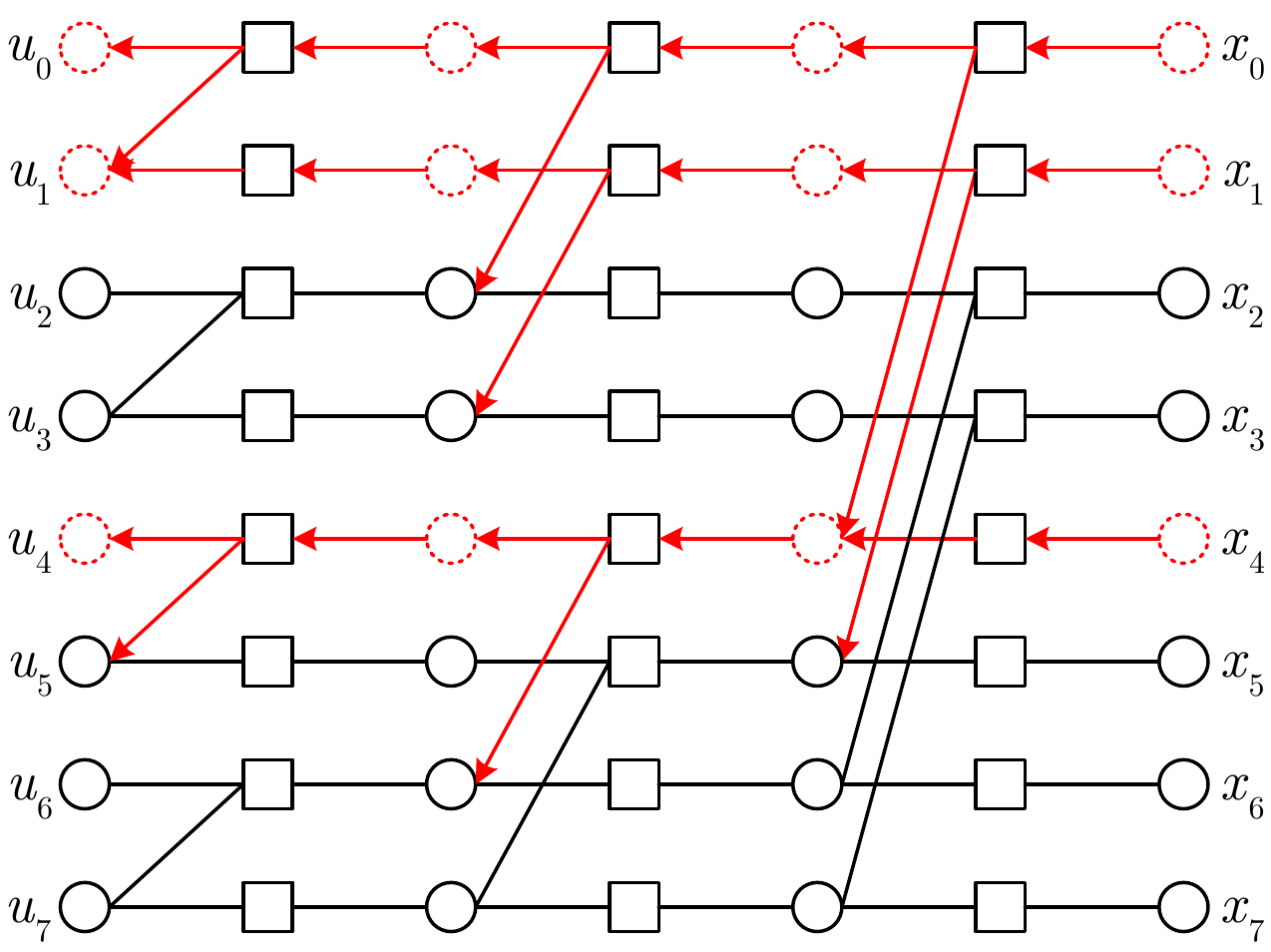}}
    \label{PO_punct}\hfill
  \subfloat[Reverse puncturing bit pattern]{
       \includegraphics[width=0.48\linewidth]{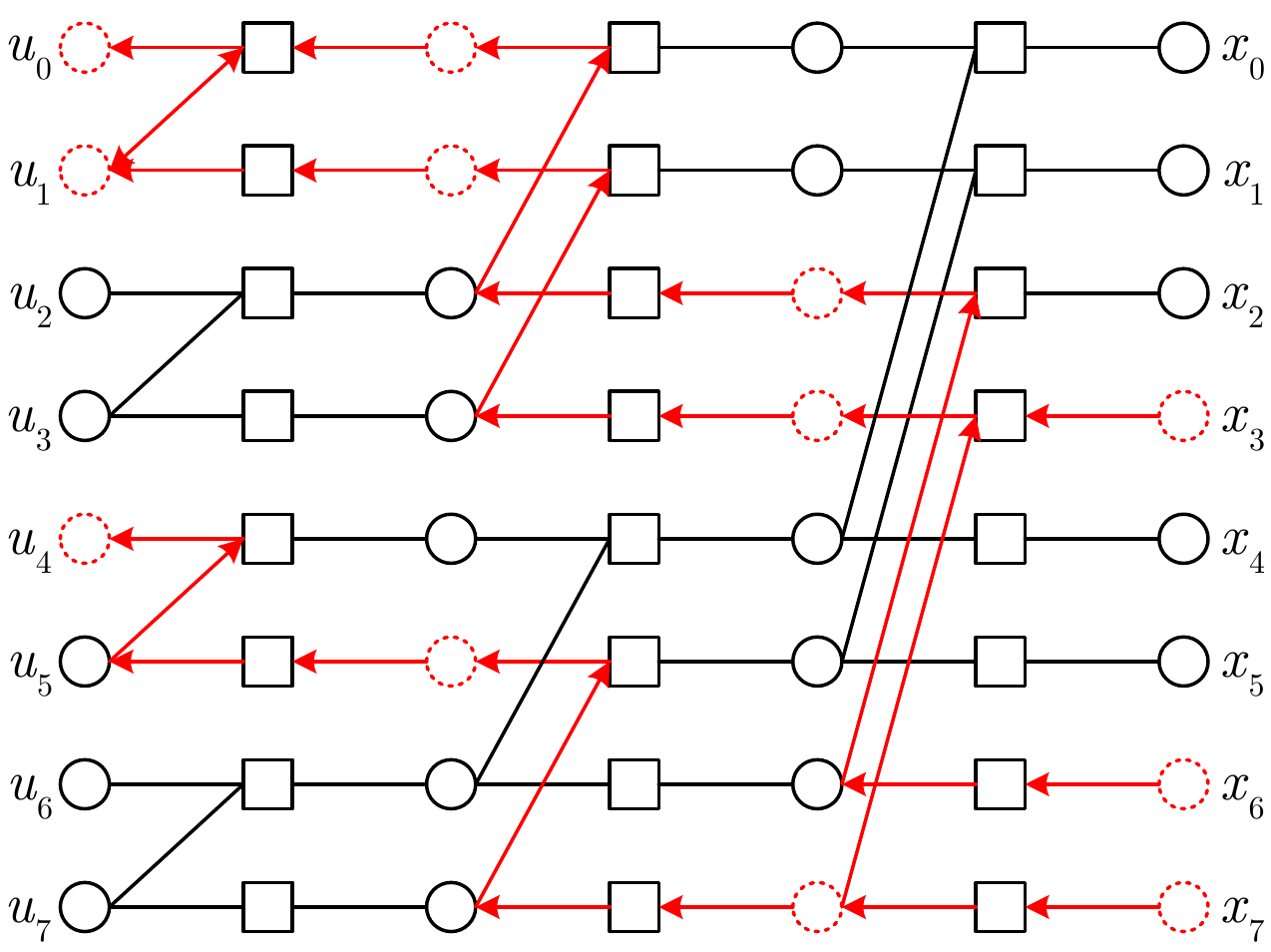}}
    \label{revPO_punct}\\
  \caption{Propagation of zero LLRs in a polar decoding graph of $N=8$.
  }
  \label{fig:punct}
  \if\doccolumn1
  \vspace{-0.5cm}
  \fi
\end{figure*}

An example of Theorem~\ref{thm:id_punct} is shown in Fig.~\ref{fig:punct} (a).
The puncturing bit pattern $\mathcal{P}=\{0,1,4\}$ complies with binary domination.
In the figure, the arrows indicate how the zero LLRs due to puncturing propagate, and the circles drawn by a dotted line correspond to the variable nodes with zero LLRs.
In each decoding stage, the zero LLR in a variable node is delivered to the variable node with the same index, since the puncturing bit pattern is subject to binary domination.
For example, $\alpha_4^{(3)}=0$ leads to $\alpha_4^{(2)}=0$, $\alpha_4^{(1)}=0$, and $\alpha_4^{(0)}=0$ in turn.

Most of the previous studies on puncturing considered identical puncturing bit patterns.
For example, Niu \textit{et al.} \cite{Niu2013a} proposed a quasi-uniform puncturing bit pattern, where the index set of $J$ punctured bits is $\{0,\ldots,J-1\}$ when the generator matrix $\mathbf{F}_2^{\otimes n}$ is considered.
Also, an identical subblock-wise permuted puncturing bit pattern has been adopted in the NR CB-RM \cite{TS38212}.
In these identical puncturing bit patterns, puncturing begins from low-indexed bits in $\mathbf{x}$ and the first punctured bit is $x_0$.

Another simple puncturing pattern is a \textit{reverse puncturing bit pattern} in the sense that it is the bitwise complement of the resultant incapable bit pattern.
In the following theorem, we show that a reverse puncturing bit pattern can also be identified via binary domination through the operation of SC decoding.

\begin{thm} \label{thm:rev_punct}
    Let $\mathcal{X}_p$ be the index set of $J$ punctured bits in $\mathbf{x}$ for a punctured polar code of length $M=2^n-J$,
    and let $\mathcal{U}_p$ be the index set of the incapable bits in $\mathbf{u}$ resulting from $\mathcal{X}_p$.
    Assume that $\mathcal{X}_p\leftarrow (\textbfit{P})_{2^n-J}^{2^n-1}$ for a $2^n$-posequence $\textbfit{P}$ such that $\mathcal{X}_p$ complies with binary domination.
    Then, $\mathcal{X}_p$ is a reverse puncturing bit pattern, that is, $\mathcal{U}_p = \bar{\mathcal{X}}_p$,
    where $\bar{\mathcal{X}}_p=\{\bar{x} \mid x\in\mathcal{X}_p\}$.
\end{thm}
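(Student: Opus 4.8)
The plan is to prove the two inclusions $\bar{\mathcal{X}}_p\subseteq\mathcal{U}_p$ and $\mathcal{U}_p\subseteq\bar{\mathcal{X}}_p$, the second of which will follow from a cardinality count once the first is in hand. Two elementary observations set up the argument. First, for every $j\in\mathbb{Z}_{2^n}$ one has $\bar{\mathcal{D}}_j=\mathcal{G}_{\bar{j}}$, since $k\preceq j$ is equivalent to $\bar{j}\preceq\bar{k}$, so complementing each element of the dominated set of $j$ produces exactly the dominating set of $\bar{j}$. Second, because $\mathcal{X}_p\leftarrow(\textbfit{P})_{2^n-J}^{2^n-1}$ is a suffix of a $2^n$-posequence, every integer dominating an element $\ell\in\mathcal{X}_p$ occurs after $\ell$ in $\textbfit{P}$ and therefore also lies in $\mathcal{X}_p$; that is, $\mathcal{G}_\ell\subseteq\mathcal{X}_p$ for every $\ell\in\mathcal{X}_p$ (the ``dominating'' alternative in the definition of complying with binary domination, which for a posequence suffix is automatic).

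For $\bar{\mathcal{X}}_p\subseteq\mathcal{U}_p$, fix $j\in\bar{\mathcal{X}}_p$, so $\bar{j}\in\mathcal{X}_p$. The second observation gives $\mathcal{G}_{\bar{j}}\subseteq\mathcal{X}_p$, i.e.\ $\bar{\mathcal{D}}_j\subseteq\mathcal{X}_p$, and by Theorem~\ref{lem:puncturing02} the set $\bar{\mathcal{D}}_j$ belongs to $\psi_j$, so puncturing exactly the output bits indexed by $\bar{\mathcal{D}}_j$ already renders $u_j$ incapable. It then remains to pass from this minimal pattern to the larger pattern $\mathcal{X}_p$: if puncturing $\mathcal{S}$ makes $u_j$ incapable and $\mathcal{S}\subseteq\mathcal{S}'$, then puncturing $\mathcal{S}'$ makes $u_j$ incapable too. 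I would establish this monotonicity by a downward induction over the decoding stages $n,n-1,\ldots,0$, using the structural rules behind \eqref{eq:f_func}--\eqref{eq:g_func}: at an $f$-node the output LLR vanishes iff at least one input LLR does, and at a $g$-node iff both do, and each condition is monotone in the set of zero-LLR indices, so a larger zero-LLR set at stage $n$ (more punctured bits) propagates to a larger zero-LLR set at every earlier stage, hence at stage $0$. Taking $\mathcal{S}=\bar{\mathcal{D}}_j$, $\mathcal{S}'=\mathcal{X}_p$ gives $j\in\mathcal{U}_p$. Alternatively, this inclusion can be read off from Theorem~\ref{thm:eq_punct} applied to the down-set $\bar{\mathcal{X}}_p$, choosing $\bar{\mathcal{D}}_j\in\psi_j$ for each $j$ in the set of most-dominant integers of $\bar{\mathcal{X}}_p$, which avoids re-deriving monotonicity.

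To finish, count cardinalities: the bitwise complement is a bijection of $\mathbb{Z}_{2^n}$, so $|\bar{\mathcal{X}}_p|=|\mathcal{X}_p|=J$, while $|\mathcal{U}_p|=|\mathcal{X}_p|=J$ by the invariance result of \cite{Shin2013} (equivalently Lemma~\ref{lem:incapable01} with $t=n$, $\ell=0$). A subset of a finite set with the same number of elements must equal it, so $\bar{\mathcal{X}}_p\subseteq\mathcal{U}_p$ forces $\mathcal{U}_p=\bar{\mathcal{X}}_p$, which is the asserted reverse-puncturing identity. The one delicate point is the monotonicity step invoked above (or, in the alternative route, checking that $\bar{\mathcal{X}}_p$ is a down-set and that its most-dominant elements correspond under complementation to the minimal elements of $\mathcal{X}_p$); everything else is bookkeeping with the dominated/dominating sets and the posequence structure.
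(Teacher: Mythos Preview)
Your argument is correct and takes a genuinely different route from the paper's proof. The paper argues directly on the decoding graph: it tracks the zero-LLR index set stage by stage and shows, by induction on $t$, that $\mathcal{P}^{(t)}=\{\bar{p}_{\{t,\ldots,n-1\}}\mid p\in\mathcal{X}_p\}$, i.e.\ passing through stage $t$ flips the $t$-th bit of every zero-LLR index; after all $n$ stages one obtains $\mathcal{P}^{(0)}=\bar{\mathcal{X}}_p$. Your proof instead leverages Theorem~\ref{lem:puncturing02} (the fact that $\bar{\mathcal{D}}_j\in\psi_j$), the identity $\bar{\mathcal{D}}_j=\mathcal{G}_{\bar{j}}$, the up-set property of a posequence suffix, a monotonicity step, and the cardinality equality $|\mathcal{U}_p|=|\mathcal{X}_p|$ from \cite{Shin2013}. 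This is shorter and more structural once Theorem~\ref{lem:puncturing02} is available, and the cardinality trick cleanly avoids proving the reverse inclusion; the paper's approach, on the other hand, is self-contained (it does not rely on Theorem~\ref{lem:puncturing02}) and yields the finer information of what the zero-LLR pattern looks like at every intermediate stage. Your monotonicity step is indeed the only non-bookkeeping ingredient, and your sketched induction on stages using the structural zero-LLR rules for $f$ and $g$ is exactly right; your alternative via Theorem~\ref{thm:eq_punct} also works, since $\bar{\mathcal{X}}_p$ is a down-set and choosing $\bar{\mathcal{D}}_j$ for each most-dominant $j\in\bar{\mathcal{X}}_p$ reproduces precisely $\mathcal{X}_p$.
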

\begin{proof}
    See Appendix~\ref{proof:thm:rev_punct}.
\end{proof}

Fig.~\ref{fig:punct} (b) gives an example of Theorem~\ref{thm:rev_punct}.
The puncturing bit pattern $\mathcal{X}_p=\{7,6,3\}$ complies with binary domination.
It brings about the bitwise-complemented incapable bit pattern,
where puncturing $x_i$ makes $u_{\bar{i}}=u_{7-i}$ incapable.
For example, given that $x_7$ and $x_6$ are punctured,
$\alpha_3^{(3)}=0$ leads to $\alpha_7^{(2)}=0$, $\alpha_5^{(1)}=0$, and $\alpha_4^{(0)}=0$ in turn.
Finally, we have $\mathcal{U}_p=\{\bar{7},\bar{6},\bar{3}\}=\{0,1,4\}$.

Based on Theorem~\ref{thm:rev_punct}, a puncturing bit pattern can be determined by the bitwise complement of a desired incapable bit pattern.
Since the incapable bit pattern is constrained by binary domination,
the corresponding reverse puncturing bit pattern also complies with binary domination.
This enables us to begin puncturing from high-indexed bits in $\mathbf{x}$, in contrast to conventional identical puncturing bit patterns.
In the reverse puncturing bit pattern, the first punctured bit is $x_{2^n-1}$.

\section{Shortening and Fixed Bit Patterns}\label{sec:short}

The relation between a shortening bit pattern and its corresponding fixed bit pattern can be explicitly explained by the encoding procedure in \eqref{eq:polar_enc}.
For simplicity, we use $\mathbf{A}_{i,j}$ to denote the entry at row $i$ and column $j$ of a matrix $\mathbf{A}$.
Given an index set $\mathcal{A}$, we write $(\mathbf{A})_{\mathcal{A}}$ to denote the submatrix of a matrix $\mathbf{A}$ formed by the rows with indices in $\mathcal{A}$.


Shortening a code is a modification method to reduce its dimension and length by a given number.
The basic idea of shortening a polar code is to fix the values of $J$ bits in $\mathbf{u}$ to zero so that the values of $J$ bits in $\mathbf{x}$ also become zero.
Let $\mathcal{U}_s$ be the index set of $J$ shortened bits, and let $\mathcal{U}_s^\mathsf{c}=\mathbb{Z}_{2^n}\backslash \mathcal{U}_s$.
Then, every polar codeword of length $2^n$ can be expressed as
\begin{equation}\label{eq:short_1}\nonumber
\mathbf{x} = \mathbf{u}_{\mathcal{U}_s^\mathsf{c}} \left(\mathbf{F}_2^{\otimes n}\right)_{\mathcal{U}_s^\mathsf{c}} + \mathbf{u}_{\mathcal{U}_s} \left(\mathbf{F}_2^{\otimes n}\right)_{\mathcal{U}_s}.
\end{equation}
By letting $\mathbf{u}_{\mathcal{U}_s}=\mathbf{0}$ for shortening, we have
\begin{equation}\label{eq:short_2}\nonumber
\mathbf{x} = \mathbf{u}_{\mathcal{U}_s^\mathsf{c}} \left(\mathbf{F}_2^{\otimes n}\right)_{\mathcal{U}_s^\mathsf{c}}.
\end{equation}
Note that $(\mathbf{F}_2^{\otimes n})_{\mathcal{U}_s^\mathsf{c}} \in \mathbb{F}_2^{(N-J) \times N}$ is the effective generator matrix obtained from shortening $\mathbf{u}_{\mathcal{U}_s}$.
In order to fix the values of $J$ bits in $\mathbf{x}$ regardless of $\mathbf{u}_{\mathcal{U}_s^\mathsf{c}}$,
all the entries at certain $J$ columns in $(\mathbf{F}_2^{\otimes n})_{\mathcal{U}_s^\mathsf{c}}$ should be zero.



Letting $\mathcal{X}_s$ denote the index set of the encoder output bits fixed to zero by shortening,
we first show in the following theorem that $\mathcal{X}_s$ complies with binary domination.

\begin{thm}\label{thm:fixed}
    For any $j\in\mathbb{Z}_{2^n}$,
    if $j\in\mathcal{X}_s$, then $\mathcal{G}_j\subseteq\mathcal{X}_s$.
    In other words, $\mathcal{X}_s$ complies with binary domination.
\end{thm}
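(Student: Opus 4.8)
The plan is to recast the statement about fixed encoder output bits as a statement about all‑zero columns of the effective generator matrix $(\mathbf{F}_2^{\otimes n})_{\mathcal{U}_s^{\mathsf{c}}}$, and then to read off the conclusion from the way binary domination describes the column supports of $\mathbf{F}_2^{\otimes n}$, together with transitivity of $\preceq$.

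First I would record the entrywise description of $\mathbf{F}_2^{\otimes n}$. By the Kronecker structure, $(\mathbf{F}_2^{\otimes n})_{i,j}=\prod_{t=0}^{n-1}(\mathbf{F}_2)_{i_t,j_t}$, and since $(\mathbf{F}_2)_{a,b}=1$ exactly when $b\le a$ for $a,b\in\{0,1\}$, we get $(\mathbf{F}_2^{\otimes n})_{i,j}=1$ iff $j_t\le i_t$ for all $t$, i.e. iff $j\preceq i$, i.e. $i\in\mathcal{G}_j$ (Definition~\ref{def:dominating_set}). Hence the support of column $j$ of $\mathbf{F}_2^{\otimes n}$ is exactly $\mathcal{G}_j$. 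Now, after fixing $\mathbf{u}_{\mathcal{U}_s}=\mathbf{0}$ for shortening, a codeword is $\mathbf{x}=\mathbf{u}_{\mathcal{U}_s^{\mathsf{c}}}(\mathbf{F}_2^{\otimes n})_{\mathcal{U}_s^{\mathsf{c}}}$, so $x_j$ is forced to the deterministic value $0$ --- that is, $j\in\mathcal{X}_s$ --- precisely when column $j$ of the submatrix $(\mathbf{F}_2^{\otimes n})_{\mathcal{U}_s^{\mathsf{c}}}$ is the all‑zero column. Since column $j$ of $\mathbf{F}_2^{\otimes n}$ is supported on $\mathcal{G}_j$ and the submatrix retains exactly the rows indexed by $\mathcal{U}_s^{\mathsf{c}}$, this holds iff $\mathcal{G}_j\cap\mathcal{U}_s^{\mathsf{c}}=\emptyset$, i.e.
\[
  j\in\mathcal{X}_s \iff \mathcal{G}_j\subseteq\mathcal{U}_s .
\]

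With this characterization in hand the theorem is a one‑line domination argument. Let $j\in\mathcal{X}_s$, so $\mathcal{G}_j\subseteq\mathcal{U}_s$, and take any $k\in\mathcal{G}_j$, i.e. $k\succeq j$. For every $m\in\mathcal{G}_k$ we have $m\succeq k\succeq j$, hence $m\succeq j$ by transitivity of $\preceq$, so $m\in\mathcal{G}_j\subseteq\mathcal{U}_s$; thus $\mathcal{G}_k\subseteq\mathcal{U}_s$, which by the characterization gives $k\in\mathcal{X}_s$. Therefore $\mathcal{G}_j\subseteq\mathcal{X}_s$, and by Definition~\ref{def:comply_with_bd} the set $\mathcal{X}_s$ complies with binary domination. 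The one place to be careful is the ``only if'' half of the displayed equivalence --- arguing that a column of $(\mathbf{F}_2^{\otimes n})_{\mathcal{U}_s^{\mathsf{c}}}$ with a nonzero entry really does leave $x_j$ non‑deterministic, which uses that the retained coordinates $\mathbf{u}_{\mathcal{U}_s^{\mathsf{c}}}$ range over enough values (equivalently, one adopts the zero‑column condition as the very definition of a fixed bit, as in \cite{Wang2014}); everything else is purely combinatorial.
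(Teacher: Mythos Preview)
Your proof is correct and follows essentially the same route as the paper's: both establish the characterization $j\in\mathcal{X}_s \iff \mathcal{G}_j\subseteq\mathcal{U}_s$ from the column structure of $\mathbf{F}_2^{\otimes n}$ (the paper cites \cite{Sarkis2016} for the entrywise formula, while you derive it from the Kronecker product), and then conclude via transitivity of $\preceq$ that $\mathcal{G}_k\subseteq\mathcal{G}_j\subseteq\mathcal{U}_s$ for every $k\in\mathcal{G}_j$.
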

\begin{proof}
Recall \cite{Sarkis2016} that the entry at row $i$ and column $j$ of $\mathbf{F}_2^{\otimes n}$ is given by
\begin{equation}\label{eq:one_pos}
    \left(\mathbf{F}_2^{\otimes n}\right)_{i,j}=
    \begin{cases}
    1, & \text{if } i \succeq j \\
    0, & \text{otherwise,}
    \end{cases}
\end{equation}
and from \eqref{eq:polar_enc} and \eqref{eq:one_pos}, we have
\begin{equation}\label{eq:one_bit_encode}
    x_j = \sum_{i\in \mathcal{G}_j} u_i.
\end{equation}
Hence, $u_i$ needs to be shortened for all $i\in\mathcal{G}_j$ in order to make $x_j$ fixed to zero regardless of the encoder input vector $\mathbf{u}$.
That is, $\mathcal{G}_j\subset\mathcal{U}_s$ if and only if $j\in\mathcal{X}_s$.
Assume that $j\in\mathcal{X}_s$, so $\mathcal{G}_j\subset\mathcal{U}_s$.
Then, for any $k\in\mathcal{G}_j$, we have $\mathcal{G}_k\subset\mathcal{U}_s$ which results in $k\in\mathcal{X}_s$.
Therefore, we have $\mathcal{G}_j \subset \mathcal{X}_s$.
\end{proof}


\begin{cor}\label{cor:short}
    For any shortened polar code, we have $\mathcal{U}_s=\mathcal{X}_s$.
\end{cor}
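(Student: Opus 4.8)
This proof is short, since almost all the work has already been done in Theorem~\ref{thm:fixed}. The plan is to extract from the proof of that theorem the one-sided inclusion $\mathcal{X}_s\subseteq\mathcal{U}_s$, and then to upgrade it to an equality by invoking the cardinality identity $|\mathcal{X}_s|=|\mathcal{U}_s|$ that holds for the greedy weight-one-column shortening of \cite{Wang2014}, as recalled in Section~\ref{sec:rate_matching}.

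Concretely, I would proceed as follows. First, recall from \eqref{eq:one_bit_encode} and the argument inside the proof of Theorem~\ref{thm:fixed} the equivalence $j\in\mathcal{X}_s\iff\mathcal{G}_j\subseteq\mathcal{U}_s$: the bit $x_j$ is deterministically zero for every choice of the free inputs $\mathbf{u}_{\mathcal{U}_s^\mathsf{c}}$ precisely when all of the summands $u_i$, $i\in\mathcal{G}_j$, in $x_j=\sum_{i\in\mathcal{G}_j}u_i$ are shortened. Second, since $\preceq$ is reflexive we have $j\in\mathcal{G}_j$, so $j\in\mathcal{X}_s$ forces $j\in\mathcal{U}_s$; hence $\mathcal{X}_s\subseteq\mathcal{U}_s$. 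Third, invoke $|\mathcal{X}_s|=|\mathcal{U}_s|$ from the weight-one-column criterion of \cite{Wang2014}; a subset of a finite set having the same number of elements coincides with it, so $\mathcal{X}_s=\mathcal{U}_s$, which is the claim.

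There is no real obstacle here; the only point that needs a little care is the logical role of the cardinality equality. Without it one obtains only $\mathcal{X}_s\subseteq\mathcal{U}_s$, and for an arbitrarily chosen $\mathcal{U}_s$ the reverse inclusion can indeed fail; it is the particular greedy construction of \cite{Wang2014}, which makes $|\mathcal{U}_s|$ as small as possible, that pins down $\mathcal{U}_s=\mathcal{X}_s$. As a by-product, combining this corollary with Theorem~\ref{thm:fixed} shows that $\mathcal{U}_s$ itself complies with binary domination in the dominating-set sense, i.e.\ $\mathcal{G}_j\subseteq\mathcal{U}_s$ whenever $j\in\mathcal{U}_s$, which is what is needed for the subsequent shortening results and for the unified rate-matching scheme.
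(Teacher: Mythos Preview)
Your proof is correct. It differs from the paper's in the way the non-trivial inclusion is obtained. The paper does not invoke the cardinality identity $|\mathcal{U}_s|=|\mathcal{X}_s|$ explicitly; instead it writes $\mathcal{U}_s=\bigcup_{j\in\mathcal{X}_s}\mathcal{G}_j$ (each $\mathcal{G}_j$ being the minimal shortening set that fixes $x_j$, by \eqref{eq:one_bit_encode}) and then collapses that union to $\mathcal{X}_s$ via Theorem~\ref{thm:fixed}. That first equality is really the same hidden hypothesis you use --- namely that the shortening is the non-wasteful one of \cite{Wang2014} --- just packaged differently. Your route, proving $\mathcal{X}_s\subseteq\mathcal{U}_s$ by reflexivity of $\preceq$ and then closing with $|\mathcal{X}_s|=|\mathcal{U}_s|$, makes the dependence on the Wang--Liu construction more explicit and arguably more transparent; the paper's route is terser and avoids the cardinality argument at the cost of burying the assumption inside the equality $\mathcal{U}_s=\bigcup_{j\in\mathcal{X}_s}\mathcal{G}_j$. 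Either way the substance is the same, and your closing remark that the bare inclusion can fail for an arbitrary $\mathcal{U}_s$ is exactly the point that justifies the extra hypothesis.
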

\begin{proof}
    Let $\mathcal{S}_j$ denote the index set of the shortened bits required to make $x_j$ fixed.
    Then, we have $\mathcal{S}_j = \mathcal{G}_j$ from \eqref{eq:one_bit_encode}.
    Thus, we have $\mathcal{U}_s=\cup_{j\in\mathcal{X}_s} \mathcal{S}_j = \cup_{j\in\mathcal{X}_s} \mathcal{G}_j = \mathcal{X}_s$, where the last equality comes from Theorem~\ref{thm:fixed}.
\end{proof}

One consequence of Corollary~\ref{cor:short} is that any feasible shortening bit pattern complies with binary domination.
Clearly, the greedy selection method with the single-weight column criterion in \cite{Wang2014} generates a shortening bit pattern following the partial order $\succ$, thereby resulting in $\mathcal{X}_s=\mathcal{U}_s$.
We further identify a necessary and sufficient condition for an encoder output bit to be fixed by additionally shortening a single encoder input bit in $\mathbf{u}$ in the following theorem.

\begin{thm}\label{thm:short}
    Let $j\in\mathbb{Z}_{2^n} \backslash \mathcal{X}_s$.
    The encoder output bit $x_j$ can be fixed to zero by additionally shortening a single encoder input bit if and only if $\hat{\mathcal{G}}_j \subset \mathcal{X}_s$.
\end{thm}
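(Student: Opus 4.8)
The plan is to reduce everything to the explicit encoding identity \eqref{eq:one_bit_encode}, $x_j=\sum_{i\in\mathcal{G}_j}u_i$, together with the equivalence already extracted inside the proof of Theorem~\ref{thm:fixed}: with $\mathbf{u}_{\mathcal{U}_s}=\mathbf{0}$, the bit $x_j$ is a deterministic constant (necessarily $0$) for every choice of the remaining inputs if and only if $\mathcal{G}_j\subseteq\mathcal{U}_s$. By Corollary~\ref{cor:short} we may freely pass between $\mathcal{X}_s$ and $\mathcal{U}_s$ since $\mathcal{X}_s=\mathcal{U}_s$. Since $j\notin\mathcal{X}_s=\mathcal{U}_s$ but $j\in\mathcal{G}_j$, the current shortened set fails to contain $\mathcal{G}_j$, and appending a single bit to $\mathcal{U}_s$ can remedy this only if the appended bit is $j$ itself.

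For the forward implication I would suppose that shortening one additional encoder input bit, say $u_m$, makes $x_j$ fixed; the new shortened index set is then $\mathcal{U}_s\cup\{m\}$. Applying the equivalence above to this augmented set, $x_j$ is fixed to zero iff $\mathcal{G}_j\subseteq\mathcal{U}_s\cup\{m\}$. As $j\in\mathcal{G}_j\backslash\mathcal{U}_s$, this forces $m=j$, and then $\mathcal{G}_j\subseteq\mathcal{U}_s\cup\{j\}$ yields $\hat{\mathcal{G}}_j=\mathcal{G}_j\backslash\{j\}\subseteq\mathcal{U}_s=\mathcal{X}_s$.

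For the converse I would assume $\hat{\mathcal{G}}_j\subseteq\mathcal{X}_s=\mathcal{U}_s$ and additionally shorten precisely $u_j$. The new shortened set $\mathcal{U}_s\cup\{j\}$ then contains $\hat{\mathcal{G}}_j\cup\{j\}=\mathcal{G}_j$, so by \eqref{eq:one_bit_encode} every summand of $x_j$ is a shortened bit set to zero; hence $x_j=0$ independently of the unshortened inputs, i.e.\ $x_j$ is fixed.

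The proof is essentially bookkeeping once \eqref{eq:one_bit_encode}, the argument in the proof of Theorem~\ref{thm:fixed}, and Corollary~\ref{cor:short} are available; the only point requiring care --- the mild ``hard part'' --- is the observation that in the forward direction the single extra shortened bit must be $j$ itself, which is exactly what makes the sharp condition $\hat{\mathcal{G}}_j\subseteq\mathcal{X}_s$ (rather than something weaker) necessary. It is worth noting that this statement is the shortening counterpart of Theorem~\ref{thm:incapable02}, with $\mathcal{D}$ and $\mathcal{G}$ interchanged and the roles of encoder input and output swapped, reflecting the transpose-type duality between puncturing and shortening.
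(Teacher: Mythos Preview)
Your proposal is correct and follows essentially the same approach as the paper: both directions hinge on the encoding identity \eqref{eq:one_bit_encode} together with the equivalence ``$x_j$ is fixed iff $\mathcal{G}_j\subseteq\mathcal{U}_s$'' from the proof of Theorem~\ref{thm:fixed}, plus $\mathcal{X}_s=\mathcal{U}_s$ from Corollary~\ref{cor:short}. The paper's forward direction is slightly terser --- it invokes Theorem~\ref{thm:fixed} on the augmented fixed set $\mathcal{X}_s\cup\{j\}$ directly, whereas you make the intermediate step ``the single added bit must be $u_j$'' explicit --- but the underlying logic is identical.
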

\begin{proof}
    If $x_j$ is additionally made fixed, then $\mathcal{G}_j\subset \mathcal{X}_s \cup \{j\}$ by Theorem~\ref{thm:fixed},
    that is, $\hat{\mathcal{G}}_j \subset \mathcal{X}_s$.
    To prove the converse, assume that $\hat{\mathcal{G}}_j\subset\mathcal{X}_s$.
    Then, $\hat{\mathcal{G}}_j\subset\mathcal{U}_s$ since $\mathcal{X}_s=\mathcal{U}_s$ by Corollary~\ref{cor:short}.
    Combining this relation with \eqref{eq:one_bit_encode}, we have $x_j = \sum_{i\in\hat{\mathcal{G}}_j}u_i + u_j=u_j$.
    Hence, $x_j$ can be made fixed by shortening $u_j$.
\end{proof}


\section{Design of Unified Circular-Buffer Rate Matching}\label{sec:unified}

\if\doccolumn2
    \begin{table*}[t!]
    \centering
    \caption{Summary of Constraints for Polar Code Rate-Matching Based on Binary Domination}
    \label{rm_constraint}
    \begin{tabularx}{\linewidth}{c X X}
    \toprule
    Rate matching  & Encoder input $\mathbf{u}$ & Encoder output $\mathbf{x}$ \\
    \midrule
    $J$-bit puncturing  & first $J$ components of a $2^n$-posequence \newline (Theorems~\ref{thm:incapable01}~and~~\ref{thm:incapable02}) & widely equivalent puncturing bit patterns (Theorem~\ref{thm:eq_punct}) including \newline
    $\cdot$ an identical pattern to incapable pattern (Theorem~\ref{thm:id_punct},\cite{Chandesris2017,Hong2018}) \newline
    $\cdot$ a reverse pattern to incapable pattern (Theorem~\ref{thm:rev_punct}) \\
    $J$-bit shortening  & an identical pattern to its fixed bit pattern (Corollary~\ref{cor:short}) & last $J$ components of a $2^n$-posequence (Theorems~\ref{thm:fixed} and \ref{thm:short}) \\
    \bottomrule
    \end{tabularx}
    \end{table*}
\else
\begin{table*}[t!]
    \centering
    \caption{Summary of Constraints for Polar Code Rate-Matching Based on Binary Domination}
    \label{rm_constraint}
    \begin{tabularx}{\linewidth}{c X X}
    \toprule
    Rate matching  & Encoder input $\mathbf{u}$ & Encoder output $\mathbf{x}$ \\
    \midrule
    $J$-bit puncturing  & first $J$ components of a $2^n$-posequence \newline (Theorems~\ref{thm:incapable01}~and~~\ref{thm:incapable02}) & widely equivalent puncturing bit patterns (Theorem~\ref{thm:eq_punct}) including \newline
    $~~~$- an identical pattern to its incapable bit pattern \newline $~~~~$ (Theorem~\ref{thm:id_punct}) \newline
    $~~~$- a reverse pattern to its incapable bit pattern \newline $~~~~$ (Theorem~\ref{thm:rev_punct}) \\
    $J$-bit shortening  & an identical pattern to its fixed bit pattern \newline (Corollary~\ref{cor:short}) & last $J$ components of a $2^n$-posequence \newline (Theorems~\ref{thm:fixed} and \ref{thm:short}) \\
    \bottomrule
    \end{tabularx}
    \end{table*}
\fi

\subsection{Unified Rate Matching Bit Pattern}

As shown in the previous sections, both incapable and shortening bit patterns at the encoder input comply with binary domination.
Table~\ref{rm_constraint} summarizes how polar rate-matching patterns are determined.
The shortening bit patterns designed by the single-weight column criterion are restricted to comply with binary domination.
In this setting, a $J$-bit shortening bit pattern is determined by the last $J$ components of a $2^n$-posequence
and is identical to its corresponding fixed bit pattern.
Incapable bit patterns are also strictly constrained by binary domination so that a $J$-bit incapable bit pattern is given by the first $J$ components of a $2^n$-posequence.
There are multiple widely-equivalent puncturing bit patterns that result in the same incapable bit pattern,
so we have a degree of freedom to choose a puncturing bit pattern.
Among these widely equivalent patterns, the reverse puncturing bit pattern is a special one whose bit indices are obtained by the bitwise complement of the given incapable bit pattern.

Consider a practical CB-RM scheme using a single nested sequence
such that the index sets of $J$ puncturing and shortened bits at the encoder output are determined by selecting the first $J$ and the last $J$ elements of the given sequence, respectively.
According to Table~\ref{rm_constraint},
the sequence should be designed to be a $2^n$-posequence,
and any $2^n$-posequence can be a candidate for it.
Thus, a unified rate matching bit pattern of length $2^n$ can be optimized by finding the best one among all $2^n$-posequences in terms of the rate-compatible performance.

\if\doccolumn2
    \begin{table}[h]
    \centering
    \caption{Size of Search Space for Optimizing $2^n$-posequences}
    \label{num_poseq}
    \begin{tabular}{c c}
    \toprule
    Polar code size $2^n$ & Number of all possible $2^n$-posequences \\
    \midrule
    2    &   $1$ \\
    4    &   $2~(=2^1)$\\
    8    &   $48~(=2^4 \times 3^1)$\\
    16    &  $1,680,384~(=2^{10} \times 3^1 \times 547)$\\
    \bottomrule
    \end{tabular}
    \end{table}
\else
    {\renewcommand{\arraystretch}{0.8}
    \begin{table}[t]
    \centering
    \vspace{-0.3cm}
    \caption{Size of Search Space for Optimizing $2^n$-posequences}
    \vspace{-0.3cm}
    \label{num_poseq}
    \begin{tabular}{c c}
    \toprule
    Polar code size $2^n$ & Number of all $2^n$-posequences \\
    \midrule
    2    &   $1$ \\
    4    &   $2~(=2^1)$\\
    8    &   $48~(=2^4 \times 3^1)$\\
    16    &  $1,680,384~(=2^{10} \times 3^1 \times 547)$\\
    \bottomrule
    \vspace{-1.0cm}
    \end{tabular}
    \end{table}
    }
\fi

Table~\ref{num_poseq} shows the number of $2^n$-posequences obtained by an exhaustive computer search.
Even in the case that $2^n=16$, the search space for optimizing a rate matching bit pattern is very large, approximately $1.68\times 10^{6}$.
As $2^n$ increases, the size of the search space becomes prohibitively large.
Therefore, it is an interesting problem in practical polar code construction to efficiently optimize a unified rate matching bit pattern in a reduced search space.

\subsection{Unified Circular-Buffer Rate Matching}

Based on the observations given in Table~\ref{rm_constraint}, we propose a practical unified CB-RM scheme, in which both a fixed bit pattern and a puncturing bit pattern at the encoder output are aligned in identical order.
Fig.~\ref{fig:unified_RM} describes a polar coding chain with unified CB-RM.
Given code parameters $N$, $M$ and $R=K/M$,
a rate-matching technique to be applied is determined in advance before encoding.
If $M>N$, then repetition is applied.
Otherwise, shortening is usually employed for high code rates, while puncturing is configured for low code rates, as shown in the NR polar coding chain.
Then, split channel allocation is carried out, depending on the determined rate-matching technique, and the linear transformation is finally performed.

\begin{figure}[t]
	\centering
    \if\doccolumn1
    \vspace{-0.3cm}
    \fi
    \includegraphics[width=8.85cm]{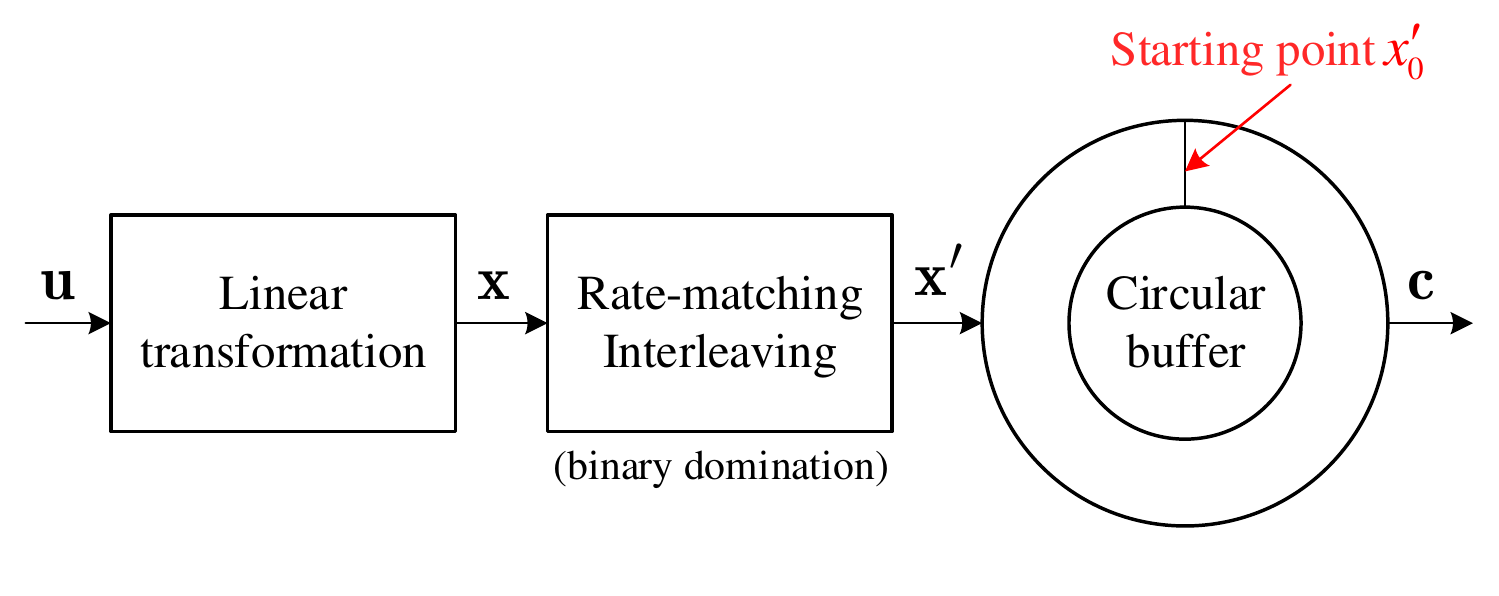}
    \if\doccolumn1
    \vspace{-0.5cm}
    \fi
	\caption{Block diagram of the proposed circular-buffer rate-matching scheme.
    A rate-matching interleaver is designed so that the interleaving pattern complies with binary domination.
    The starting point of the circular buffer is always set to zero for all rate-matching techniques: puncturing, shortening, and repetition.
}
	\label{fig:unified_RM}
    \if\doccolumn1
    \vspace{-0.5cm}
    \fi
\end{figure}

In the proposed CB-RM scheme, a resulting encoding output sequence $\mathbf{x}$ is interleaved in a predetermined order.
Binary domination is the only constraint that we have in the design of a rate-matching interleaver.
Let $\mathbf{x}^\prime=(x^\prime_0,\ldots,x^\prime_{N-1})$ be the output sequence of the rate-matching interleaver corresponding to $\mathbf{x}$.
After interleaving, $\mathbf{x}^\prime$ is stored in a circular buffer, and a desired codeword $\mathbf{c}$ is obtained by extracting $M$ bits in order from the buffer.
The starting point for bit selection from the buffer always indicates $x^\prime_0$, regardless of the employed rate-matching technique.
When either puncturing or shortening is configured,
the bits $x^{\prime}_0,\ldots,x^{\prime}_{M-1}$ are transmitted, while the bits $x^\prime_M,\ldots,x^\prime_{N-1}$ are discarded.
For repetition, $M-N$ bits are additionally selected in the circular order so that they are repeated.

\begin{figure*}[t]
    \centering
    \if\doccolumn1
    \vspace{-0.3cm}
    \fi
  \subfloat[Puncturing four bits from the tail of the buffer]{
       \includegraphics[width=0.45\linewidth]{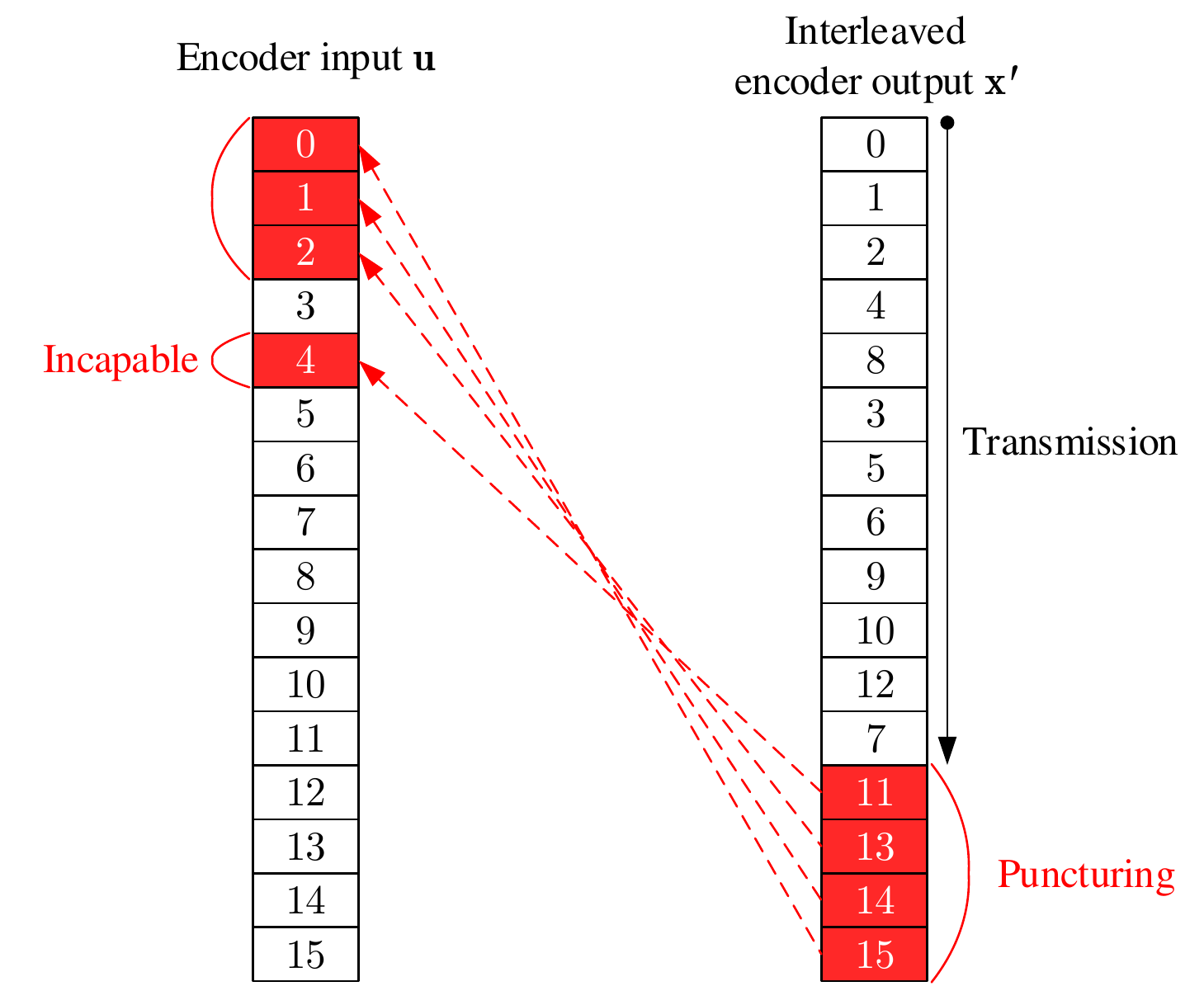}}
    \label{URM_punct}\hfill
  \subfloat[Shortening seven bits from the tail of the buffer]{
       \includegraphics[width=0.45\linewidth]{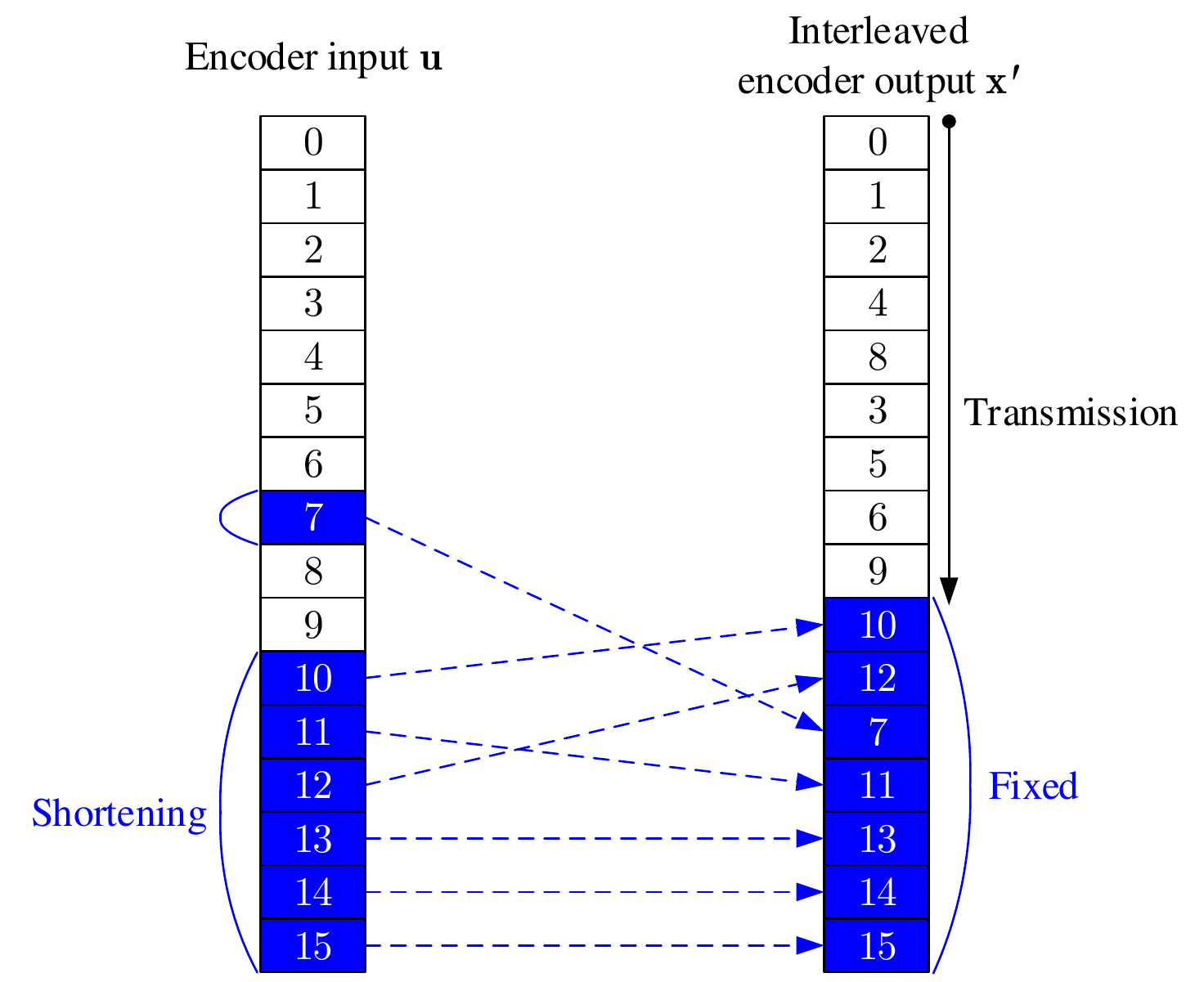}}
    \label{URM_short}\\
  \caption{Example of unified rate matching and corresponding split channel allocation.
  Encoder output bit sequence $\mathbf{x}$ is interleaved by a predetermined pattern $(0,1,2,4,8,3,5,6,9,10,12,7,11,13,14,15)$,
  and bit extraction starts from $x^\prime_0$ for all rate-matching techniques: puncturing, shortening, and repetition.
  In 4-bit puncturing, $x_{15}, x_{14}, x_{13}$, and $x_{11}$ are punctured, and thus, $u_{0}, u_{1}, u_{2}$, and $u_{4}$ are made incapable by a reverse puncturing bit pattern.
  In 7-bit shortening, $u_{15}, u_{14}, u_{13}, u_{11}, u_{7}, u_{12}$, and $u_{10}$ are shortened in order to make $x_{15}, x_{14}, x_{13}, x_{11}, x_{7}, x_{12}$, and $x_{10}$, which are to be not transmitted, have deterministic values.
  }
  \label{fig:URM}
  \if\doccolumn1
  \fi
\end{figure*}

The proposed CB-RM scheme is a \textit{unified} rate-matching scheme in the sense that the rate-matching interleaving and the circular buffer management are always the same for puncturing, shortening, and repetition.
The only thing that we need to care about is the split channel allocation.
Fig.~\ref{fig:URM} describes a simple example of the split channel allocation for puncturing and shortening when the proposed CB-RM scheme is applied.
In this example, the unified rate-matching bit pattern for $N=16$ is given as $(0,1,2,4,8,3,5,6,9,10,12,7,11,13,14,15)$, which is a $16$-posequence.
Assuming that puncturing is configured for $M=12$,
the bits $x_{15}$, $x_{14}$, $x_{13}$, and $x_{11}$ are punctured by the proposed CB-RM scheme.
Since the bits $u_0$, $u_1$, $u_2$, and $u_4$ are then made incapable, they are excluded from the allocation of information bits.
On the other hand, when shortening is applied for $M=9$,
the bits $u_7$, $u_{10}$, $u_{11}$, $u_{12}$, $u_{13}$, $u_{14}$ and $u_{15}$ are shortened in order to fix the values of encoder output bits with the same indices, \textit{i.e.}, $x_{7}$, $x_{10}$, $x_{11}$, $x_{12}$, $x_{13}$, $x_{14}$, and $x_{15}$.

\if\doccolumn2
\begin{figure}[h]
	\centering
    \if\doccolumn1
    \vspace{-0.3cm}
    \fi
    \includegraphics[width=8.5cm]{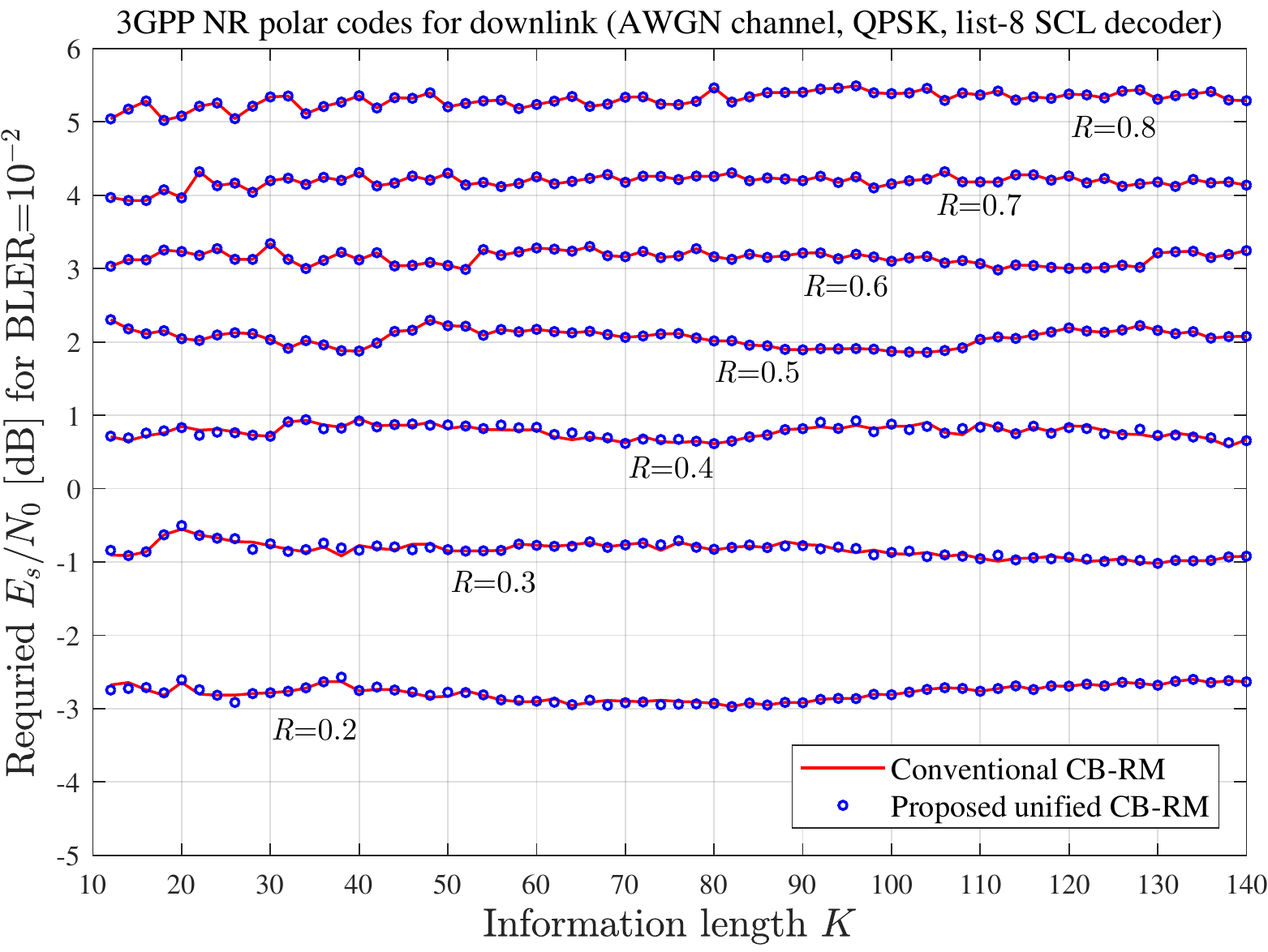}
    \if\doccolumn1
    \vspace{-0.5cm}
    \fi
	\caption{Performance of 3GPP NR downlink polar codes with conventional CB-RM and proposed CB-RM.
}
	\label{fig:perf_eval}
    \if\doccolumn1
    \vspace{-0.5cm}
    \fi
\end{figure}
\else
\begin{figure}[t]
	\centering
    \if\doccolumn1
    \vspace{-0.3cm}
    \fi
    \includegraphics[width=8.5cm]{Fig_Eval.pdf}
    \if\doccolumn1
    \vspace{-0.5cm}
    \fi
	\caption{Performance of 3GPP NR downlink polar codes with conventional CB-RM and proposed CB-RM.
}
	\label{fig:perf_eval}
    \if\doccolumn1
    \vspace{-0.5cm}
    \fi
\end{figure}
\fi
In most of previous studies, the interleaving pattern and the order for puncturing are different from those for shortening.
For a practical example,
the rate-matching scheme of the NR polar coding system \cite{TS38212} uses a unified interleaver based on subblock-wise permutation,
but extracts bits from the buffer that vary with the employed rate-matching technique.
In fact, the starting point of the circular buffer is set to $x^\prime_{N-J}$ for puncturing,
whereas it points to $x^\prime_0$ for shortening and repetition.
On the other hand, the proposed CB-RM scheme can be more simply implemented for a practical coding chain.
The starting point of the NR circular-buffer in the proposed scheme can always be set to $x^\prime_0$  without any further changes, regardless of the rate-matching technique employed.
As a result, the NR coding chain with the proposed CB-RM scheme can be simpler and more efficient, while the average rate-matching performance remains intact under the same code configurations and evaluation conditions, as shown in Fig.~\ref{fig:perf_eval}.

\section{Conclusion and Future Works}\label{sec:conclude}

Due to the fixed structure of encoding and the behaviour of SC-based decoding for polar codes,
puncturing and shortening should be carefully performed in a certain order.
We showed that binary domination completely determines the incapable and shortening bit patterns for polar codes.
Based on this observation, we proposed a unified rate-matching scheme to support simple and efficient rate matching for polar codes.

Several interesting problems on rate matching for polar codes remain unsolved.
One of the most important issues is to optimize the incapable and shortening bit patterns.
In order to do this, we need to find an efficient way of reducing the search space for rate-matching patterns.
As shown in Table~\ref{num_poseq}, the number of all the possible $2^n$-posequences is too large even in the case of $N=16$, so it is impractical to find the best one among them.
The search space may be reduced by properly adding some constrains such as the implementation complexity for practical applications.
However, this reduction should be made without any significant loss of the optimized performance.


\ifCLASSOPTIONcaptionsoff
  \newpage
\fi

\appendix

\subsection{Proof of Lemma~\ref{lem:incapable02}}\label{proof:lem:incapable02}

Consider two variable nodes $v_{i}^{(t)}$ and $v_{i+2^t}^{(t)}$ for $i\in 2\ell\cdot 2^t+\mathbb{Z}_{2^t}$.
In the binary representation, $i_t=0$ and $\left(i+2^t\right)_{t}=1$.
Hence, the LLRs of these two variable nodes, $\alpha_i^{(t)}$ and $\alpha_{i+2^t}^{(t)}$, are calculated by \eqref{eq:f_func} and
\eqref{eq:g_func} as
$\alpha_i^{(t)}=f\big{(}\alpha_i^{(t+1)}, \alpha_{i+2^t}^{(t+1)}\big{)}$
and
$\alpha_{i+2^t}^{(t)}=g\big{(}\alpha_i^{(t+1)}, \alpha_{i+2^t}^{(t+1)}, \beta_i^{(t)}\big{)}$,
respectively.
It is easily shown that $\alpha_{i}^{(t)}=0$ if $\alpha_{i+2^t}^{(t)}=0$ by checking all the four cases:
$\big{(}\alpha_i^{(t+1)}=0,~\alpha_{i+2^t}^{(t+1)}=0\big{)}$,
$\big{(}\big{|}\alpha_i^{(t+1)}\big{|}>0,~\alpha_{i+2^t}^{(t+1)}=0\big{)}$,
$\big{(}\alpha_i^{(t+1)}=0,~\big{|}\alpha_{i+2^t}^{(t+1)}\big{|}>0\big{)}$,
and $\big{(}\big{|}\alpha_i^{(t+1)}\big{|}>0,~\big{|}\alpha_{i+2^t}^{(t+1)}\big{|}>0\big{)}$.
That is, $i\in \mathcal{P}_{2\ell\cdot 2^t +\mathbb{Z}_{2^t}}^{(t)}$ if $i+2^t \in \mathcal{P}_{(2\ell+1)\cdot 2^t + \mathbb{Z}_{2^t}}$.
Hence, the statement holds.

\subsection{Proof of Lemma~\ref{lem:incapable03}}\label{proof:lem:incapable03}

In SC decoding, we have $\alpha_i^{(t)}=f\big{(} \alpha_i^{(t+1)}, \alpha_i^{(t+1)}\big{)}$ since $i_t=0$ for $i\in\mathbb{Z}_{2^t}$.
This implies that any one of $\alpha_i^{(t+1)}$ and $\alpha_{i+2^t}^{(t+1)}$ needs to be zero in order to make $\alpha_i^{(t)}=0$.
Note that both $i$ and $i+2^t$ are in $\mathbb{Z}_{2^{t+1}}$.
Continuing this procedure recursively from stage $t+1$ to stage $n$, only a single variable node with LLR zero at stage $n$ is sufficient to make $\alpha_i^{(t)}=0$.

\subsection{Proof of Lemma~\ref{lem:puncturing01}}\label{proof:lem:puncturing01}

Clearly, $\psi_j^{(0)}=\{\{j\}\}$ by definition,
because an incapable bit is defined as an encoder input bit at decoding stage 0 whose LLR value is zero.
Starting from $\psi_j^{(0)}$,
the family $\psi_j^{(t+1)}$ is directly determined from $\psi_j^{(t)}$, due to the stage-by-stage SC decoding operation.
At decoding stage $t$, for a variable node $v_j^{(t)}$ with $j_t=0$,
we have $\alpha_j^{(t)}=0$ if any one of $\alpha_{j}^{(t+1)}$ and $\alpha_{j+2^t}^{(t+1)}$ is zero.
Therefore, for any $\mathcal{Q}\in\psi_j^{(t)}$ and $\mathcal{B}\in P_{|\mathcal{Q}|}(\{0,2^t\})$, we have $\mathcal{Q}\oplus\mathcal{B} \in \psi_j^{(t+1)}$ if $j_t=0$.
On the other hand, for a variable node $v_j^{(t)}$ with $j_t=1$,
we have $\alpha_j^{(t)}=0$ if both $\alpha_{j}^{(t+1)}$ and $\alpha_{j-2^t}^{(t+1)}$ are zero.
Thus, for any $\mathcal{Q}\in\psi_j^{(t)}$, we have $\mathcal{Q}\cup (-2^t+\mathcal{Q})\in\psi_j^{(t+1)}$ if $j_t=1$.
Taking these two relations between $\psi_j^{(t)}$ and $\psi_j^{(t+1)}$ together, we obtain the recursion formula in \eqref{eq:psi_op}.
Finally, $\psi_j=\psi_j^{(n)}$ is obtained by performing this recursion for $t=[0:n-1]$ with $\psi_j^{(0)}=\{\{j\}\}$.

\subsection{Proof of Theorem~\ref{lem:puncturing02}}\label{proof:lem:puncturing02}

We first show that $\mathcal{D}_j\in\psi_j$.
Let $\mathcal{Q}_{j,0}^{(t)}\in\psi_j^{(t)}$ denote the element set obtained by always setting $\mathcal{B}=\{0,\ldots,0\}$ in the recursive calculation for $j_\ell=0$ in \eqref{eq:psi_op}, Lemma~\ref{lem:puncturing01} for any $\ell\in[0:t]$.
Starting from $\mathcal{Q}_{j,0}^{(0)}=\{j\}$, $\mathcal{Q}_{j,0}^{(n)}$ is then obtained by recursively calculating
\begin{equation}\label{eq:Q_op}
\begin{split}
    \mathcal{Q}_{j,0}^{(t+1)}&=
    \begin{cases}
    \mathcal{Q}_{j,0}^{(t)}, & \text{if } j_t = 0 \\
    \mathcal{Q}_{j,0}^{(t)}\cup \left(-2^t + \mathcal{Q}_{j,0}^{(t)}\right), & \text{if } j_t = 1
    \end{cases}
\end{split}
\end{equation}
for $t\in[0:n-1]$.
This leads to
\begin{equation}\label{eq:Q_eq_D}\nonumber
\begin{split}
    \mathcal{Q}_{j,0}^{(n)}&=\left\{ j - \sum_{t=0}^{n-1} a_t j_t 2^t \mathrel{\Big|} a=\sum_{l=0}^{n-1}a_l 2^l \in \mathbb{Z}_{2^n}\right\} \\
    &= \left\{k\in \mathbb{Z}_{2^n} \mid k \preceq j\right\} \\
    &= \mathcal{D}_j
\end{split}
\end{equation}
where the same elements that appear multiple times are taken into account once.

Now we prove that $\bar{\mathcal{D}}_j\in\psi_j$.
Let $\mathcal{Q}_{j,1}^{(t)}\in\psi_j^{(t)}$ be the element set obtained by setting $\mathcal{B}=\{2^t,\ldots,2^t\}$ in the recursive calculation for $j_\ell=0$ in \eqref{eq:psi_op} for any $\ell\in[0:t]$.
Starting from $\mathcal{Q}_{j,1}^{(0)}=\{j\}$, we obtain $\mathcal{Q}_{j,1}^{(n)}$ by recursively performing
\begin{equation}\label{eq:rQ_op}
\begin{split}
    \mathcal{Q}_{j,1}^{(t+1)}&=
    \begin{cases}
    2^t+\mathcal{Q}_{j,1}^{(t)}, & \text{if } j_t = 0 \\
    \mathcal{Q}_{j,1}^{(t)}\cup \left(-2^t + \mathcal{Q}_{j,1}^{(t)}\right), & \text{if } j_t = 1
    \end{cases}
\end{split}
\end{equation}
for $t\in[0:n-1]$.
The difference between \eqref{eq:Q_op} and \eqref{eq:rQ_op} is whether or not $2^t$ is added at stage $t$, when $j_t=0$.
Therefore, we have
\begin{equation}\label{eq:rQ_and_Q}\nonumber
    \mathcal{Q}_{j,1}^{(n)} = \sum_{t=0}^{n-1} \bar{j}_t 2^t + \mathcal{Q}_{j,0}^{(n)} = \bar{j}+ \mathcal{Q}_{j,0}^{(n)}.
\end{equation}
From \eqref{eq:rQ_and_Q}, we get
\begin{equation}\label{eq:rQ_eq_D}\nonumber
\begin{split}
    \mathcal{Q}_{j,1}^{(n)}&=\bar{j}+\left\{ j - \sum_{t=0}^{n-1} a_t j_t 2^t \mathrel{\Big|} a=\sum_{l=0}^{n-1}a_l 2^l \in \mathbb{Z}_{2^n}\right\} \\
    &= (2^n-1)-\mathcal{D}_j = \bar{\mathcal{D}}_j.
\end{split}
\end{equation}
Here, the second equality comes from the fact that $j+\bar{j} = 2^n-1$.

\subsection{Proof of Theorem~\ref{thm:eq_punct}}\label{proof:thm:eq_punct}

For two integers $i,j\in\mathbb{Z}_{2^n}$ that are not comparable with respect to binary domination,
consider $\psi_{\mathcal{D}_i}$, $\psi_{\mathcal{D}_j}$, and $\psi_{\mathcal{D}_i \cup \mathcal{D}_j}$.
Clearly, $\psi_{\mathcal{D}_i} = \psi_i$ by the transitivity of $\prec$, where $\psi_j$ is given in Lemma~\ref{lem:puncturing01}.
Accordingly, it suffices to consider making both $u_i$ and $u_j$ incapable in order to obtain $\psi_{\mathcal{D}_i \cup \mathcal{D}_j}$.
Since $i$ and $j$ are not comparable, $\psi_j$ has nothing to do with making $u_i$ incapable, and vice versa.
For $\mathcal{Q}_i\in\psi_i$ and $\mathcal{Q}_j\in\psi_j$,
if $\mathcal{Q}_i \cup \mathcal{Q}_j$ has cardinality $\left|\mathcal{D}_i \cup \mathcal{D}_j\right|$, then it becomes a puncturing bit pattern to make both $u_i$ and $u_j$ incapable by Lemma~\ref{lem:incapable01}.
Therefore, we have
\begin{equation}\label{eq:punct_02}
    \psi_{\mathcal{D}_i \cup \mathcal{D}_j}
    = \left\{ \mathcal{A} \in \psi_i \vee \psi_j \mid \left|\mathcal{A} \right|=\left|\mathcal{D}_i \cup \mathcal{D}_j\right|  \right\}.
\end{equation}
Recall that if $j\in\mathcal{U}_p$, then $\mathcal{D}_j \subseteq \mathcal{U}_p$ by Theorem~\ref{thm:incapable01}.
By the transitivity of $\prec$, we have
\begin{equation}\label{eq:punct_03}
    \mathcal{U}_p = \bigcup_{j\in\mathcal{U}_p} \mathcal{D}_j = \bigcup_{j\in\breve{\mathcal{U}}_p} \mathcal{D}_j.
\end{equation}
Applying \eqref{eq:punct_02} to \eqref{eq:punct_03}, we finally have
\begin{equation}\nonumber
    \psi_{\mathcal{U}_p}
    = \psi_{ \bigcup_{j\in\breve{\mathcal{U}}_p} \mathcal{D}_j }
    = \left\{\mathcal{A}\in \bigvee_{j\in\breve{\mathcal{U}}_p} \psi_{j} \mathrel{\Big|}  \left|\mathcal{A}\right| = \left| \mathcal{U}_p \right| \right\}.
\end{equation}

\subsection{Proof of Theorem~\ref{thm:id_punct}}\label{proof:thm:id_punct}

Let $\mathcal{P}^{(t)}$ denote the index set of variable nodes in $\mathcal{V}^{(t)}$ at stage $t$,
whose LLRs are equal to zero.
By feeding the intrinsic LLRs into $\mathcal{V}^{(n)}$, we get $\mathcal{P}^{(n)} = \mathcal{X}_p$.
Consider $j\in\mathbb{Z}_{2^n}$ such that $\alpha_j^{(n)} =0$, that is, $j\in\mathcal{P}^{(n)}$.
By the assumption on $\mathcal{X}_p$, we have
\begin{equation}\label{eq:thm:id_punct_1}\nonumber
\alpha_i^{(n)} = 0,~\forall i \in \hat{\mathcal{D}}_j \subset \mathcal{P}^{(n)}.
\end{equation}
At stage $n-1$ as the first decoding stage,
if $j_{n-1}=0$, we have
\begin{equation}\label{eq:thm:id_punct_2}
    \alpha_j^{(n-1)} = f \left( \alpha_j^{(n)}=0,~\alpha_{j+2^{n-1}}^{(n)} \right) = 0,
\end{equation}
regardless of the value of $\alpha_{j+2^{n-1}}^{(n)}$.
If $j_{n-1}=1$, we have $\alpha_{j-2^{n-1}}^{(n)}=0$ by \eqref{eq:thm:id_punct_1}, so
\begin{equation}\label{eq:thm:id_punct_3}
    \alpha_j^{(n-1)} = g \left(\alpha_{j-2^{n-1}}^{(n)},~\alpha_j^{(n)}=0,~ \beta_{j-2^{n-1}}^{(n-1)}\right) = 0.
\end{equation}
Hence, $\alpha_j^{(n)}=0$ leads to $\alpha_j^{(n-1)}=0$ for any $j\in\mathcal{P}^{(n)}$.
Since it was clearly proven in \cite{Shin2013,Zhang2014} that the numbers of zero LLRs at stages $t$ and $t+1$ are exactly the same for $t\in[0:n-1]$, we have $\mathcal{P}^{(n-1)}=\mathcal{P}^{(n)}$.

Assume that $\mathcal{P}^{(t+1)}=\mathcal{P}^{(n)}$ for $t\in[0:n-2]$.
Then, for any $j\in\mathcal{P}^{(t+1)}$, we have
\begin{equation}\label{eq:thm:id_punct_4}\nonumber
    \alpha_i^{(t+1)} = 0,~\forall i \in \hat{\mathcal{D}}_j \subset \mathcal{P}^{(t+1)}.
\end{equation}
In the same way as in \eqref{eq:thm:id_punct_2} and \eqref{eq:thm:id_punct_3}, we have $\alpha_j^{(t)}=0$.
That is, $\mathcal{P}^{(t)}=\mathcal{P}^{(t+1)}$ for $t\in[0:n-1]$.
Finally, we have $\mathcal{P}^{(0)}=\mathcal{P}^{(n)}=\mathcal{X}_p$ by mathematical induction, and $\mathcal{P}^{(0)}=\mathcal{U}_p$ by definition.

\subsection{Proof of Theorem~\ref{thm:rev_punct}}\label{proof:thm:rev_punct}

Let $\mathcal{P}^{(t)}$ denote the index set of variable nodes with zero LLR in $\mathcal{V}^{(t)}$ at stage $t$ such that $\mathcal{P}^{(n)}=\mathcal{X}_p$.
For an integer $i$ with binary representation $(i_{n-1}i_{n-2}\cdots i_0)$ and a subset $\mathcal{T}\subseteq \mathbb{Z}_{n}$,
we write $\bar{i}_\mathcal{T}$ to denote the partial bitwise complement of $i$ with respect to $\mathcal{T}$, which is obtained by inverting $i_t$ for $t\in\mathcal{T}$.
For example, $\bar{6}_{\{0,2\}} = (\bar{1}1\bar{0})= (011) = 3$.

Consider $j\in\mathbb{Z}_{2^n}$ such that $\alpha_j^{(n)}=0$, that is, $j\in\mathcal{P}^{(n)}$.
Recall that $\hat{\mathcal{G}}_j$ is the dominating integer set of $j$ defined in Definition~\ref{def:dominating_set}.
By the assumption on $\mathcal{X}_p$ with binary domination,
\begin{equation}\label{eq:thm:rev_punct_1}
\alpha_i^{(n)}=0,~\forall i \in\hat{\mathcal{G}}_j \subset \mathcal{P}^{(n)}.
\end{equation}
At stage $n-1$ as the first decoding stage, if $j_{n-1}=0$, we have $\alpha_{j+2^{n-1}}^{(n)}=0$ by \eqref{eq:thm:rev_punct_1}, so
\begin{equation}\label{eq:thm:rev_punct_2}
    \alpha_{j+2^{n-1}}^{(n-1)}=g\left(\alpha_j^{(n)}=0,~\alpha_{j+2^{n-1}}^{(n)},~\beta_j^{(n-1)}\right)=0.
\end{equation}
If $j_{n-1}=1$, we have
\begin{equation}\label{eq:thm:rev_punct_3}
    \alpha_{j-2^{n-1}}^{(n-1)}=f\left( \alpha_{j-2^{n-1}}^{(n)},~\alpha_j^{(n)}=0\right)=0,
\end{equation}
regardless of the value of $\alpha_{j-2^{n-1}}^{(n)}$.
That is, $\alpha_j^{(n)}=0$ leads to $\alpha_{\bar{j}_{\{n-1\}}}^{(n-1)}=0$ for any $j \in \mathcal{P}^{(n)}$.
Thus, the index set of variable nodes with zero LLRs after decoding stage $n-1$ is given by
\begin{equation} \label{eq:thm:rev_punct_4}\nonumber
    \mathcal{P}^{(n-1)}=\left\{\bar{p}_{\{n-1\}} \mid p\in\mathcal{P}^{(n)} \right\}.
\end{equation}

Now, assume that
$\mathcal{P}^{(t+1)}=\left\{ \bar{p}_{\{t+1,\ldots,n-1\}} \mid p\in\mathcal{P}^{(n)} \right\}$
for $t\in[0:n-2]$.
Then, for any $j\in\mathcal{P}^{(t+1)}$, we have
\begin{equation}\label{eq:thm:rev_punct_6}\nonumber
    \left\{ k\in \mathbb{Z}_{2^n} \mid \left(j \prec k\right) \wedge \left( k_r=j_r,\forall r \in [t+1:n-1] \right) \right\}\subset \mathcal{P}^{(t+1)},
\end{equation}
that is,
\if\doccolumn1
\begin{equation}\label{eq:thm:rev_punct_7}\nonumber
    \left\{ k\in \left[\left\lfloor \frac{j}{2^{t+1}} \right\rfloor 2^{t+1}:\left(\left\lfloor \frac{j}{2^{t+1}} \right\rfloor +1\right) 2^{t+1} - 1\right] \mathrel{\Big|} j \prec k  \right\} \subset \mathcal{P}^{(t+1)}.
\end{equation}
\else
\begin{equation}\nonumber
\begin{split}
    \left\{ k\in \left[\left\lfloor \frac{j}{2^{t+1}} \right\rfloor 2^{t+1}:\left(\left\lfloor \frac{j}{2^{t+1}} \right\rfloor +1\right) 2^{t+1} - 1\right] \mathrel{\Big|} j \prec k  \right\}
    \subset \mathcal{P}^{(t+1)}.
\end{split}
\end{equation}
\fi
The index set of variable nodes with LLR zero corresponding to $j$ at decoding stage $t+1$ is also constrained by binary domination within the subset of $2^{t+1}$ variable nodes, $\big{\{}v_k^{(t)}\mid k\in \left[\left\lfloor \tfrac{j}{2^{t+1}} \right\rfloor 2^{t+1}:\left(\left\lfloor \tfrac{j}{2^{t+1}} \right\rfloor +1\right) 2^{t+1} - 1\right]\big{\}}$.
In the same way as in \eqref{eq:thm:rev_punct_2} and \eqref{eq:thm:rev_punct_3},
$\alpha_j^{(t+1)}=0$ results in $\alpha_{\bar{j}_{\{t\}}}^{(t)}=0$ for any $j\in\mathcal{P}^{(t+1)}$,
so we have
\begin{equation}\label{eq:thm:rev_punct_10}\nonumber
\mathcal{P}^{(t)} = \left\{ \bar{p}_{\{t\}} \mid p\in \mathcal{P}^{(t+1)} \right\} = \left\{ \bar{p}_{\{t,t+1,\ldots,n-1\}} \mid p\in \mathcal{P}^{(n)} \right\}
\end{equation}
for $t \in [0:n-2]$. By mathematical induction, we have
\begin{equation}\nonumber
    \mathcal{P}^{(0)} = \left\{ \bar{p}_{\{0,1,\ldots,n-1\}} \mid p\in \mathcal{P}^{(n)} \right\}
    = \bar{\mathcal{P}}^{(n)}.
\end{equation}
Hence, we have $\mathcal{U}_p=\mathcal{P}^{(0)}=\bar{\mathcal{X}}_p$.

\end{document}